\newtheorem{theorem}{Theorem}
\newtheorem{definition}{Definition}
\newtheorem{claim}{Claim}
\newtheorem{lemma}{Lemma}
\newtheorem{conjecture}{Conjecture}
\newtheorem{corollary}{Corollary}
\newtheorem{fact}{Fact}
 \newcommand{\qedsymb}{\hfill{\rule{2mm}{2mm}}}  
 \newenvironment{proof}[1][]{\begin{trivlist}  
 \item[\hspace{\labelsep}{\bf\noindent Proof#1:\/}] 
 }{\qedsymb\end{trivlist}}
\newcommand{\be}{\begin{eqnarray}}
\newcommand{\ee}{\end{eqnarray}}
\newcommand\floor[1]{{\lfloor #1 \rfloor}}
\newcommand\ceil[1]{{\lceil #1 \rceil}}
\newcommand\ket[1]{{ |{#1} \rangle }}
\newcommand{\ignore}[1]{}
\newcommand{\eps}{\varepsilon}
\renewcommand{\epsilon}{\varepsilon}
\title{Commuting Local Hamiltonians on Expanders, Locally Testable 
Quantum codes, and the qPCP conjecture} 
\begin{document}

\author{Dorit Aharonov\thanks{School of Computer Science and
Engineering, The Hebrew University,
Jerusalem, Israel}
 \and Lior Eldar\thanks{School of Computer Science and
Engineering, The Hebrew University,
Jerusalem, Israel.}}

\date{\today}

\maketitle

%\begin{abstract}
%\end{abstract} 
%\thispagestyle{empty}
%\newpage
%\setcounter{page}{1}

\begin{abstract}
Understanding commuting local Hamiltonians (CLHs) 
is at the heart of many questions in quantum computational complexity 
and quantum physics: quantum error correcting codes, quantum NP, 
the PCP conjecture, topological order and more.

We first consider the complexity of the problem of approximating 
the ground value of a $k$-local CLH; 
The complexity of the exact case 
remains unclear despite significant effort \cite{Bra, Aha, Sch, Has}, and  
the approximation case is tightly related to the 
major conjecture of quantum PCP. 
We show that if the underlying interacting 
graph of the CLH instance has small-set expansion which 
is $\epsilon$ close to optimal, then 
the approximation problem to within a factor $O(\epsilon)$ lies in NP. 
Thus, the better the expander, the easier it is to approximate.  
This puts a bound on the complexity of 
CLH on small-set expanders, and indicates that known 
PCP constructions such as Dinur's \cite{Din}, 
whose output instances are small-set expanders, 
cannot be used to prove a PCP theorem for CLH.

We next proceed to study the much related topic of locally testable  
quantum codes based on local stabilizers; these are ground states 
of CLHs. We show that the better the small-set expansion of the 
interaction graph underlying the stabilizer becomes, the 
{\it less} robust the code becomes.
This phenomenon seems to be inherently quantum. 
We combine the above result with an upper bound we prove on the robustness 
of stabilizer codes with {\it bad} expansion (which is a classical phenomenon)  
to derive a bound on the robustness of quantum LTCs with an {\it arbitrary} 
underlying interaction graphs. 
We derive non-trivial bounds on the robustness of quantum LTC codes; 
To the best of our knowledge the 
quantum LTCs were not studied before. 

ֿ\ignore{Finally, we initiate the study of quantum PCPs of proximity. 
In the classical theory of PCPs, PCPPs play a crucial role, 
and most known PCPs use PCPs of proximity (PCPP) \cite{BGHSV}. 
Moreover, there is a standard construction of 
an LTC based on a given PCPP \cite{BGHSV}, 
in which the robustness of the code is tightly related to the soundness 
of the PCPP; We show that the classical construction of LTC from PCPPs 
carries over to the quantum setting, 
where the type of constraints in the PCPP is translated to the  
type of constraints in the resulting LTC. 
Thus, bounds on LTCs translate to limitations on PCPPs of similar constraint 
types. Our bounds on stabilizer LTCs provide bounds on very limited 
qPCPPs; However the connection motivates further study of LTC with 
more general constraint that might shed light on the 
power of quantum PCPPs.}
Finally, we initiate the study of the quantum 
analogue of PCPs of proximity \cite{BGHSV}
which play a central 
role in classical PCPs; it is known that PCPPs induce LTCs with related 
parameters \cite{BGHSV}, and we show this holds also in the 
quantum world; this connection between LTCs and PCPs
further motivates the study of LTCs and their possible parameters. 
We defer the exposition of this part of our work to the next version of the paper. 

Much is left for further research, and in particular improving the range or applicability 
and generality of our results. 
Nevertheless we believe that the results highlight interesting limitations 
on quantum PCP and that they point at possible routes that might 
shed light on the stubborn PCP conjecture. 
\end{abstract}

\section{Introduction} 
This paper is concerned with commuting local Hamiltonians (CLHs).  
A $k$-local Hamiltonian is a Hermitian matrix $H = \sum_i H_i$ 
operating on the Hilbert space 
of $n$ $d$-dimensional particles, where each term $H_i$ 
(sometimes called constraint)
acts non-trivially 
on at most $k$ particles. 
For the Hamiltonian to be a $k$-local {\it commuting} Hamiltonian 
we require in addition that 
every two terms $H_i, H_j$ commute; W.L.O.G we assume in this case that 
$H_i$'s are projections. 
We denote the family of such Hamiltonians by $CLH(k,d)$.  
Commutative local Hamiltonians have been the subject of very intense 
research for many years in physics;  
As it turns out, CLHs are related to some of 
the deepest questions 
in quantum computational complexity as well. 
Let us first provide the background and context for the questions we 
ask here regarding CLHs. 

\subsection{Background and Context} 

\noindent{\bf The computational complexity of the Local Hamiltonian problem} 
Kitaev first defined in 1998 
the local Hamiltonian (LH) problem, where 
one is given 
a local Hamiltonian on $n$ 
qudits and two real numbers $a,b$, with $b-a\ge 1/poly(n)$,  
and is asked 
whether the lowest eigenvalue is at most $a$ or larger than $b$; 
He showed \cite{Kit} that this problem
is complete for the class $QMA$, the quantum analogue of  
NP, providing a quantum counterpart of 
the celebrated Cook-Levin theorem.

In 2003, Bravyi and Vyalyi \cite{Bra} considered 
the similar decision problem for the commuting case; 
The commuting restriction might seem at first sight to devoid the 
LH problem of its quantum nature, since all terms can be diagonalized together, 
and moreover, one usually attributes the interesting features of quantum 
mechanics to its non-commutative nature, 
cf. the Heizenberg uncertainty principle; 
following this one might suspect that the commuting 
version trivially belongs to NP. 
This intuition however is misleading, since  
ground states of CLHs can exhibit 
intricate entanglement phenomenon, such as the topological order
exhibited by the Toric code \cite{Kit2}. Using clever 
applications of representations of $C^*$-algebras
Bravyi and Vyalyi \cite{Bra} 
gave a proof that the two-local commuting case indeed lies within NP, 
and this was generalized later to more general Hamiltonian classes 
\cite{Aha, Has, Sch};  
However the complexity of the general CLH problem 
remains unresolved. 

The resolution of the complexity of the
CLH problem seems important on its own right, given the importance of 
commuting Hamiltonians in physics, its intriguing nature, its relation to 
multiparticle entanglement and 
its connection to topological order.  
But apart from these reasons, it 
is also related to the major open problem of whether 
a quantum analogue of the celebrated PCP (Probabilistically Checkable Proof) 
theorem holds; we will discuss this slightly later. 

{~}

\noindent{\bf CLH and quantum Error correcting codes} 
Arguably the most important class of quantum error correcting codes 
are the stabilizer codes (see \cite{Got}); A stabilizer code is defined 
by a set of commuting generators, each generator being
an element of the (generalized) Pauli group on 
$n$ $d$-dimensional particles, i.e. a tensor product
of $n$ Pauli operators. 
The code is defined to be the simultaneous 
$1$ eigenspace of all generators (therefore these generators are 
called {\it stabilizers}); 
if all stabilizers are $k$-local (namely, 
only $k$ out of the $n$ Pauli's are non-identity), this can 
easily be seen to be equivalent to a ground space of a $CLH(k,d)$ instance. 
Thus, CLHs and quantum stabilizer codes are intimately connected. 
A particularly beautiful example is that of the Toric code of 
Kitaev \cite{Kit2} mentioned before, due to the non-local entanglement
it exhibits;  
Various other stabilizer codes exhibiting topological order 
were defined (see e.g. quantum double models \cite{Kit4}
and many other stabilizer codes were constructed and studied 
(eg, \cite{Got, Steane1, Steane2, Calderbank, Smolin, Schlingemann,Zemor2}).
Many questions regarding stabilizer codes are of interest with much left 
to be understood:   
what are the possible parameters (distance, rate, etc.) of such codes,
and what are the tradeoffs between those parameters 
\cite{TerhalTradeoffs, Fetaya, Zemor3},  
how efficient can the description of the states in those codes be
\cite{Vidal}, what are the various properties 
of such codes with respect to perturbations \cite{Bra2, Bra3}
and thermal fluctuations \cite{Yoshida}, 
their behavior in the presence of random noise 
\cite{Preskill, Kovalev1}, all the above when constraints are 
restricted to be local, namely, the quantum analogue of LDPC codes 
(\cite{Zemor2, Zemor3,Kovalev1,Kovalev2} 
and references therein), 
and more. All those questions can be viewed as questions on CLHs, 
and their resolution is tightly related 
to our understanding of multiparticle entanglement, and of course 
to our understanding of the capabilities of quantum error correction. 

{~}

\noindent{\bf Quantum PCP} 
Both the above topics are related to what is now considered 
a major challenge in quantum complexity theory: 
the quantum PCP conjecture. 
The classical PCP theorem \cite{Arora, ALMSS, Din}
is arguably the most important 
discovery in classical theoretical computer science over the past two decades; 
it states that there is a polynomial time reduction that maps
a given Constraint Satisfaction Problem (CSP) instance 
(i.e., a collection of $k$-local constraints on $n$ 
Boolean or $d$-state variables) into another CSP instance, 
such that if the original instance is satisfiable, then so is 
its image, but if the original instance is not satisfiable, 
then no assignment satisfies more than some constant $c<1$ fraction 
of constraints in the image instance.
This implies that it is possible to check whether an assignment satisfies 
a given CSP or not, by reading only a constant number of random locations 
in the assignment (!).  
A major open problem is whether some quantum version of the PCP  
theorem holds \cite{Aha2, Aar}.    
A quantum PCP conjecture can be phrased as follows:
\begin{conjecture} {\bf Quantum PCP (qPCP)}
There exist constants $c>0,k$, and a (quantum) poly-time 
algorithm, which takes an instance of a $k$-local Hamiltonian $H$, 
to another instance of a $k$-local Hamiltonian $H'$, 
such that if $H$ is satisfiable (namely, has ground energy $0$), then so 
is $H'$, but if $H$ is unsatisfiable, the minimal eigenvalue of $H'$ is 
at least $c ||H'||$.
\end{conjecture}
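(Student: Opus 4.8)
The final displayed statement is a conjecture, not a theorem, so what follows is the line of attack I would pursue and the point at which I expect it to break down; I do not claim it goes through.

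The plan is to imitate Dinur's combinatorial proof of the classical PCP theorem \cite{Din}, transported from CSPs to Hamiltonians. First I would reduce a given $k$-local Hamiltonian $H$ to a normal form whose interaction hypergraph has bounded degree, using a degree-reduction gadget. Then I would iterate, $O(\log n)$ times, a single \emph{gap-amplification step} that multiplies the relative promise gap $(b-a)/\|H\|$ by a fixed constant while blowing up the number of qudits, the locality $k$, and the local dimension $d$ by only constant factors; after $O(\log n)$ rounds the gap is $\Omega(1)$ and the reduction is complete. Each amplification step would decompose, as in the classical case, into: (i) expanderizing the interaction graph; (ii) a \emph{powering} step that replaces the graph by its length-$t$ walk graph and introduces new local terms that, for a good state, must ``agree'' along each walk; and (iii) a composition / alphabet-reduction step that runs an inner verifier --- here a quantum PCP of proximity, in the role played classically by \cite{BGHSV} --- to bring the blown-up locality and dimension back to constants.

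Each sub-step raises a genuinely quantum difficulty. In (ii) the classical soundness analysis reads one global assignment simultaneously at many places; since a purported ground state cannot be cloned, I would have to make the walk-agreement terms act coherently and re-derive the amplification of the gap from a spectral / Cheeger-type inequality for the interaction graph rather than from a union bound. The deeper trouble is in (i)--(ii): the degree-reduction and expanderization gadgets are, at bottom, small stabilizer codes --- repetition or teleportation gadgets tying copies of a qudit together along the edges of an expander --- and this is exactly where the obstruction proved in the present paper bites, since we show that on good small-set expanders such stabilizer Hamiltonians are \emph{non-robust}, i.e.\ they admit states of energy $o(1)$ far from the ground space, so the naive cloud gadget destroys precisely the gap one is amplifying. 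Step (iii) in turn needs a quantum PCP of proximity with constant parameters, which is not known and is itself tied to the existence of locally testable quantum codes of good rate and distance --- again heavily constrained by our results. So the hard part, I expect, is not a technical lemma but the structural tension this paper isolates: amplification wants the intermediate interaction graphs to be excellent expanders (for mixing and soundness), yet excellent small-set expansion forces the associated code to lose robustness, hence to lose the amplified gap. A proof of qPCP seems to require either an amplification scheme whose intermediate instances thread this needle --- expanding enough for mixing but bad enough at the small-set scale to stay robust --- or a route that abandons expander powering entirely, e.g.\ an algebraic construction in the spirit of \cite{Arora, ALMSS}, in which case the obstacle simply migrates to building a suitable locally testable quantum code. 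Absent such an idea I would expect the conjecture to remain open.
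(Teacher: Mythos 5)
This statement is a conjecture; the paper offers no proof of it, and you are right to decline to supply one. Your survey of where a Dinur-style amplification would break down --- the tension between needing good expansion for gap amplification and the loss of robustness that Theorems \ref{thm:approxBPinNP} and \ref{thm:QECC} show such expansion forces, plus the missing ingredient of quantum PCPPs and locally testable quantum codes --- matches the obstacles the paper itself identifies in its discussion, so there is no gap to report beyond the conjecture's own openness.
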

A different way to state the theorem is to say that even the approximation 
of the ground energy of the given Hamiltonian to within a constant times 
its norm, is quantum NP hard; an equivalent formulation (see \cite{Aha2} 
for exact statement and a proof) is that $QMA$-hard problems have a formulation 
in which the quantum witness can be tested by measuring 
only a constant number of random qudits.  
 
Attempts to extend the classical proofs of the 
PCP theorem to the quantum settings, or to disprove it, have encountered 
severe obstacles so far \cite{Aha2, Arad, Has, Has2}. 
In particular, the classical proofs rely heavily on copying classical 
information; we cannot do the same in the quantum world, due to the 
no-cloning theorem. 
It seems that we do not understand some crucial issue about 
the local versus global behaviors of entanglement which is central 
to the problem. 
The resolution of the quantum PCP conjecture, either to the positive or 
the negative, is likely to shed light on
multiparticle entanglement, with implications  
to computational complexity as well as to physics, e.g., to 
hardness of approximation in the quantum setting,  
robustness of quantum correlations at room temperature, 
quantum error correction, topological order 
and more. 

Given the difficulty in making progress on the quantum PCP question so far, 
it is natural to restrict attention to commuting Hamiltonians.  
One one hand, one might hope to prove a quantum PCP theorem for 
commuting Hamiltonians, with the hope of borrowing ideas from such 
a proof to the more general case (this route might be useful  
even if eventually the general CLH problem 
turns out to be inside NP, which would collapse the commuting PCP theorem 
into the already known classical PCP theorem). 
On the other hand, upper bounds on the hardness of approximating
the CLH problem (such as putting the CLH approximation
problem inside $NP$ for certain approximation factors) 
could be considered as initial steps towards disproving 
the general PCP conjecture, or at least as limitations to its strength. 
This approach of attacking first the commuting case 
was taken by Aharonov and Eldar \cite{Aha}, Hastings \cite{Has}, 
and Hastings and Freedman \cite{Has2}.  

{~}

\noindent{\bf Locally Testable codes} 
Classical PCP proofs rely heavily on a particular type of error correcting 
codes, called LTC codes (Locally Testable Codes) which are interesting 
in their own right.  
These are codes which satisfy the following property: 
Given a word of fractional distance $\delta>0$ from the code, 
we would like to detect that it is not in the code by randomly 
choosing a constant number $k$ out of the $n$ locations, 
and performing a simple test on those locations.  
The code is called a locally testable code (LTC)
if such a test succeeds with 
probability $s>0$, given that $\delta>0$ (there are 
variants on this definition, which we will discuss below).
Note that the requirement for soundness when querying a constant number 
of locations 
is reminiscent of the similar goal in the context of PCP, 
in which querying a 
constant number of bits of a proof is supposed to detect that the proof 
is incorrect with constant probability.  
Indeed, locally testable codes, such as the Hadamard and the 
long code \cite{Hastad, Rubinfeld} have played a central 
role in the proofs of the PCP theorem (\cite{Arora,ALMSS,Din}).

One can make the connection between the classical PCP theorem 
and LTCs more rigorous.   
It is possible to define a strong version of PCP, called 
PCP of proximity (PCPP) \cite{BGHSV}, 
which is a structure hidden in most known 
PCP proofs; Ben-Sasson et. al. showed \cite{BGHSV} 
how given a PCPP, one  can take a good code 
and construct from it a good LTC code, which inherits its local testability 
parameters (soundness and number of queries) from those of the 
PCPP \cite{BGHSV}.

To the best of our knowledge, quantum LTCs and their robustness were 
not studied before, and neither was the notion of quantum PCPPs.

\subsection{Our contribution}
In this paper we attempt to gain new insights into the above topics. 
To do that, we consider an important facet, central to
all the above issues: the topology of the interaction graph 
underlying the set of constraints. 

Constraint satisfaction 
problems on a lattice, both in the quantum and classical case, 
can be easily approximated in polynomial 
time to within factors 
which are even sub-constant. This is done by throwing away 
constraints that disconnect the lattice to 
logarithmic sized boxes which can be solved separately.
On the other hand when the underlying graphs of the classical problem
are expanders, the constraint satisfaction problem can exhibit 
extreme robustness against even constant approximation factors, 
as in the PCP theorem (see Dinur's proof \cite{Din,AB}). 

Expansion of the interaction graph plays a crucial role 
in the context of error correcting codes as well; 
Classical expander codes were defined in 
\cite{Zemor,Alon,Spi}. 
In the quantum regime quantum codes on expanders have not been 
explored very thoroughly, though see \cite{Zemor2, Has2}.

In the above mentioned results, expansion is referred to loosely; 
several definitions of expansion are used in the literature, and we have 
not specified which one is being used. To be concrete, 
we need to choose a good expansion definition for our context. 
We first observe that crucially, 
the graphs we work with are $k$-local hypergraphs for $k\geq 3$.
This because the two-local case of commuting Hamiltonians is 
not interesting from our point of view; 
$2$-local CLH is known to be in NP \cite{Bra}, and exhibits only local 
entanglement, and stabilizer codes with $2$-local check terms cannot 
correct quantum errors for similar reasons.
We therefore need to work with hypergraphs, or with CLHs with $k\ge 3$ 
locality. 

One possible route to take is to use one of the known
definitions for expansion in hypergraphs: geometric, homological or spectral; 
(for references and a survey see \cite{Parzan}).
However it is unclear which of those 
makes more sense and the relations between them are not all known \cite{Parzan}.
One could also choose to work with a graph underlying the hypergraph 
(connecting any two nodes that appear in the same constraint); 
This definition loses much of the "structure" of the original hypergraph. 
We choose to work here with
the natural bi-partite graph induced by the hypergraph (as
in e.g., \cite{Spi, CRVW}). 
We define the induced bi-partite graph $G=(L,R;E)$ with
constraints on the left, and 
variables / qudits on the right (see definition \ref{def:bipgraph}), 
and a variable is connected to the constraints it appears in. 
We say that such a graph is an $\epsilon$ small-set bi-partite 
expander, if for any set $S$ of size at most $k$ particles, the number of local 
terms incident on these particles is at least $D_R |S| (1-\epsilon)$, where 
$D_R$ is the right degree of the graph.  
We note that 
$D_R |S|$ is the maximal possible number of constraints acting on those 
particles, so $\epsilon$ can be viewed a ``correction'' to this number.

We remark that our definition cares about the expansion of only 
constant-size sets,
whereas usually, one refers to {\it small-set expanders} as those graphs where all sets of size some
linear fraction of $|R|$ are required to expand \cite{Din3,Ragh}.
We will address this point also 
in section (\ref{sec:discussion}), following the exposition of our results.
 
We make use of a very simple property of very good small-set
expanders, namely that most of the neighbors of a set of size $k$ 
only touch this set at one point (see Fact \ref{fact:essence} 
in Subsection \ref{subsec:overview}).  
This turns out to be an extremely useful property which will bare 
strong consequences for both the computational complexity of CLHs 
on small-set expanders, 
as well as for stabilizers whose graphs are small-set expanders.

\subsubsection{Approximating CLHs on expanders}
We start by studying the complexity of  
the approximate version of the CLH problem;  
we ask whether we can bound from above the complexity of CLH 
if we are allowed a constant error in the minimal energy; 
Specifically, if we allow to throw away a certain constant fraction of the 
constraints; For that matter, we study what seem to be 
``hardest'' class of interaction graphs, 
namely small set expanders. 

We show that 
the approximation version of the $CLH(k,d)$ problem 
becomes "easy'' and falls into NP. 
(This is of course more surprising if one believes that the general 
$CLH$ problem is not in $NP$; if on the other hand one believes 
$CLH$ is in $NP$, this can be considered as a step towards this goal.)

Our first theorem states that:

\begin{theorem}\label{thm:approxBPinNP}
(the approximation of 
CLH on small set bi-partite expanders is in NP) 
Let $\gamma(\eps) = 2kd \eps$.
Let $H$ be an instance of $CLH(k,d)$ for constant $k,d$ whose bi-partite 
interaction graph has a right degree $D_R$,
and is $\epsilon$-small-set expanding, for $\epsilon<\frac{1}{2}$.
Let $\lambda_1 > \lambda_2 > \hdots > \lambda_N \geq 0$ 
be the eigenvalues of $H$.
Then for every eigenspace $\lambda_i$, there exists a state $\ket{\psi_i}$
 such
that $\left| \left\|H \ket{\psi_i} \right\| - \lambda_i \right| 
\leq \gamma(\eps) \left\|H \right\|$, and such that 
$\ket{\psi_i}$ can be generated by a constant depth quantum circuit.  
In particular, the $\gamma(\epsilon)$-approximation problem of $CLH(k,d)$ 
on such $\epsilon$ small-set bi-partite expander graphs is in NP. 
\end{theorem}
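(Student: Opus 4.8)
The plan is to exploit the key structural property advertised in the introduction: in a very good small-set expander, a set $S$ of $\le k$ qudits has almost all of its incident constraints touching $S$ at exactly one point. Precisely, for a constraint $H_j$ and a qudit $v$, call $v$ a ``private'' qudit of $H_j$ if $v$ is not shared with any other constraint; the expansion hypothesis with parameter $\epsilon$ forces that, on average over constraints, only an $O(kd\epsilon)$ fraction of the ``incidence weight'' comes from shared qudits, so a typical constraint has a private qudit. I would first make this counting precise, turning the $\epsilon$-small-set expansion condition into the statement that the \emph{intersection graph} of the constraints — where $H_i$ and $H_j$ are adjacent iff they share a qudit — has at most $\epsilon \cdot (\text{something like } D_R |L|)$ edges, i.e. the constraints are ``almost disjoint''.

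Next I would build the witness state $\ket{\psi_i}$. Because the constraints form an almost-disjoint family, I can greedily select a large sub-collection $\calC'$ of constraints that are pairwise disjoint (a matching-type argument in the intersection graph: since the average degree in the intersection graph is $O(kd\epsilon)|L|/|L|$-small, a fractional/greedy argument yields $|\calC'| \ge (1-\gamma(\epsilon))|L|$ for $\gamma(\epsilon)=2kd\epsilon$). Since the $H_j$ commute and those in $\calC'$ act on disjoint sets of qudits, the restriction of $H$ to $\calC'$ is literally a sum of terms on disjoint registers, so its ground state (or any joint eigenstate realizing the eigenvalue pattern matching $\lambda_i$ on the discarded terms) is a tensor product of local states on $\le k$ qudits each, producible by a \emph{depth-one} circuit, or a small constant-depth circuit after accounting for the discarded constraints. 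The energy of this state under the full $H$ differs from its energy under $\sum_{\calC'}H_j$ by at most $|L \setminus \calC'| \le \gamma(\epsilon)|L| \le \gamma(\epsilon)\|H\|$ (using $\|H_j\|\le 1$ and $\|H\| \ge$ a suitable normalization — here one uses that each term is a projection and $\|H\|$ scales with the number of terms, modulo the $2kd$ factor absorbing degree bookkeeping). Choosing the local eigenvalues appropriately on the $\le k$-qudit blocks lets me match any target $\lambda_i$ up to the stated additive error; this is where the quantifier ``for every eigenspace $\lambda_i$'' gets handled — I pick, block by block, a local eigenvalue compatible with $\lambda_i$, which is possible because the discarded constraints contribute at most $\gamma(\epsilon)\|H\|$ of slack.

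For the ``in NP'' conclusion: the prover simply sends a classical description of the constant-depth circuit (equivalently, the list of $\le k$-local product factors), which has polynomial size since there are $O(n)$ blocks each describable with $O(d^k)$ real parameters to suitable precision; the verifier reconstructs $\ket{\psi_i}$ block-wise, and because each $H_j$ and each block touches only $O(1)$ qudits, the expected energy $\bra{\psi_i}H\ket{\psi_i} = \sum_j \bra{\psi_i}H_j\ket{\psi_i}$ is a sum of $O(n)$ terms each computable \emph{exactly} (or to enough precision) in polynomial time by multiplying out constant-size matrices. Hence the verifier checks $\bra{\psi_i}H\ket{\psi_i} \le \lambda_i + \gamma(\epsilon)\|H\|$ deterministically in polynomial time, placing the $\gamma(\epsilon)$-approximation problem in NP.

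The main obstacle I anticipate is the second step: converting ``$\epsilon$-small-set expansion'' into a clean guarantee that a $(1-\gamma(\epsilon))$-fraction of constraints can be simultaneously made disjoint. The subtlety is that a constraint may fail to have a private qudit even if, on average, sharing is rare; one must argue that the \emph{number} of such ``bad'' constraints (or the number that must be removed to destroy all sharings) is controlled, not just the average overlap. I expect this to require a careful double-counting: sum over qudits $v$ of (number of constraints on $v$ minus one)$_+$, bound it by the expansion deficit $\epsilon D_R |S|$ summed appropriately, and then observe that removing one constraint per ``excess incidence'' removes all overlaps while deleting at most $\gamma(\epsilon)|L|$ constraints. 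Getting the constant exactly $2kd$ (rather than some larger function of $k,d$) will hinge on how tightly this counting is done, and on using that each constraint has right-degree contributions bounded by $kD_R$ while $\|H\|$ already carries a factor reflecting the degree.
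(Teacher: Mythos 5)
There is a genuine gap, and it is in the step you yourself flagged as the main obstacle. The $\epsilon$-small-set expansion hypothesis does \emph{not} imply that the intersection graph of the constraints has only $O(\epsilon)\cdot|L|$ edges, nor that a $(1-\gamma(\epsilon))$-fraction of the constraints can be made pairwise disjoint. In an $\epsilon$-expander each qudit still has right-degree $D_R$, so each constraint shares a qudit with roughly $k(D_R-1)$ other constraints no matter how small $\epsilon$ is; the intersection graph has $\Theta(kD_R)\cdot|L|$ edges, and any pairwise-disjoint subfamily has size only $O(|L|/(kD_R))$ — a constant fraction bounded away from $1$ independently of $\epsilon$. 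What the expansion condition actually controls (Fact \ref{fact:essence}) is the number of constraints that intersect a given constraint in \emph{two or more} qudits: only those are rare. Single-qudit intersections are abundant and cannot be removed within the $\gamma(\epsilon)=2kd\epsilon$ budget, so your greedy matching step fails and with it the construction of a tensor-product witness.

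The paper's proof handles exactly this point with the ingredient your proposal omits: the Bravyi--Vyalyi $C^*$-algebra lemma (Lemma \ref{lem:BV} and Corollary \ref{cor:BV}). One iteratively picks a term $v$, removes only the few terms meeting $v$ in $\ge 2$ qudits (at most $2\epsilon D_R|\mathrm{supp}(v)|$ of them, by expansion), and then uses the lemma to decompose each shared qudit's Hilbert space into a direct sum of tensor-product subspaces in which $v$ acts on a subsystem disjoint from all remaining terms. The witness is not a product state but the image of one under the accumulated local isometries (a constant-depth circuit), and the NP witness includes the direct-sum indices. The $2kd\epsilon$ bound then comes from an amortized count: each isolation step removes at most $2\epsilon D_R a$ terms while destroying at least $a$ local \emph{dimensions}, and the total dimension budget is $d|R|$. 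Your proposal has no analogue of either the decomposition step or this dimension-counting argument, and without them the approach does not go through.
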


For the above theorem to be non-trivial, 
$\epsilon$ must be sufficiently small, so that $\gamma(\epsilon)=2kd\eps$ 
is less than $1$. 

One might ask whether Theorem \ref{thm:approxBPinNP} is trivially true; 
 we might have allowed to
remove enough terms so that the graph is disconnected into 
small components and the problem becomes solvable in $P$, 
just like what happens in the lattice case.  
We show that this is not the case:  

\begin{theorem}\label{thm:NPhard}
Given $\epsilon<1/2$, it is $NP$-hard to approximate $CLH(k,d)$ 
whose bi-partite interaction graph is
$\eps$ small-set expanding, to within a factor 
$\gamma(\eps)=2kd\eps$. 
\end{theorem}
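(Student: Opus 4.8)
The plan is to reduce from a classical constraint satisfaction problem whose approximation is already \NP-hard, and then reshape its interaction graph into a good small-set expander without spoiling the gap. Since a diagonal Hamiltonian is automatically commuting, and since in the computational basis a $CLH(k,d)$ instance is nothing but a $k$-local CSP over a $d$-state alphabet, it suffices to produce \NP-hard classical gap-CSP instances whose induced bi-partite graph can be made $\epsilon$-small-set expanding. First I would invoke the PCP theorem --- and, if the target $\epsilon$ is not very small, a further round of classical gap amplification --- to fix constants $k,d$ (which will play the role of the locality and alphabet size in $CLH(k,d)$) and a constant $c_0>0$, so that for an infinite family of $k$-local $d$-state CSP instances $\Phi$ with a \emph{bounded-degree} interaction graph it is \NP-hard to decide whether $\Phi$ is satisfiable or whether every assignment violates at least a $c_0$ fraction of the constraints. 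We can only hope to match the stated factor $\gamma(\epsilon)=2kd\epsilon$ in the regime $\gamma(\epsilon)<c_0$, i.e. for all sufficiently small $\epsilon$; once $\gamma(\epsilon)\ge1$ the claim is vacuous, since the smallest eigenvalue of $H$ never exceeds $\norm{H}$, and the narrow intermediate band is pushed down by amplifying $c_0$ as close to $1$ as one can.

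Next I would encode such a $\Phi$ as the diagonal Hamiltonian $H_\Phi=\sum_i\Pi_i$, where $\Pi_i$ is the local projection onto the assignments violating the $i$-th constraint. This is a $CLH(k,d)$ instance; its ground energy equals the minimum number of violated constraints, and $\norm{H_\Phi}$ equals the maximum number of simultaneously violated constraints, which is at most the total number $m$ of constraints. Hence in the \Yes\ case the ground energy is $0$, while in the \No\ case it is at least $c_0 m\ge c_0\norm{H_\Phi}>\gamma(\epsilon)\norm{H_\Phi}$.

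The missing ingredient is small-set expansion, which I would install by inflating the right degree with inert constraints: fix a target right degree $D_R$ (a constant depending only on $k$ and $\epsilon$) and pad $H_\Phi$ with $1$-local dummy terms --- the zero projection on each qudit, repeated until every qudit lies in exactly $D_R$ terms --- obtaining $H$. A dummy term is satisfied by every assignment, so it changes neither the ground energy nor $\norm H$; in fact it does not change $H$ as an operator at all, only its bi-partite graph. For a set $S$ of at most $k$ qudits the number of edges leaving $S$ is exactly $D_R\abs S$, whereas the number of \emph{distinct} incident terms is smaller by the collision deficit $\sum_{C:\,|C\cap S|\ge2}(|C\cap S|-1)$; since dummy terms are $1$-local they never collide, and the original constraints contribute at most $(\text{max original degree})\cdot\abs S\cdot(k-1)=O(k^2)$ incidences, a constant. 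Therefore $H$ is $\epsilon'$-small-set expanding with $\epsilon'=O(k^2)/(D_R\,k)=O(k/D_R)$, which drops below the prescribed $\epsilon$ once $D_R$ is a large enough constant --- and a graph that is $\epsilon'$-small-set expanding with $\epsilon'\le\epsilon$ is in particular $\epsilon$-small-set expanding. Putting the pieces together: on this polynomial-time-constructible family of $\epsilon$-small-set expanding $CLH(k,d)$ instances it is \NP-hard to distinguish ground energy $0$ from ground energy exceeding $\gamma(\epsilon)\norm H$, which is the asserted hardness of $\gamma(\epsilon)$-approximation.

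\noindent\textbf{Where the difficulty lies.} The padding itself is routine --- the $O(k^2)$ collision bound is immediate from bounded degree and $1$-locality of the dummies. The genuine tension is between the \emph{magnitude} of $\gamma(\epsilon)=2kd\epsilon$ and the best gap classical PCP can deliver: one must ensure $c_0>\gamma(\epsilon)$, which confines the clean statement to small $\epsilon$ and requires amplifying the PCP gap toward $1$ for the rest, disposing of $\gamma(\epsilon)\ge1$ as a vacuous case. A secondary but real point is that the degree-boosting terms must be genuinely always-satisfied (hence the zero projection, rather than projections on fresh ancilla qudits), since anything that a bad assignment could violate would raise $\norm H$ and dilute the gap $\gamma(\epsilon)$.
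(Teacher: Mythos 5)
Your strategy is essentially the paper's: reduce from a classical gap-CSP (which, being diagonal, is automatically a $CLH(k,d)$ instance), keep the soundness gap a fixed constant, and arrange for the bi-partite interaction graph to be $\epsilon$-small-set expanding with $\gamma(\epsilon)=2kd\epsilon$ strictly below that gap. The paper simply asserts that the output of Dinur's amplification already has this form (expansion error shrinking as the degree grows), whereas you re-derive the expansion by explicit degree padding; that is a legitimate, more self-contained route, and your collision-deficit bound $\epsilon'=O(1)/D_R$ is the right calculation.

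The one concrete flaw is the padding gadget itself. Under Definition \ref{def:bipgraph} an edge is placed between a term and a qudit only when the term \emph{acts non-trivially} (not as the identity) on that qudit, and the paper's normalization further assumes $\|H_i\|=1$. Your dummy term is the zero projection, which is a scalar multiple of the identity: it has no well-defined non-trivial support, so it contributes no edges, and the padded instance has exactly the same interaction graph as the unpadded one --- the expansion is not actually improved. Any \emph{non-zero} $1$-local projection would create edges but would be violated by some states, threatening completeness, exactly the trap you warn against. The fix is either to treat the hypergraph of declared supports as part of the instance (standard for CSPs, and implicit in Dinur's preprocessing, which attaches always-satisfied constraints along the edges of an auxiliary expander), or to invoke that preprocessing directly as the paper does. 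A smaller point: when $2kd\epsilon\ge 1$ the statement is not vacuous but false (a factor-$1$ approximation is trivial), so the theorem must be read, as you in effect do, in the regime where $\gamma(\epsilon)$ is below the achievable classical gap.
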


Theorem \ref{thm:approxBPinNP} can be considered as a first step 
towards proving that the CLH problem lies inside NP; 

Alternatively, if one takes the point of view that CLH is not inside NP,
and hence that a quantum PCP theorem might in fact be proven via 
showing hardness of approximations of CLHs 
(as is the approach of \cite{Has,Has2}), then  
Theorem \ref{thm:approxBPinNP} puts limitations
on such a qPCP.  
While standard constructions of classical PCP 
use graphs with excellent small-set expansion (see the proof of 
Theorem \ref{thm:NPhard}) and one can make these graphs have arbitrarily good 
small-set expansion (albeit while increasing the degree),  
Theorem \ref{thm:approxBPinNP}
implies that the smaller the expansion error of the graphs 
underlying the CLHs are, 
the weaker the qPCP theorem using these CLHs becomes.  
The approximation error for which hardness is shown 
is bounded from {\it above} by some constant times the expansion error. 
This contradicts classical intution. 
We discuss this further in the Discussion,  
Subsection \ref{sec:discussion}. 

\subsubsection{Quantum stabilizer Locally-Testable Codes}
Next, we 
study the much related topic of stabilizer codes. 
In particular, we study how stabilizer codes with local stabilizers 
behave in the context of local testability. 

As mentioned above,  
classical locally testable codes satisfy the following property: 
Given a word which is of relative distance $\delta$ from 
the code, then if $\delta$ is not $0$ then this fact can be detected 
with some constant probability $s>0$ by querying 
a {\it constant} number $k$ (hence, {\it locally} testable) 
of coordinates in the code; 
One can consider strong LTC codes in which the probability
$s$ is proportional to $\delta$, even for sub constant $\delta$'s,
or weak $LTC$ 
codes in which the probability $s$ is at least a 
constant given that $\delta$ is at least a constant. 
The essential parameter of LTC codes is thus its robustness, captured  
roughly by how $s$ relates to $\delta$. 
 
More precisely, 
we say that a locally testable code with degree $D_R$ (namely, that 
each qudit is examined by $D_R$ constraints)
is $r(\delta)$-robust if an error on $\delta n$ 
locations violates a at least $r(\delta)D_R\delta n$ 
constraints, namely, a fraction $r(\delta)$ of the maximal number 
it can possibly violate (See Definitions \ref{def:weight}, \ref{def:robust} 
for a rigorous 
definition of the weight of an error and the robustness for various weights). 

%for later: perhaps change back to robutness as defined in the literature? 

There are two famous classical LTC codes used in the PCP 
proofs of \cite{ALMSS, Din, AB}: 
the long code and Hadamard codes. They 
exhibit extreme robustness, in the following sense: 
for any $\eps>0$, there exists $\delta>0$ such that
all error patterns of weight at most $\delta n$ (for code-length $n$)
are violated by a fraction at least $1-\eps$ of their 
incident ($3$-local) constraints).

We would like to study the robustness of quantum error correcting codes; 
To the best of our knowledge, the question of local testability of 
Quantum error correcting codes was not studied yet. 
The very notion of weight of an error is not very natural in the context 
of general quantum error correcting codes; we
leave the general definition to a later version of this paper. 
For stabilizer codes, however, the notion of weight 
of an error is very natural: Just count the number of non-identity 
Pauli's in the error (essentially, modulo the stabilizer group).   
For the rigorous definition see \ref{def:weight}.  
Robustness would thus correspond to how many stabilizers will not 
commute with such an error. 

The robustness of quantum stabilizer codes with local stabilizers seems, 
on the face of it, to be inherently restricted. 
It is illuminating to consider in this context the Toric code example
\cite{Kit2}. 
The Toric Code is defined by a set of $4$-local constraints
on a lattice of $\sqrt{n}$ by $\sqrt{n}$ qubits.  
It is well known that one can consider a chain of errors of 
length $\theta(\sqrt{n})$ of qubits, such that the only violated 
constraints will be those two constraints that touch the 
end qubits of this chain.
This already indicates very 
low robustness ($r(\delta)< \frac{1}{\sqrt{n}}$) for values of $\delta$ 
of the order of $\frac{1}{\sqrt{n}}$.    
It is easy to extend the low robustness behavior to words of linear distance 
from the code,
by scattering short chains of errors of length 
$\theta(1/\epsilon)$, which are separated from each other by at least
 $\theta(1/\epsilon)$ lattice sites. 

The message of this part of our paper is that quantum stabilizer codes 
in general, as long as they have local stabilizers, 
exhibit inherent non-trivial (and inherently quantum) 
upper bounds on their robustness. 

Our first result in this context might be viewed as 
surprising, as once again it poses 
constraints when the underlying interaction is expanding: 
it turns out that if the underlying graph 
is a good small set bi-partite expander, then the robustness 
is as low as the expansion error itself, 
for all errors with weight smaller than some constant $\delta$. 

\begin{theorem}\label{thm:QECC}  
Let $C$ be a stabilizer code, with
minimal distance $>1$, and a $k$-local 
generating set ${\cal G}\subset \Pi^n$,
such that each qudit is examined by $D_R$ generators.
Suppose the bi-partite interaction graph of ${\cal G}$ is $\eps$-small 
set bi-partite expander, 
for $\eps < 1/2$. 
Then,  for all $\delta < \min\{\frac{1}{k^3D_R},\frac{1}{2}dist(C)\}$, 
we have   
$r(\delta)\le 2\eps$.
\end{theorem}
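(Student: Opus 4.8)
The plan is to exhibit, for each admissible $\delta$, a single Pauli error $E$ of weight $\delta n$ that anticommutes with at most $2\epsilon D_R\delta n$ of the generators in $\mathcal G$; by the definition of robustness (Definition~\ref{def:robust}) this is exactly what is needed to conclude $r(\delta)\le 2\epsilon$. The building block is a \emph{truncated generator}. Fix $g\in\mathcal G$ with support $S_g$, where $2\le|S_g|\le k$, fix a qudit $v\in S_g$, and let $E_{g,v}$ be the Pauli obtained from $g$ by replacing its action on $v$ by the identity, so $E_{g,v}$ has support $S_g\setminus\{v\}$, of size $|S_g|-1$. (Generators with $|S_g|\le 1$ would be weight-$\le 1$ stabilizers, i.e.\ trivial or frozen qudits, and are discarded.)

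The point is that $E_{g,v}$ is almost undetectable. Since $g$ lies in the stabilizer group it commutes with every generator $h$; writing $g=E_{g,v}\cdot g|_{\{v\}}$, we get that $h$ anticommutes with $E_{g,v}$ iff $h$ anticommutes with $g|_{\{v\}}$, which forces $h$ to be incident both to $v$ and to $S_g\setminus\{v\}$ — in particular incident to $S_g$ in at least two qudits. By Fact~\ref{fact:essence} (the small-set expansion hypothesis applied to the set $S_g$, which has at most $k$ qudits), at most $\epsilon D_R|S_g|$ generators are incident to $S_g$ in two or more qudits. Hence $E_{g,v}$ violates at most $\epsilon D_R|S_g|$ generators while having weight $|S_g|-1$, and since $|S_g|\ge 2$ we have $\epsilon D_R|S_g|\le 2\epsilon D_R(|S_g|-1)$, so this single gadget already realizes a ratio $\le 2\epsilon$.

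To reach weight $\delta n$ I would tile. Greedily pick generators $g^{(1)},\dots,g^{(t)}\in\mathcal G$ whose supports are pairwise disjoint (and, for the weight bound below, spread out in the graph, so that no generator is incident to two of the sets $S_{g^{(i)}}$); each choice forbids only $\poly(k,D_R)$ later candidates, while $|\mathcal G|\ge nD_R/k$, so one can take $t$ of order $\delta n/k$ such generators provided $\delta<\frac{1}{k^3D_R}$ — this is the origin of that bound. Set $E:=\prod_{i=1}^t E_{g^{(i)},v_i}$. The supports being disjoint, $\mathrm{supp}(E)=\bigsqcup_i\big(S_{g^{(i)}}\setminus\{v_i\}\big)$, and by tuning $t$ and the gadget choices we arrange $|\mathrm{supp}(E)|=\delta n$. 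Since syndromes add modulo $2$, the generators violated by $E$ lie in the union of those violated by the individual $E_{g^{(i)},v_i}$, so $E$ violates at most $\sum_i\epsilon D_R|S_{g^{(i)}}|\le\sum_i 2\epsilon D_R\big(|S_{g^{(i)}}|-1\big)=2\epsilon D_R|\mathrm{supp}(E)|=2\epsilon D_R\delta n$ generators.

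What remains is to verify that $E$ has weight exactly $|\mathrm{supp}(E)|=\delta n$ — that no stabilizer brings $E$ closer to the code — and this is where $\mathrm{dist}(C)>1$ and $\delta<\tfrac12\mathrm{dist}(C)$ enter: a stabilizer that reduced $E$ on some gadget would, because the gadgets are spread out, have to restrict near that gadget to a low-weight nontrivial stabilizer, contradicting the distance bound. I expect this last verification — making the ``spread out'' requirement precise and extracting the exact threshold $\frac{1}{k^3D_R}$ — to be the main technical hurdle, whereas the truncated-generator gadget and its $2\epsilon D_R$ violation count, which is where the small-set expansion hypothesis does its work, are the short and clean core of the argument. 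Combining the pieces gives $r(\delta)\le 2\epsilon D_R\delta n/(D_R\delta n)=2\epsilon$.
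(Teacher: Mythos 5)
There is a genuine gap, and it sits exactly where you flagged the ``main technical hurdle'': the weight analysis of the truncated-generator gadget is not just hard, it is false. Since $g=E_{g,v}\otimes g|_v$, the stabilizer element $g$ itself satisfies $E_{g,v}\cdot g^{-1}=(g|_v)^{-1}\otimes I$, i.e.\ $E_{g,v}$ is congruent \emph{modulo the stabilizer group} to a weight-one Pauli supported on $v$. No appeal to $\mathrm{dist}(C)$ can prevent this reduction, because the reducing element is a generator, not a nontrivial logical operator. Consequently your tiled error $E=\prod_i E_{g^{(i)},v_i}$ has $wt_{Z(\mathcal G)}(E)\le t\approx \delta n/(k-1)$ rather than $\delta n$, and since Definition~\ref{def:robust} measures robustness against the weight modulo $Z(\mathcal G)$, your penalty-to-weight ratio degrades by a factor of roughly $k$: the penalty bound $\epsilon D_R|S_{g^{(i)}}|$ per gadget must now be divided by a true weight contribution of $1$ per gadget, giving only $r\le \epsilon k$, not $2\epsilon$.

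It is worth noting that your error is secretly the paper's error in disguise: reducing each gadget by $g^{(i)}$ leaves exactly one Pauli $\big(g^{(i)}|_{v_i}\big)^{-1}$ per island, which is the error the paper constructs directly ($\mathcal E=\bigotimes_u u|_{q(u)}$ over an $L$-independent set $U$). The paper then proves the weight is exactly $|U|$ by the neighborhood-disjointness argument together with Fact~\ref{fact:nocommute} (this is where $\mathrm{dist}(C)>1$ and $\delta<\tfrac12\mathrm{dist}(C)$ actually enter). The second ingredient you are missing is that the qudit carrying the error cannot be arbitrary: to get the per-island penalty down from $\epsilon D_R|S_g|$ (spread over the whole support of $g$) to $2\epsilon D_R$ (concentrated at one qudit), the paper uses Fact~\ref{fact:deg} to select the qudit $q(u)$ of $u$ for which the fewest generators touching it meet $\Gamma(u)$ twice. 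With that selection, a generator detecting $u|_{q(u)}$ must both touch $q(u)$ and intersect $\Gamma(u)$ in at least two qudits, and there are at most $2\epsilon D_R$ such generators per island, yielding $r(\delta)\le 2\epsilon$. Your observation that detection forces double intersection with $S_g$ is the right commutation argument; it just has to be localized to a well-chosen qudit and paired with the correct (weight-one-per-island) normal form of the error.
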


In the above $dist(C)$ denotes the distance of the code
(see Definition \ref{def:stab}).  
Theorem \ref{thm:QECC} thus 
provides a trade-off between the expansion and the robustness; 
The better the expansion, the worse the robustness. 
This seems like an inherently quantum phenomenon, contrary to 
what happens in the classical world.  

To contrast it with the classical setting, recall 
a result by Dinur and Kaufmann
which shows that a robust LTC code must have 
an interaction graph which is a small-set 
expander (Theorem 1.1 in \cite{Din2});  
We note that direct comparison does not hold since the 
definitions of expansions used are different; 
\cite{Din2} does not use bi-partite graph expansion but rather the 
graph in which an edge connects any two nodes that participate in a common 
constraint. 

Stronger evidence that
Theorem \ref{thm:QECC} shows a solely quantum phenomenon 
is given by the construction of {\it lossless expanders} in \cite{CRVW}.
This construction gives
classical error correcting codes whose robustness is very close to 
$1$. 

\begin{claim}\label{cl:classical}
For any $\eps\in (0,1/2)$, and $r\in (0,1)$ there exists 
a constant $\delta=\delta(r,\epsilon)$, such that there exists
an explicit infinite 
family of codes $\left\{C_{\eps}(n)\right\}_{n\in \mathbf{N}}$, of $n$ bits,
of constant fractional rate $r$, and constant fractional distance 
$d=d(\epsilon,r)$, 
whose check terms are of locality whose {\it expectation}
is equal to a constant $k$, 
and all errors of weight less than $\delta n$ have robustness 
at least $1-3\eps$.
\end{claim}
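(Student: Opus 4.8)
The plan is to build the codes directly from the constant-degree lossless bipartite expanders of \cite{CRVW} via the standard Tanner/Sipser--Spielman construction, and to extract both the distance and the robustness from the single counting fact that very good small-set expansion forces most neighbors of a small set to be \emph{unique} neighbors (the count underlying Fact \ref{fact:essence}, here applied to sets of size up to a linear fraction). First I would invoke \cite{CRVW}: given $\eps$ and $r$, fix a constant variable-degree $D_R = D_R(\eps)$, a constant set-fraction $\alpha = \alpha(\eps,r)$, and a number of checks $m \le (1-r)n$, so that the explicit family $G_n = (L,R;E)$ has $n$ variables of degree $D_R$ on one side, $m$ checks on the other, and every set $S$ of variables with $|S| \le \alpha n$ has $|N(S)| \ge (1-\eps) D_R |S|$ incident checks. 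I then define $C_\eps(n)$ to be the binary code whose parity-check matrix has one row per check vertex, that row being the indicator of the check's neighborhood, so a word lies in $C_\eps(n)$ iff every check sees an even number of ones. The rate is then at least $1 - m/n \ge r$; each variable is examined by exactly $D_R$ checks; and the locality of a check equals its degree, whose average is $n D_R / m$, a constant $k = k(\eps,r)$ --- the individual check localities need not all be equal, which is exactly why the statement asks only for locality of \emph{expectation} $k$.

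The distance and robustness both follow from one lemma: if $|S| \le \alpha n$ and $u$ is the number of unique neighbors of $S$ (checks adjacent to $S$ exactly once), then counting the $D_R|S|$ edges out of $S$ gives $D_R|S| \ge u + 2(|N(S)| - u) \ge 2(1-\eps)D_R|S| - u$, hence $u \ge (1-2\eps)D_R|S|$, which is strictly positive since $\eps < 1/2$. For the distance: a nonzero codeword $c$ with $|\mathrm{supp}(c)| \le \alpha n$ would have a unique-neighbor check seeing exactly one corrupted coordinate, hence violated --- impossible --- so $C_\eps(n)$ has relative distance at least $\alpha =: d(\eps,r)$. For the robustness: set $\delta = \delta(\eps,r) := \tfrac{1}{2}\alpha$ (any constant below $\alpha$ works), and let $e$ be an error of weight less than $\delta n$; passing to a minimal-weight representative modulo the code, which does not change the set of violated checks (Definition \ref{def:weight}), gives a support $S$ with $|S| < \delta n < \alpha n$, and each of the $\ge (1-2\eps)D_R|S|$ unique-neighbor checks of $S$ is violated since flipping a single coordinate flips that check. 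Thus, out of the at most $D_R|S|$ checks the error can touch, a fraction at least $1-2\eps$ is violated, so $r(\delta) \ge 1-2\eps \ge 1-3\eps$ in the sense of Definition \ref{def:robust}.

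The only real work, and the step I expect to be the main obstacle, is the parameter bookkeeping: verifying that \cite{CRVW} can be tuned to deliver $D_R(\eps)$, $\alpha(\eps,r)$ and $m \le (1-r)n$ simultaneously as constants (including, if one insists on rate exactly $r$ rather than at least $r$, padding with independent checks disjoint from the expanding part), and checking that the clean bound ``fraction $1-2\eps$ of incident checks violated'' survives the mild degree-irregularity of the \cite{CRVW} graphs and the precise normalization in Definition \ref{def:robust} --- it is exactly this lower-order slack that is absorbed by writing $1-3\eps$ instead of $1-2\eps$. Everything else reduces to the one unique-neighbor count above, which is entirely classical; this is the whole point of stating the claim, since it shows that the robustness-versus-expansion trade-off of Theorem \ref{thm:QECC} has no classical counterpart.
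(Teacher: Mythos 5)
Your proposal is correct and follows essentially the same route as the paper's appendix proof: take the explicit lossless bipartite expanders of \cite{CRVW}, define the code by one parity check per left node, get the rate from $1-|L|/|R|$, and get both distance and robustness from the unique-neighbor count underlying Fact \ref{fact:essence}. Your accounting even yields the marginally sharper violated fraction $1-2\eps$ where the paper settles for $(1-\eps)(1-2\eps)\ge 1-3\eps$, and your explicit remark that one may pass to a minimal-weight representative modulo the code without changing the syndrome is a detail the paper leaves implicit.
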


The proof of this claim is given in the appendix. 
We note that contrasting Claim \ref{cl:classical} with Theorem 
\ref{thm:QECC} is not exact, since 
the classical codes constructed have only average constant degree and not 
constant degree; we believe this is not an essential point. 

We use theorem \ref{thm:QECC} together with 
a rather involved probabilistic argument, to establish  
an {\it absolute} non-trivial upper-bound on 
robustness of stabilizer codes, for errors of size at most 
some constant relative distance. A semi-trivial such bound exists,
due to the size of the alphabet:  
For a given qudit of dimension $d$, the number of possible errors 
on one qudit is $d^2-1$. Fix a qudit, and pick $Q$ to be 
the Pauli on that qudit which is most popular 
among all the stabilizers acting on that qudit.
Then at least $1/(d^2-1)$ of the stabilizers on that qudit will 
commute with that error, and so at most $\alpha(d)= 1- \frac{1}{d^2-1}$ 
 of the stabilizers touching that qudit will not commute with that error, 
and thus will detect the error (see fact \ref{fact:alphabet} 
for an exact proof). The robustness is thus bounded by 
$\alpha(d)$, which we call  
the {\it single-error robustness}, or the alphabetical 
upper bound on the robustness.  
Classically, there is no direct analogue to the requirement of non-commutativity to achieve constraint violation, and hence
 $\alpha(d)=1$ in the classical case. 

We are interested in a deeper phenomenon; it turns out that for any 
$k$-local 
stabilizer 
codes, the robustness must be strictly smaller than 
the {\it single-error robustness}. 

\begin{theorem}\label{thm:robust} (Roughly)
For any stabilizer code $C$
of $k$-local terms ($k\geq 4$) over $d$-dimensional qudits, where
each qudit interacts with $D_R$ local terms,
errors of weight at most some constant fraction of $n$
have robustness at most $\alpha(d)(1-\gamma_{gap})$ for some 
$\gamma_{gap}=\gamma_{gap}(k,d)>0$. 
\end{theorem}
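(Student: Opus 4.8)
The plan is to combine the two complementary bounds the paper has already set up: Theorem \ref{thm:QECC}, which handles stabilizer codes whose bi-partite interaction graph is a good small-set expander, and a separate bound for codes whose graph is a \emph{bad} expander. The dichotomy is the engine: fix a threshold $\eps_0 = \eps_0(k,d)$. If the graph happens to be $\eps_0$-small-set expanding, Theorem \ref{thm:QECC} immediately gives robustness at most $2\eps_0$ for all errors of weight below the stated constant, and by choosing $\eps_0$ small enough that $2\eps_0 < \alpha(d)$ we are done with a gap $\gamma_{gap}$ coming from $\alpha(d) - 2\eps_0$. So the real work is the other case, where the graph fails $\eps_0$-small-set expansion. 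First I would unpack what that failure means: there is a set $S$ of at most $k$ qudits whose incident constraints number strictly fewer than $D_R|S|(1-\eps_0)$, i.e. a noticeable ($\ge \eps_0 D_R |S|$) fraction of the slots are ``wasted'' on constraints that touch $S$ in two or more points. Equivalently, locally the hypergraph has genuine overlaps rather than being tree-like, and this is precisely the \emph{classical} obstruction to robustness: overlapping constraints cannot all be simultaneously violated by a generic error confined near the overlap.

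Next I would turn that local overlap structure into an error of low robustness via a probabilistic/counting argument. The idea is to pick a random qudit $q$ (weighted appropriately), look at the $D_R$ constraints on it, and argue that because of the overlaps guaranteed by the failure of expansion, many of these constraints share a second qudit; on those pairs one can choose a Pauli error supported near $q$ that, by the single-error argument behind $\alpha(d)$ \emph{plus} the commutation slack created by overlapping supports, commutes with a fraction of the constraints that strictly exceeds $1 - \alpha(d)$ worth of savings — equivalently, violates strictly fewer than $\alpha(d) D_R$ of them. Concretely: among the stabilizers on $q$, most-popular-Pauli reasoning already spares $\ge \frac{1}{d^2-1}$ of them; the overlap structure lets us additionally spare a further $\Omega(\eps_0)$ fraction by also optimising the second shared qudit's Pauli, because two constraints sharing two qudits impose correlated (not independent) commutation conditions. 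Then I would extend a single such low-robustness local error to an error of weight a constant fraction of $n$ by the standard scattering/packing trick used earlier for the Toric code: since the distance is $>1$ and $\delta$ is taken below $\min\{\frac{1}{k^3 D_R}, \ldots\}$, one can place many disjoint (non-interacting) copies of the local bad error, keeping the total weight linear while the violated-fraction stays bounded by $\alpha(d)(1-\gamma_{gap})$. A routine averaging over the random choice of $q$ converts ``there exists a qudit with overlap'' into ``a linear-weight error of robustness $\le \alpha(d)(1-\gamma_{gap})$ exists''.

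I expect the main obstacle to be the quantitative heart of the bad-expansion case: making precise how a two-point overlap between constraints converts into a guaranteed \emph{extra} commutation saving beyond the alphabetical $\frac{1}{d^2-1}$, uniformly over all Pauli types and all ways two $k$-local generators can share two qudits. One has to be careful that the generators are elements of $\Pi^n$ with possibly different Paulis on the shared qudits, and that commutation with a local error is governed by a symplectic form over the shared support; the argument must show that optimising the error over the (at least two-dimensional) shared support beats optimising over a single qudit by a definite margin, and that this margin survives being averaged against the $\eps_0$-fraction of overlapping slots. A secondary technical point is bookkeeping the weight of the error modulo the stabilizer group (per Definition \ref{def:weight}) so that ``scattering many copies'' genuinely produces a word of the claimed weight and not something equivalent to a much smaller or larger error — this is where the hypotheses $k \ge 4$, $dist(C) > 1$, and the small-$\delta$ ceiling get used. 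Once the constant $\gamma_{gap}(k,d) = \min\{\alpha(d) - 2\eps_0,\ \text{(margin from the bad-expansion case)}\}$ is pinned down, the theorem follows by the case split.
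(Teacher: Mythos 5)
Your high-level shape --- a dichotomy between the expanding and non-expanding cases, with Theorem \ref{thm:QECC} handling the first and a ``classical'' overlap argument handling the second --- matches the paper's strategy. But the dichotomy you set up is at the wrong scale, and this is a genuine gap, not a technicality. You split on whether the \emph{whole graph} is $\eps_0$-small-set expanding. If it fails, all you get is a \emph{single} set $S$ of at most $k$ qudits with deficient expansion. Your plan to ``place many disjoint copies of the local bad error'' does not work: the overlap structure you are exploiting lives only at that one spot, and the rest of the graph may be a perfect expander, so any error scattered elsewhere is violated at essentially the full $\alpha(d)D_R$ rate and the robustness of the combined linear-weight error is dragged back up to $\alpha(d)$. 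Since $r(\delta)$ quantifies over errors of weight $\delta n$, one constant-sized bad neighborhood cannot certify the bound. The paper avoids this by first fixing a \emph{linear-sized} $k$-independent set $U$ (Fact \ref{fact:indset}, Definition \ref{def:kind}) and splitting on the expansion of its linear-sized neighborhood $\Gamma(U)$: if $\Gamma(U)$ expands well one invokes Corollary \ref{cor:QECC} (which, unlike Theorem \ref{thm:QECC}, needs only the expansion of that particular independent set, not global small-set expansion); if it expands badly, the overlaps are necessarily spread across linearly many ``islands,'' and a random sparse error on $\Gamma(U)$ collects the saving in expectation (Lemma \ref{lem:indexp}, Fact \ref{fact:penalty}). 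The saving itself is also simpler than the correlated-symplectic-form mechanism you propose: it is just that two erred qudits incident on the same constraint cannot violate it twice, yielding a $(p\alpha)^2\eps$ correction to the union bound.

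The second missing piece is the weight analysis, which you flag as ``bookkeeping'' but which is a substantial half of the argument. One must show that the random error cannot be rewritten, modulo the stabilizer group, as something of much smaller weight. The paper does this with the Onion fact (Fact \ref{fact:succinct}): an error of weight $i\le k/2$ supported on a single generator's qudits cannot have its weight reduced below $\min\{i,k-i\}$ within that generator's $k$-neighborhood, and this requires the hypothesis $dist(C)\ge k$ (not merely $dist(C)>1$) together with $k$-independence of the islands so the neighborhoods do not interact. The error rate $p=1/(10k)$ is then calibrated so that islands with more than $k/2$ errors are exponentially rare, which is the trade-off that makes the expected-penalty and weight bounds compatible. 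Without an argument of this kind, your linear-weight error could in principle collapse modulo the group, and the robustness ratio would be uncontrolled.
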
 

This indicates some inherent ``problem'' in the quantum setting 
when one is interested in maximizing the robustness to make the 
code most 'sensitive'' (in terms of energy penalty) to errors; 
Understanding this phenomenon might be tightly related to 
the clarification of the qPCP conjecture. 

\subsubsection{PCPs of Proximity}
A $PCP$ of proximity, or $PCPP$, is a form of language verification where the verifier receives, in addition to the
usual NP-witness to the veracity of the claim made by the prover, an auxiliary proof that allows for a low-query test on whether or not the witness is "close" to or "far" from the set of acceptable witnesses.
Ben Sasson et al \cite{BGHSV} provided  construction of 
an LTC code given a PCPP, and described how the parameters are mapped 
from one to another. (see Construction 4.3, and Proposition 4.4
in \cite{BGHSV}). 

We define quantum PCPPs, where the PCPPs must be 
defined with respect to a given set of constraints that can be used.
We then show that a similar result to that of \cite{BGHSV} 
mapping a quantum PCPP to an LTC with a corresponding 
set of constraints, holds also in the quantum setting.  
Thus, limitations on quantum LTC codes translate to limitations 
on quantum PCPPs using similar sets of constraints; 
providing an important motivation to understand Quantum LTC codes. 

We will delay the definition of quantum PCPP using 
various sets of constraints, together with the general definition 
of LTCs (which use not only stabilizer constraints) and the above statement 
connecting the two, 
 to the next version of this paper. 

We mention that  
for now, our bounds on robustness of LTC codes are merely initial indications
of the difference between the quantum and the classical LTC behavior, 
and much is left for improvement in terms of the parameters; moreover,
currently our bounds on the robustness of LTCs hold only for 
stabilizer codes. Thus, they imply rather 
limited implications on qPCPPs. 
However, the connection between qLTCs and qPCPPs 
extending the known classical connection between LTCs and PCPPs  
strongly motivates further 
study of qLTCs, as well as improving our parameters and extending our results 
to other constraints systems; a deeper study of the implications 
to the qPCP conjecture of this direction is called for.  

\subsection{Overview of Proofs}\label{subsec:overview}
\subsubsection{Simple observations on expanders}
Given a bi-partite graph $G(R,L:E)$, we say a set of nodes in $R$
is $\epsilon$-expanding if $\Gamma(S)\ge D_R|S|(1-\epsilon)$ 
where $\Gamma(S)$ the neighbors of $S$ and $D_R$ the right degree. 

Two basic observations in this paper are the following; 
The proofs can be found in the appendix. 
(see \cite{CRVW} for similar observations)

\begin{fact} \label{fact:essence}
Consider $S\subseteq R$  in a bi-partite graph $G(R,L:E)$ 
and let $S$ $\epsilon$-expanding, for $\eps<\frac{1}{2}$.
Then a fraction at most $2\eps$ of all
vertices of $\Gamma(S)$ have degree strictly larger than $1$ in $S$.
\end{fact}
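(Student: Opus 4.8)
The plan is to prove Fact \ref{fact:essence} by a straightforward double-counting (handshake) argument on the edges between $S$ and $\Gamma(S)$. First I would partition $\Gamma(S)$ into $A$, the vertices of degree exactly $1$ into $S$, and $B$, the vertices of degree $\ge 2$ into $S$; the goal is to show $|B| \le 2\eps |\Gamma(S)|$. The total number of edges incident on $S$ is exactly $D_R |S|$, since each of the $|S|$ right-vertices has degree $D_R$. On the other hand, counting these same edges from the $\Gamma(S)$ side, each vertex in $A$ contributes exactly $1$ edge and each vertex in $B$ contributes at least $2$; hence $D_R|S| \ge |A| + 2|B| = |\Gamma(S)| + |B|$, which rearranges to $|B| \le D_R|S| - |\Gamma(S)|$.

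Next I would plug in the expansion hypothesis. Since $S$ is $\eps$-expanding we have $|\Gamma(S)| \ge D_R|S|(1-\eps)$, so $D_R|S| - |\Gamma(S)| \le D_R|S| - D_R|S|(1-\eps) = \eps D_R |S|$. Combining with the previous inequality gives $|B| \le \eps D_R|S|$. Finally I need to convert the bound $\eps D_R|S|$ into a bound of the form $2\eps|\Gamma(S)|$: from $|\Gamma(S)| \ge D_R|S|(1-\eps) \ge \tfrac12 D_R|S|$ (using $\eps < \tfrac12$), we get $D_R|S| \le 2|\Gamma(S)|$, hence $|B| \le \eps D_R|S| \le 2\eps|\Gamma(S)|$, as desired.

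There is no real obstacle here — the only thing to be slightly careful about is the direction of the expansion constant: the hypothesis gives a \emph{lower} bound on $|\Gamma(S)|$, which translates into an \emph{upper} bound on the edge-surplus $D_R|S| - |\Gamma(S)|$, and that surplus is exactly what forces most neighbors to be touched only once. I would also remark that the constant $2$ is a convenient loose bound coming from $\eps<\tfrac12$; one could state the sharper conclusion $|B|/|\Gamma(S)| \le \eps/(1-\eps)$, but the factor $2\eps$ suffices for all later uses, so I would keep the statement as is.
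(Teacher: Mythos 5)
Your proof is correct and is essentially the same double-counting argument the paper uses: the paper bounds the average degree of $\Gamma(S)$ into $S$ by $\tfrac{1}{1-\eps}\le 1+2\eps$ and invokes the fact that every neighbor has degree at least $1$, which is exactly your edge count $|A|+2|B|\le D_R|S|\le(1+2\eps)|\Gamma(S)|$ in disguise. Your closing remark that the sharp constant is $\eps/(1-\eps)$ is also consistent with the paper's appendix version of the proof.
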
 

\begin{fact}\label{fact:deg}
Let $S\subseteq R$ in a bi-partite graph $G = (R,L;E)$, such that $S$ is $\eps$ expanding.
Then there exists a vertex $q\in S$, 
such that the fraction of neighbors of $q$ with at least two neighbors in $S$
is at most $2\eps$.
\end{fact}

\subsubsection{The complexity of approximating CLH on expanders}
The idea behind the proof
 of Theorem (\ref{thm:approxBPinNP}) is as follows.  
Much of our current knowledge on the complexity of CLH 
relies on a critical observation by \cite{Bra} that
any two commuting local terms $H_i,H_j$ can be viewed as acting on "disjoint" 
subsystems, using some basic facts from the representation theory of 
$C*$-algebra. 
Informally (see Subsection \ref{sec:bv}
 for a formal description), 
the idea is this. 
We consider two terms in the Hamiltonian, $H_i$ and $H_j$, which 
intersect on some subset of qudits, whose Hilbert space is 
${\cal H}_{int}$. Bravyi and Vyalyi's lemma claims that 
there exists a local isometry on ${\cal H}_{int}$, preserved by $H_i,H_j$, which if applied, the space of the intersection can 
be written as a direct sum of subspaces, and in each one, the two terms are 
"disconnected" from each other, so that
they essentially act on different subsystems which are in tensor product.
In other words, 
\begin{equation}\label{eq:sum}
{\cal H}_{int} = 
\bigoplus_{\alpha} {\cal H}_{int}^{\alpha,i}\otimes {\cal H}_{int}^{\alpha,j}
\end{equation}

Such that $H_i$ ($H_j$)
restricted to the $\alpha$ subspace acts non trivially 
only on the left (right) 
subsystem, ${\cal H}_{int}^{\alpha,i}$ (${\cal H}_{int}^{\alpha,j}$). 
This lemma was used by \cite{Bra} to show that 
$2$-local CLHs (corresponding to graphs) are in NP; 
However as was noted in \cite{Aha} more is needed 
in the case of hypergraphs. 

In the context of expanders, we 
use it as follows. 
Given a constraint $H_i$, 
if we now remove all terms that share at least $2$ vertices with 
$H_i$, (and this, by Claim \ref{fact:essence} is just a small portion of 
the terms intersecting that term) 
we get into a situation in which all terms that intersect $H_i$, 
intersect it at only one particle. This makes the intersection 
pattern simple enough, so that 
Bravyi and Vyalyi's lemma \cite{Bra} can be applied almost directly 
on each of the qudits on which the terms acts on.   
The final result is that the term $H_i$ can be separated 
from the remaining terms and can be viewed (up to the relevant 
isometries) as acting on a separated subsystem. 

How to use this insight to show that the approximation to within 
a small constant is in NP? 
It is not enough to simply take away, for each term, all terms 
that make it isolated, as this will require removing too many 
terms. 
Instead, we use the above idea iteratively.  
At each iteration we choose a local term $v$ and ``isolate'' it 
(see definition \ref{def:isolate}): 
we remove as many terms as necessary 
so that among the remaining terms, 
no local term shares more than one particle with $v$ (see figure \ref{fig:ex1}).
Bravyi and Vyalyi's lemma can then be applied, 
and $v$ can be separated from the rest, after restricting 
the qudits to the relevant subspaces in their direct sum of 
Equation \ref{eq:sum}. Now we can continue in the same way, starting
from the restricted subspace. 
This way we gradually "tear-away" local subspaces of particles. 

The analysis of the upper bound on the number of terms 
that need to be removed altogether requires some thought, since 
the number of particles does not necessarily decrease through 
one iteration; it is only their dimensionality that decreases. 
We resort to an amortized analysis 
counting the number of {\it dimensions} that are removed in total. 
As far as we know this amortized counting of dimensions is novel
in the context of quantum Hamiltonian complexity, and
may be of interest on its own. 

Note that unlike in the lattice case, 
removing edges here is done not 
in order to {\it disconnect} part of a graph, but 
only to isolate terms; loosely speaking, to make the interaction pattern
more sparse. Disconnecting entire parts of the graph will require 
removing too many terms in the context of expanders, and will 
thus lead to too large an approximation error. 
The proof of theorem (\ref{thm:NPhard}) that the approximation problem on small-set expanders is NP-hard follows from the observation that the output of the PCP amplification routine due to Dinur (\cite{Din}) is an excellent small-set expander.

\subsubsection{Bounds on LTC codes on Expanders}
To prove Theorem \ref{thm:QECC}, 
we argue that a stabilizer code whose underlying 
interaction graph is a good enough bi-partite small-set expander, 
cannot be too robust. To do this, we define a particular error pattern, 
which has a large weight modulo the stabilizer, but which does not 
violate too many generators.

The error pattern we define acts on one qubit in each one of a set of 
 "isolated" terms. By ``isolated'' term (formally, 
$L$-independent terms - see Definition \ref{def:Lind}), we mean 
something completely 
different than the previously used isolated qudits;
a set of terms is $L$-indepdent if no two terms in it are closer than 
some constant in the interaction graph. 
Picking one qudit in each term, we claim that the weight of this error 
cannot decrease when multiplied by an element from the group. 
Intuitively, this is 
because of the following reason; 
If such an error pattern can be represented more succinctly, there must 
be one single qudit whose error is removed 
modulo the stabilizer group, 
and moveover, no new errors appear (modulo the group, again) 
in a constant but large enough neighborhood of that qudit. 
We show that this cannot happen since the stabilizers are too local. 

We need to design an
error pattern whose robustness will be limited. 
By examining the code itself, we can "distill" an error which
causes a minimal number of violated constraints.
Here Fact \ref{fact:deg} allows to pick  
cleverly the qudit in each term on which the error will act, and also 
to define what Pauli error will act on it to minimize the penalty 
(namely, the number of generators which will not commute with the error).  
For a term $g$ we choose the qudit $q$ to be the one promised by the lemma: 
a qudit s.t. a vast majority of the terms acting on it intersects $g$ 
at only one location. We define the error on $q$ to be 
the restriction of $g$ to $q$; since all stabilizer terms commute with $g$, 
those that intersect $g$ only on $q$ must agree with it on $q$. 
It follows that most of the incident generators on $q$ 
will commute with this error, and 
only a meager fraction of its incident generators, will realize that this is
indeed an error.

\subsubsection{Upper bound on robustness} 

Our final theorem states that for any constants $k\ge 4, d$,
there exists a constant less than $1$, 
upper-bounding the robustness
of any quantum LTC of $k$-local constraints, on $d$-dimension particles.
Our proof relies on bounding the robustness of a quantum LTC from two sides.
On one hand, we use the bound of theorem (\ref{thm:QECC}) 
which is, in fact, the "surprising"
side, which implies that high expansion forces low robustness.
We then add a new claim showing that quantum Stabilizers, not only 
suffer from the
quantum effect of (\ref{thm:QECC}) but also, cannot avoid, the "classical" 
effect that codes with poor expansion have low robustness.

We restrict our attention to sets of qudits belonging to generators which are "far" away from each other in the interaction graph of the generating set.  
Given a set of qudit "islands" with poor collective expansion, we try to find an error on this set which has less-than-optimal number of violations.
At least intuitively, a set with poor expansion, cannot have full robustness, since if "all" bits of the set are erroneous there are less than an optimal number
of constraints examining this set, so there would be less-than-optimal number of violations.
However, this intuition on its own is insufficient, since we don't know whether or
not we can find an error whose weight will be the entire set.
Rather, we resort to using a random error, so that only a small fraction of the set is erred.
We show, as a sub-claim that on average, even a random error which is "sparse" on the set, will
still sense the fact that the complete set has sub-optimal expansion, and would yield less than optimal number of violations.

The fact that the resulting error has large weight modulo the centralizer follows from an observation we prove, that each "island" of qudits cannot have its errors "erased" via a more succinct representation, if most of the "island" is uninhabited, i.e. less than $1/2$ of the qudits are erred.
We then have a fine trade-off: on one hand, we would like the number of errors on a given "island" to be smaller than $1/2$, and on the other hand, we would like a good portion of "islands" to have at least $2$ errors, so that they "sense" the sub-optimal expansion.   

\subsection{Comparison with prior work}\label{subsec:compare}
The question of approximating the minimal energy of local Hamiltonians
was considered by various researchers; see Bansai, Bravyi, Terhal 
\cite{BansaiTerhal},
and Kempe and Gharibian (\cite{GK}), 
who gave upper bounds on non-trivial approximation factors for general 
local Hamiltonians. 

The approximation of commuting local Hamiltonians, and the PCP 
conjecture in general, was the motivation of 
Hasting's work \cite{Has} in which 
he showed that certain classes of CLHs lie in NP (without 
approximation). Arad \cite{Arad} also considered the commuting case as a base 
for perturbations in his partial no-go for quantum PCP result.
Hastings and Freedman \cite{Has2} recently defined
generalization of expanders to hypergraphs, motivated by the attempt to prove 
the quantum PCP conjecture. 
We note that the graphs designed in \cite{Has2} are small-set expanders 
to which our theorems, and in particular Theorem \ref{thm:approxBPinNP}
apply; hence, our work puts limitations on the strength of quantum 
PCPs using such graphs.

In a recent independent 
result, Brand{\~a}o and Harrow \cite{HB} have shown that 
the approximation of $2$-local Hamiltonians on expanding graphs 
(here expanding in the standard definition) 
lies in $NP$, with
an error depending on the second eigenvalue of the interaction graph.
This is a complementary result to our theorem 
\ref{thm:approxBPinNP}, since on one hand, there is no commutation 
requirement in \cite{HB}, 
but on the other hand they only handle the case of graphs, 
while higher locality might be crucial in the context of the 
problems studied here. In particular, it is unclear whether it is possible 
to derive a result about approximating 
$k$-local Hamiltonians on expanders from the results of \cite{HB}, 
since the gadgets allowing to move from $k-$ to $2$-local \cite{Terhalk} 
change the topology of the graph.  

We note that our Theorem (\ref{thm:approxBPinNP}) reduces in 
$2$-local case to a $0$ error approximation, 
namely an exact verification procedure; in other words, it 
reduces to the result of (\cite{Bra}). This is because 
bi-partite expansion is essentially 
maximal for any $2$-local $CLH$, since in this case two 
different constraints cannot intersect on more than on qudit. 
This is another indication that at least for the CLH case, 
bi-Partite expansion (Definition (\ref{def:expbi})) 
seems to be the appropriate one to 
use, compared to that of regular expansion, which is used by \cite{HB}
for the general LH case. 

As for quantum error-correcting codes, quantum LDPC (Low Density Parity Check) 
codes, namely codes with local check terms, were already 
studied in the literature (see \cite{Zemor2,Zemor3} and references therein, 
as well as \cite{Kovalev1,Kovalev2}). 
However, as mentioned before, we are not aware of any prior work on 
quantum locally-testable codes.

\subsection{Discussion}\label{sec:discussion}

\paragraph{Complexity of CLH}
One might hope that Theorem \ref{thm:approxBPinNP}
could be extended to 
show that the approximation of the CLH problem on a {\it general} topology
is in NP, perhaps by bridging between our results on expanders to the 
easy case of approximation on lattices. 
We note that a hint about the difficulty of such a result comes 
from the status of the Unique-Game-Conjecture \cite{Khot} which is 
known to be easy on both cases (albeit for interaction graphs 
and not hypergraphs), though conjectured to be 
$NP$-hard in general. 
Still, it may be possible to achieve some weaker 
approximation of general graph using ``bridging'' between the case 
of lattices and expanders, perhaps by
sub-exponential witnesses, following the work of \cite{Tre},\cite{ABS}.
We note here that the "bridging" that we did achieve for stabilizer code in Theorem (\ref{thm:robust}), indeed
depends on these instances being more structured, 
and as such, does not follow
through for general CLH instances.
It thus remains an important open problem to generalize our results to 
all interaction graphs. 
Of course, clarifying the exact case remains a major open problem. 

\paragraph{Quantum PCP and quantum LTCs}
Though one cannot derive using Theorem (\ref{thm:approxBPinNP}) a bound for general PCP,
it does place severe restrictions on any commutative version of a quantum PCP: i.e. a poly-time
algorithm taking any instance of $LH$ (or even just $CLH$), 
to a $CLH$ instance with constant promise gap.
Specifically, assuming $CLH$ is not contained in $NP$, 
our results indicate that the 
output of such a quantum PCP is unlikely to assume the form of 
a bi-partite small-set expander, unlike in the classical PCP proofs. 
First, Theorem \ref{thm:approxBPinNP} implies that 
for a given expansion error $\eps$,
the promise gap must be of the order of $\eps$ which can be a 
very small constant.
Moreover, one may speculate that if the output of such a PCP routine, has some small expansion error,
than this error may be set arbitrarily low.  
This would imply that as $\eps$ goes to $0$,
 the promise gap goes to $0$, contradicting the existence of such $PCP$s.

We note that to the best of our knowledge, the question of general 
upper bounds on robustness 
of quantum LTCs was not considered before, though of course 
robustness of specific quantum codes is very well understood; 
interestingly, the prominent example
of the Toric codes has inherently small robustness, which is tightly related 
to the fact that the code exhibits topological order, and thus only the
``end points'' of an error matter \cite{Kit}.   
Quantum error correcting codes can be viewed as a generalized topological 
order (see \cite{Has}), and perhaps some bounds on robustness 
phenomena can be carried over 
to all codes, not necessarily stabilizer LTCs
as in our Theorem \ref{thm:robust},
perhaps by using the fact that the code has large distance;  
It will be very interesting to derive such a result. 
Also, it will be most interesting to come up with explicit
(or even non-explicit) constructions of quantum 
codes with very good robustness. 

Of course, strengthening of our results in various other directions 
are called for: be it generalizing to all weights of errors, 
to various types of constraints, and of course, removing the commuting 
requirement. 

In a later version we will  define Quantum PCPP; 
and make explicit the connection between quantum PCPP and quantum LTCs.  In particular,
this implies that the restrictions on LTC codes with various types of constraints lead to bounds on PCPPs 
using similar constraints. 
The results we have presented here (\ref{thm:QECC}) and (\ref{thm:robust}) provide unexpected limitation on
stabilizer LTC codes.
The connection to quantum PCPP strongly motivates extensions and further 
clarifications of our results on qLTCs and their implications to qPCPs. 

Finally, we believe the results we presented here
hint that possibly, the problem of CLH and clarification of 
the importance of the commutation relations, might be more relevant to the 
resolution of the quantum PCP conjecture than was thought before. 
For example, if the qPCP conjecture were true, and the method 
of proof was indeed through a Quantum PCPP, and if indeed CLH is not 
Quantum NP hard (all of the above conjectures might seem quite reasonable
to believe in at this point),   
then this must imply the existence of novel excellent qLTC 
codes whose local check terms are non-commuting, by which 
such a qPCP theorem would be proven. Can such codes exist? 
Studying this question directly can shed light on the likelihood of 
the above mentioned conjectures to be true. 

{~}

\noindent
\textbf{Organization of paper} 

In Section \ref{sec:bg} we provide the background. 
Section \ref{sec:appBP}  
proves theorem \ref{thm:approxBPinNP}, showing that approximation 
of CLH on expanders lies in NP.  
Section \ref{sec:NPhard} proves Theorem \ref{thm:NPhard}, 
showing that the approximation problem we handle is NP-hard. 
Section \ref{sec:QECC} provides bounds on the robustness of 
quantum LTC codes on expanders, and 
Section \ref{sec:robust} provides an absolute bound on robustness 
of stabilizer LTC codes regardless of the expansion of their 
underlying graph. 

\section{Background}\label{sec:bg}

A $(k,d)$-local Hamiltonian on $n$ qudits
is a Hermitian matrix $H = \sum_i H_i$ operating on a Hilbert space 
${\cal H} = {\mathbf{C}^d}^{\otimes n}$ 
of $n$ $d$-dimensional particles, where we assume for all $i$, 
$\left\|H_i\right\|=1$ and  
$H_i = h_i \otimes I_{n-k}$ is Hermitian operating non-trivially only on 
$k$ particles. 

\begin{definition}
\textbf{The commuting local Hamiltonian problem}
In the $(k,d)$-local Hamiltonian problem on $n$ $d$-dimensional qudits 
we are given a $(k,d)$-local Hamiltonian 
and two 
constants $a,b$, $a-b = \Omega \left(\frac{1}{poly(n)}\right)$. We are asked to 
decide whether the lowest eigenvalue of $H$ is at least $b$ or at most $a$, 
and we are promised that one of the two occurs. 
In the commuting case, we are guaranteed that $[H_i,H_j]=0$ 
for all $i,j$, and W.L.O.G, we also assume that $H_i$'s are projections. 
$a$ and $b$ can be taken to be $0$ and $1$ in this case, respectively. 
\end{definition}

\subsection{Interaction graphs and their expansion} 
One can consider various definitions for the interactions graphs 
underlying a local Hamiltonian.

\begin{definition}\label{def:bipgraph}
\textbf{Bi-Partite Interaction Graph}
For a $(k,d)$-local Hamiltonian $H = \left\{H_i\right\}_i$ 
we define the bi-partite interaction graph $G=(L,R;E)$ as follows: 
$R$, the nodes on the right, correspond to the $n$
particles of ${\cal H}={\mathbf{C}^d}^{\otimes n}$ 
and 
$L$ corresponds to the set of local terms $\left\{H_i \right\}_i$. 
An edge exists between a constraint $l\in L $ and a particle $r\in R$ 
if $H_l$ acts non-trivially (namely, not as the identity) on 
$r$. Note that the left degree is equal to $k$; 
We denote the right degree to be $D_R$. 
\end{definition}

\begin{definition}\label{def:expbi}
\textbf{Bi-partite small-set expansion}
\noindent
A bi-partite graph $G=(L,R;E)$
is said to be $\epsilon$-small-set-expanding, if for every subset of particles 
$S\subseteq R$ of size $|S|\leq k$, $|\Gamma(S)| \geq |S| D_R (1-\epsilon)$.
\end{definition}

Bi-partite expanders have been defined and used e.g., in 
\cite{Spi}, \cite{CRVW} to 
construct locally-testable classical codes. Here 
we require expansion to hold only for small sets. 

\subsection{Commuting terms and \texorpdfstring{$C^*$}{TEXT}-algebras}\label{sec:bv}
We now state the lemma of Bravyi and Vyalyi \cite{Bra} precisely: 
\begin{lemma}\label{lem:BV}
Let $H_i,H_j$ be two local terms on Hilbert space ${\cal H}$, $[H_i,H_j]=0$.
Let ${\cal H}_{int}$ denote the intersection of $supp(H_i)\cap supp(H_j)$, where $supp(H)$ is the subset of qudits examined non-trivially by $H$.
Then, there exists a direct-sum decomposition
$${\cal H}_{int} = \bigoplus_{\alpha} {\cal H}_{int}^{\alpha},$$
where for each $\alpha$ we have:
$${\cal H}_{int}^{\alpha} = {\cal H}_{int}^{\alpha,i}\otimes {\cal H}_{int}^{\alpha,j},$$
such that
both $H_i,H_j$ preserve all subspaces ${\cal H}_{int}^{\alpha}$, and moreover,
for each $\alpha$ $H_i|_{\alpha}$ is non-trivial only on the Hilbert space ${\cal H}_{int}^{\alpha,i}$, 
whereas $H_j|_{\alpha}$ is non-trivial only on the Hilbert space ${\cal H}_{int}^{\alpha,j}$.
\end{lemma}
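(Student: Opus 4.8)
The plan is to derive Lemma~\ref{lem:BV} from the Artin--Wedderburn structure theory of finite-dimensional $C^*$-algebras, following \cite{Bra}. First I would split the qudits into four groups and write ${\cal H} = {\cal H}_A \otimes {\cal H}_{int} \otimes {\cal H}_B \otimes {\cal H}_{rest}$, where ${\cal H}_A$ carries the qudits in $supp(H_i)$ only, ${\cal H}_B$ those in $supp(H_j)$ only, ${\cal H}_{int}$ the shared qudits, and ${\cal H}_{rest}$ the remaining ones (on which both terms act trivially and which can be discarded). Then $H_i = h_i \otimes I$ with $h_i$ supported on ${\cal H}_A \otimes {\cal H}_{int}$, and $H_j = I \otimes h_j$ with $h_j$ supported on ${\cal H}_{int} \otimes {\cal H}_B$.

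Next I would introduce the algebra induced by $H_i$ on the intersection. Fixing an orthonormal basis $\{|a\rangle\}$ of ${\cal H}_A$, write $h_i = \sum_{a,b} |a\rangle\langle b| \otimes (h_i)_{ab}$ with $(h_i)_{ab} \in {\cal B}({\cal H}_{int})$, and let ${\cal A}_i$ be the \emph{unital} $*$-subalgebra of ${\cal B}({\cal H}_{int})$ generated by all the $(h_i)_{ab}$; since $h_i$ is Hermitian, $(h_i)_{ab}^\dagger = (h_i)_{ba}$, so the generating set is closed under adjoint and ${\cal A}_i$ is a bona fide finite-dimensional $C^*$-algebra. Define ${\cal A}_j$ analogously from $h_j = \sum_{c,d} (h_j)_{cd} \otimes |c\rangle\langle d|$ over a basis of ${\cal H}_B$. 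The crucial observation is that $[H_i,H_j]=0$, restricted to ${\cal H}_A \otimes {\cal H}_{int} \otimes {\cal H}_B$, expands as $\sum_{a,b,c,d} |a\rangle\langle b| \otimes [(h_i)_{ab},(h_j)_{cd}] \otimes |c\rangle\langle d| = 0$, and since the operators $|a\rangle\langle b|$ on ${\cal H}_A$ and $|c\rangle\langle d|$ on ${\cal H}_B$ are linearly independent, this forces $[(h_i)_{ab},(h_j)_{cd}]=0$ for all $a,b,c,d$. Hence every generator of ${\cal A}_i$ commutes with every generator of ${\cal A}_j$, i.e. ${\cal A}_j \subseteq {\cal A}_i'$, the commutant of ${\cal A}_i$ inside ${\cal B}({\cal H}_{int})$.

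Finally I would invoke Wedderburn's theorem: a unital $*$-subalgebra ${\cal A}_i \subseteq {\cal B}({\cal H}_{int})$ induces a decomposition ${\cal H}_{int} = \bigoplus_\alpha {\cal H}_{int}^{\alpha,i} \otimes {\cal K}_\alpha$ in which, block by block, ${\cal A}_i$ acts as the full matrix algebra on ${\cal H}_{int}^{\alpha,i}$ and as the identity on ${\cal K}_\alpha$, while its commutant ${\cal A}_i'$ acts as the identity on ${\cal H}_{int}^{\alpha,i}$ and as the full matrix algebra on ${\cal K}_\alpha$. Setting ${\cal H}_{int}^{\alpha,j} := {\cal K}_\alpha$ yields exactly the decomposition in the statement. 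Both $H_i$ and $H_j$ preserve every block ${\cal H}_{int}^{\alpha}$, because every $(h_i)_{ab}\in {\cal A}_i$ and every $(h_j)_{cd}\in {\cal A}_j \subseteq {\cal A}_i'$ does; and within block $\alpha$, $h_i$ is supported on ${\cal H}_A \otimes {\cal H}_{int}^{\alpha,i}$ while $h_j$ is supported on ${\cal H}_{int}^{\alpha,j}\otimes {\cal H}_B$, which is precisely the claimed ``disconnection''. The main obstacle is the middle step rather than the last one: one must keep scrupulous track of the ordering of the tensor factors ${\cal H}_A, {\cal H}_{int}, {\cal H}_B$ so that the elementwise commutation $[(h_i)_{ab},(h_j)_{cd}]=0$ genuinely follows, and one must check that ${\cal A}_i$, as defined, is unital and $\dagger$-closed so that the structure theorem applies verbatim. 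Once this is in place, the decomposition is an immediate corollary of Wedderburn, which I would cite rather than reprove.
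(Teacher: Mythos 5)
Your proposal is correct, and it is essentially the argument the paper relies on: the paper does not prove Lemma~\ref{lem:BV} itself but cites it from Bravyi and Vyalyi \cite{Bra}, whose proof is exactly the route you take — form the induced $*$-algebras $\calA_i,\calA_j$ on ${\cal H}_{int}$, deduce $[(h_i)_{ab},(h_j)_{cd}]=0$ from linear independence of the $\ket{a}\bra{b}\otimes\ket{c}\bra{d}$, and apply the structure theorem for finite-dimensional $C^*$-algebras to get the block decomposition with $\calA_j$ inside the commutant. The details you flag (closure of the generating set under $\dagger$, unitality, and the tensor-factor bookkeeping) are handled correctly.
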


\begin{figure}[ht]
\center{
 \epsfxsize=6in
 \epsfbox{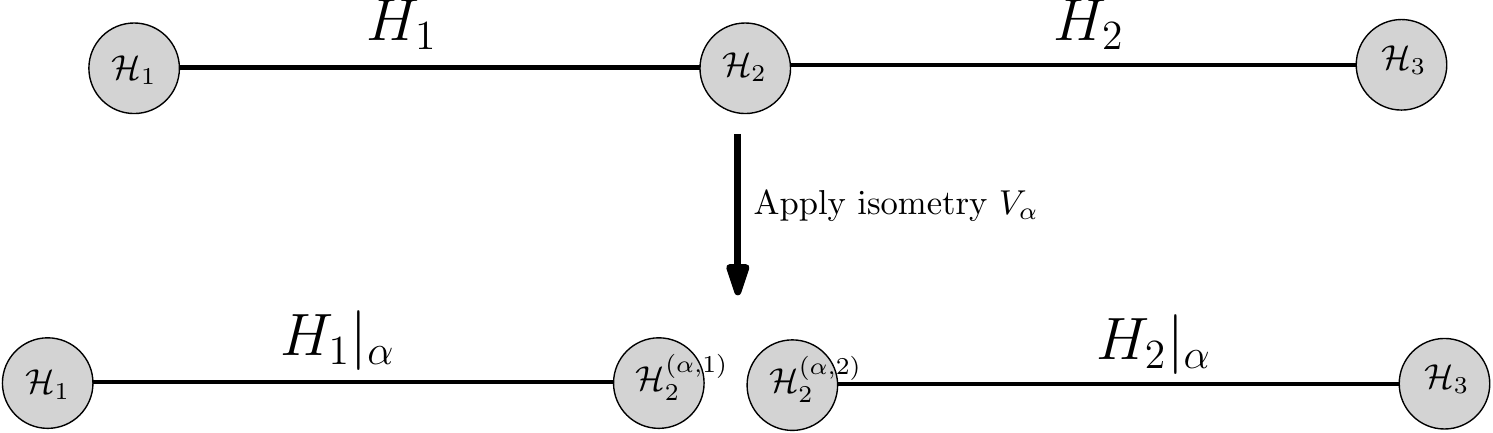}}
 \caption{\label{fig:ex2} 
An example of lemma (\ref{lem:BV}): a pair of $2$-local Hamiltonians, are restricted to a subspace $\alpha$ on their intersection.  Inside
this subspace, they act on separate subsystems.} 
\end{figure}

In this paper we apply Lemma (\ref{lem:BV}) 
in the following context: 
\begin{corollary}\label{cor:BV}
Let $H$ be a $CLH(k,d)$ instance, and let $H_0$ be a local term such that for any $H_i\in H \backslash \left\{H_0\right\}$ that intersects
$H_0$, the intersection is at most on one qudit.
Then there exists a direct-sum decomposition, on each of the qudits of $H_0$, preserved by all local terms of $H$, such that the restriction of $H_0$, to any such tensor-product subspace, is disjoint from any other term in $H$.
\end{corollary}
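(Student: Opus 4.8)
The plan is to obtain the corollary by applying the mechanism behind Lemma~\ref{lem:BV} separately to each qudit in the support of $H_0$. Write $supp(H_0)=\{q_1,\dots,q_m\}$ with $m\le k$. Fix a qudit $q_j$. Any term $H_i\in H\setminus\{H_0\}$ that touches $q_j$ and also touches $H_0$ must, by the hypothesis, meet $H_0$ \emph{only} at $q_j$, so the pair $(H_0,H_i)$ has intersection exactly $\{q_j\}$ and Lemma~\ref{lem:BV} yields a tensor decomposition ${\cal H}_{q_j}=\bigoplus_{\alpha_j}{\cal H}_{q_j}^{\alpha_j,0}\otimes{\cal H}_{q_j}^{\alpha_j,1}$ with $H_0|_{\alpha_j}$ supported on the first factor and $H_i|_{\alpha_j}$ on the second. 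The key point I would stress is that, inspecting the structure-theorem proof of Lemma~\ref{lem:BV}, this decomposition may be chosen to depend \emph{only} on the $*$-algebra ${\cal A}_j\subseteq{\cal B}({\cal H}_{q_j})$ generated by the $q_j$-components of $H_0$ (namely ${\cal A}_j$ acts as $\bigoplus_{\alpha_j}{\cal B}({\cal H}_{q_j}^{\alpha_j,0})\otimes I$ and its commutant ${\cal A}_j'$ as $\bigoplus_{\alpha_j}I\otimes{\cal B}({\cal H}_{q_j}^{\alpha_j,1})$); hence one and the same decomposition of ${\cal H}_{q_j}$ works simultaneously for every neighbour of $H_0$ that meets it at $q_j$. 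I would perform this construction independently for each $j$, which is harmless since the $m$ decompositions live on distinct tensor factors.

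Next I would check the two required properties. For a neighbour $H_i$ meeting $H_0$ at the single qudit $q_j$, write $H_0=\sum_a A_a\otimes P_a$ and $H_i=\sum_b Q_b\otimes B_b$ with $P_a,Q_b$ acting on $q_j$, $A_a$ on $supp(H_0)\setminus\{q_j\}$, $B_b$ on $supp(H_i)\setminus\{q_j\}$, and the families $\{A_a\},\{B_b\}$ linearly independent; since these two registers are disjoint, the Bravyi--Vyalyi commutation step (expand $0=[H_0,H_i]=\sum_{a,b}A_a\otimes[P_a,Q_b]\otimes B_b$ and use linear independence of $\{A_a\otimes B_b\}$) gives $[P_a,Q_b]=0$, so $Q_b\in{\cal A}_j'$. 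Thus $H_i$ restricted to $q_j$ is block-diagonal in the $\alpha_j$-decomposition and acts trivially on every ${\cal H}_{q_j}^{\alpha_j,0}$. On the other hand $H_0$'s own $q_j$-components lie in ${\cal A}_j$, so $H_0$ is block-diagonal as well and, on each block, is supported on the factor ${\cal H}_{q_j}^{\alpha_j,0}$ (re-expanding $H_0$ about each of its qudits shows it is trivial on every ${\cal H}_{q_{j'}}^{\alpha_{j'},1}$). Terms of $H$ disjoint from $supp(H_0)$ act trivially on all of $q_1,\dots,q_m$.

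Finally I would assemble the pieces: for a choice $\vec\alpha=(\alpha_1,\dots,\alpha_m)$, let ${\cal H}^{\vec\alpha}=\bigotimes_j{\cal H}_{q_j}^{\alpha_j}$ tensored with the identity on the remaining qudits. By the previous paragraph every term of $H$ preserves ${\cal H}^{\vec\alpha}$, and inside it $H_0$ is supported on the tensor factor $\bigotimes_j{\cal H}_{q_j}^{\alpha_j,0}$, while every other term either avoids $supp(H_0)$ entirely or acts, on its unique intersection qudit $q_j$, only on ${\cal H}_{q_j}^{\alpha_j,1}$; either way it does not touch $\bigotimes_j{\cal H}_{q_j}^{\alpha_j,0}$. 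Hence within ${\cal H}^{\vec\alpha}$ the term $H_0$ is disjoint from every other term of $H$, as claimed. I expect the one genuinely delicate point to be exactly the simultaneity issue flagged above — ensuring a single decomposition of ${\cal H}_{q_j}$ serves all neighbours intersecting $H_0$ there — which is resolved by taking that decomposition to be the Wedderburn decomposition of ${\cal A}_j$; the remaining steps (choosing the $A_a,B_b$ linearly independent, and the disjoint-support bookkeeping across the $m$ qudits) are routine.
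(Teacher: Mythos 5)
Your proposal is correct and follows essentially the same route as the paper, which simply applies Lemma~\ref{lem:BV} once per qudit of $H_0$ (grouping all other terms acting on that qudit together). Your explicit treatment of the simultaneity issue --- taking the decomposition of ${\cal H}_{q_j}$ to be the one induced by the algebra generated by the $q_j$-components of $H_0$, so that every neighbour's component automatically lands in the commutant --- is exactly the detail the paper's one-line proof leaves implicit.
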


\begin{proof}
For each particle $q$ examined by $H_0$, 
let the term $H_1$ be the sum of all local terms acting on $q$ 
except $H_0$, and apply the lemma one by one on each qudit this way. 
\end{proof}

Note, that the above corollary does not require that 
any two local terms acting on a qudit in $H_0$, 
intersect only on that qudit. 

\subsection{stabilizer quantum error correcting codes}\label{sec:QECCdefs} 

\begin{definition}\label{def:stab} 
\textbf{stabilizer Code}
The group $\Pi^n$ is the $n$-fold tensor product of Pauli operators $A_1\otimes A_2 \otimes \hdots \otimes A_n$, where $A_i\in \left\{I,X,Y,Z\right\}$.
along with multiplicative factors $\pm 1, \pm i$ with matrix multiplication as group operation.
A stabilizer code $C$ 
is defined by a set of commuting elements in $\Pi^n$; 
the set is denoted by 
${\cal G}$.   
The group generated by ${\cal G}$ is an Abelian group denoted $A\subset \Pi^n$. 
The codespace is defined as the mutual $1$-eigenspace of all elements in $A$
(alternatively, in ${\cal G}$).   
An element of ${\cal E} \in \Pi^n$ is said to be an error if 
it does not commute with at least one element of ${\cal G}$. 
i.e. ${\cal E} \notin \mathbf{Z}(A)$, where 
$\mathbf{Z}(A)$ is the centralizer of $A$.
An element of ${\cal E} \in \Pi^n$ is said to be a logical operation, if 
it commutes with all of ${\cal G}$, but not generated by ${\cal G}$, i.e., 
${\cal E} \in \mathbf{Z}(A)-A.$
A stabilizer code is said to be $k$-local if each term $g\in {\cal G}$
is a tensor product of $n$ Pauli operators in which exactly 
$k$ are different than $I$.
\end{definition}

The Pauli group can be generalized to particles of any dimensionality, 
and thus the above definition can be generalized to work over any dimension 
$d$: 

\begin{definition}\label{def:generalpauli}
\textbf{The Pauli group generalized to $Z_d$}
Let $X^k_d:|i\rangle \mapsto |i+k~modulo~ d\rangle , 
P_d^{\ell}|j\rangle \mapsto w_d^{j\ell}|j\rangle$ 
be the generalized bit and phase flip operators on the 
$d$-dimensional Hilbert space, where $w_d$ is the primitive $d$th root of unity.
Let $\Pi_d$ be the group generated by those operators (and the necessary 
multiplicative factors, namely, all roots of unity of order $d$.) 
The group $\Pi_d^n$ is the $n$-fold tensor product of Pauli operators $A_1\otimes A_2 \otimes \hdots \otimes A_n$, where $A_i\in \left\{X_d^kP_d^\ell\right\}$
along with the multiplicative factors.  
\end{definition} 

The definitions \ref{def:stab} can be generalized 
to any dimensionality of particles $d$. 
We can now define the distance of the code: 

\begin{definition}\label{def:distance}
\textbf{Distance}
Let $C$ be a $k$-local stabilizer code with generating 
set ${\cal G}\subset \Pi^n$.
For an error ${\cal E} \in \Pi^n$, let $wt({\cal E})$ 
denote the size of the non-trivial support of ${\cal E}$,
and let $wt({\cal E}_{{\cal G}})$ denote its minimal 
weight {\it modulo} the group $A$ generated by ${\cal G}$, 
$C$ is said to have a constant relative distance $\delta>0$, if for any
${\cal E} \in \mathbf{Z}(A) - A$, we have 
$wt({\cal E}_{\cal G})\geq \delta n$.
\end{definition}

We note that a code with distance $1$ cannot correct any error 
(see \cite{Got})
and so we can assume the absolute distance is strictly larger than $1$. 
We also note that we can assume that there is no qudit $q$ and a state 
on it $|\alpha\rangle $such that 
all states in the code 
look like $|\alpha\rangle$ tensor with some state on the remaining qudits; 
We say that in this case, the qudit is {\it trivial} for the code.

\subsection{Notation}
Throughout the paper we shall use the following notation:
$d$ is the dimension of the qudits involved.
For a bi-partite graph we denote $G=(L,R;E)$ where $L$ denotes the 
left set of vertices of size $|L|=m$ (corresponding to constraints in the
 text), 
$R$ denotes the right vertices $|R| = n$ (corresponding to particles), 
and $E$ is the set of edges between $L$ and $R$.  
$D$ will denote the degree of a graph in more generality; 
$D_R$ will denote the right degree of a Bi-partite graph, 
which is assumed in this paper to be constant.

Given $S\subseteq R (L)$ in a bi-partite graph,   
$\Gamma(S)$ denotes the neighbor set of $S$ in $L (R)$. 
${\cal N}(q)$ will denote the neighborhood of $q$ in $R$, namely 
all the qudits participating in all the constraints acting on $q$. 
$\epsilon$ (and sometimes $\mu$) 
will be used to denote the expansion error for bi-partite 
graphs (as in Definition \ref{def:expbi}). 
$\delta$ will be used to denote the relative distance of a code. 
$\gamma$ will be used to denote the promise gap, or alternatively 
the approximation error of a $CLH$ instance, i.e. the fraction of terms we are allowed to throw away. 

\section{Approximate CLH}

\subsection{Approximating CLH on Expanders}\label{sec:appBP}

We can now prove Theorem (\ref{thm:approxBPinNP}). 
As mentioned in the introduction, 
the proof relies on the simple important observation regarding 
highly expanding small set bi-partite expanders, namely, 
that not too many terms intersect a term in more than one node 
(Claim \ref{fact:essence}). We start by proving the following: 

\begin{proof}(Of Claim \ref{fact:essence}) 
By definition, and by the requirement on expansion, 
the average degree of a vertex in $\Gamma(S)$ w.r.t.
$S$ is $\frac{D_R |S|}{|\Gamma(S)|}\le \frac{1}{1-\epsilon} \leq 1+2\epsilon$,
where the second inequality follows from $\epsilon< \frac{1}{2}$.
Since each vertex in $\Gamma(S)$ has degree w.r.t. $S$ at least $1$,
the claim follows. 
\end{proof} 

We now define a notion of isolation, which will help us handle 
single constraints: 

\begin{definition}\label{def:isolate}
\textbf{Isolated constraints and particles}
Let$(L,R;E)$ be a bi-partite  graph.
A constraint $g\in L$ is
\textbf{isolated}
if for any constraint $v\in L$ except $g$, 
we have $|\Gamma(v) \cap \Gamma{g}| \leq 1$.
We define the \textbf{isolation penalty} of
$g$, to be
the minimal number of constraints in $L$ that we need to remove 
so that $g$ is isolated. 
\end{definition}

We are now ready to prove the theorem:
\begin{proof}(Of Theorem \ref{thm:approxBPinNP})
Let $H$ be an instance of $CLH(k,d)$, and let $(L,R;E)$ be its bi-partite interaction graph, where the degree of each
$v\in R$ is $D_R$.
We perform an iterative process. 
At step $t$, we have a set of remaining terms $L_{rem}(t)$ 
acting on the remaining Hilbert space ${\cal H}_{rem}(t)$. 
We also have a set $L_{bad}(t)$ which are terms we have collected 
so far that we want to throw away. 
W.l.o.g., we shall find a state whose energy approximates the ground energy of $H$, but this can be applied to the approximation of any eigenspace.
Repeat the following: 
\begin{enumerate} 
\item \textbf{isolate}
Pick an arbitrary local term $v$ in $L_{rem}(t)$.   
and isolate it (Definition \ref{def:isolate})
by removing as few as possible terms from $L$
(putting them in a set of terms which we call $L_{bad}(t)$). 
\item \textbf{isometries}
Now that the term is isolated, the conditions of Corollary \ref{cor:BV} hold:
By the corollary, there is a tensor product of $1$-local isometries on the qudits of $v$: $W_v = \bigotimes_i W_i$ 
that is contained in the zero eigenspace of $L_{rem}(t)$,
such that after applying these isometries, the term $v$ acts on a disjoint subsystem
than the rest of $L_{rem}(t)$.
We conjugate $v$, and each term of $L_{rem}(t)$ by $W_v$.
We then remove any 
triviality we encounter, be it removing qudits from Hamiltonian terms  
or entire terms altogether. We call this {\bf Pruning}.
\item \textbf{update}
We set $L_{rem}(t+1)$ as the set of remaining 
local terms of $L_{rem}(t)$ after this pruning, 
and then update ${\cal H}_{rem}(t+1)$ as the support of $L_{rem}(t+1)$.
We set $L_{good}(t)$ to contain $v(t)$ restricted to the subspaces
of the qudits,  as in Corollary \ref{cor:BV}.  
\end{enumerate} 

We terminate when there is no longer any intersection between two different 
terms of $L_{rem}(t)$.
Clearly, this ends after at most polynomially many 
iterations, since the number of terms in $L_{rem}(t)$ decreases by at least $1$ 
each iteration. Let the number of iterations be $T$. 
We claim that a state {\it approximating}
the ground energy of the original Hamiltonian to within $|L_{bad}(T)|$,  
can be recovered from this 
procedure by finding the ground state of all terms in $L_{good}(T)$, 
and applying the inverse of the isometries applied along the way, which amounts to applying
a constant-depth quantum circuit.
This is true as long as the subspaces in the direct sum that were chosen 
contained the ground state of the current Hamiltonian; 
the $NP$ prover can provide the indices of the subspaces so that 
this holds. 

%Finally, it can be checked that $H \backslash L_{bad}(T)$ can be locally diagonalized:
%at each step $t\in [T]$ we apply a $1$-local isometry \textbf{preserved} by all local terms, and at step $T$, 
%the restricted terms, now all in $L_{good}(t)$ are completely disjoint.

\begin{figure}[ht]
\center{
 \epsfxsize=6in
 \epsfbox{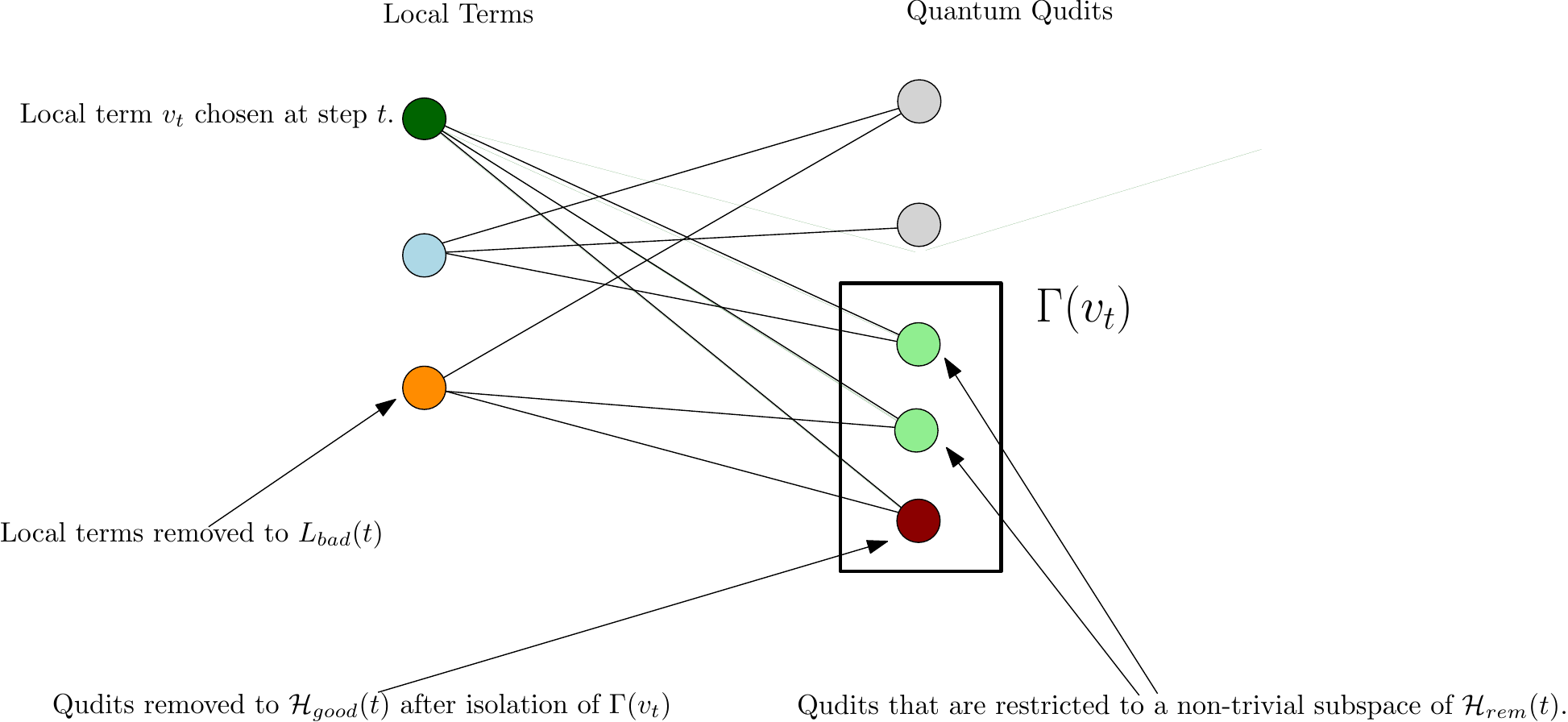}}
 \caption{\label{fig:ex1} 
The isolation procedure: after removing the local terms to $L_{bad}$, $v_t$ becomes isolated. 
We then apply local isometries to each of its qudits in $\Gamma(v_t)$.  One of the qudits is exclusively examined by $v_t$ after isolation,
so it is removed altogether to ${\cal H}_{good}$.} 
\end{figure}

\paragraph{Bounding the approximation error}
The non-trivial part of the proof 
is to upper bound $|L_{bad}(T)|$.
Let us measure the following ratio at each iteration $t$:
the number of local terms moved to $L_{bad}(t)$ during step "Isolate",
divided by the overall particle dimension (namely, sum of local dimensions
of particles) which we reduced.
We claim that this ratio is at most $2 D_R \epsilon$, and so
\begin{claim}
We remove at most $2 D_R \epsilon$ terms per local dimension we remove.
\end{claim}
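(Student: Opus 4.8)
The plan is to prove the per‑iteration inequality stated in the claim; summing it over the (at most polynomially many) iterations then yields $|L_{bad}(T)| \le 2D_R\epsilon$ times the total particle dimension removed, which is at most $nd$, and this is the quantitative input needed for the approximation guarantee of Theorem \ref{thm:approxBPinNP}. So fix an iteration $t$ with chosen term $v\in L_{rem}(t)$, let $S=\Gamma(v)$ be its current support (so $|S|\le k$), and I would estimate the numerator and denominator of the ratio separately.

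\textbf{The numerator.} By Definition \ref{def:isolate}, the terms moved to $L_{bad}$ during the \textbf{Isolate} step are exactly the $v'\neq v$ in $L_{rem}(t)$ with $|\Gamma(v')\cap S|\ge 2$, i.e.\ the neighbours of $S$ in the current graph having degree strictly larger than $1$ with respect to $S$, except for $v$ itself. Since $|S|\le k$ and the bi-partite graph is $\epsilon$-small-set expanding, $S$ is $\epsilon$-expanding, so Fact \ref{fact:essence} bounds the number of such neighbours by $2\epsilon|\Gamma(S)|\le 2\epsilon D_R|S|$. Hence at most $2\epsilon D_R|S|$ terms enter $L_{bad}$ in this iteration.

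\textbf{The denominator.} Only the qudits in $S$ change during iteration $t$: for each $q\in S$, Corollary \ref{cor:BV} gives a decomposition $\calH_q=\bigoplus_\alpha \calH_q^{\alpha,v}\otimes\calH_q^{\alpha,\mathrm{rest}}$, and after the prover's choice of the index $\alpha$ and the ensuing pruning the contribution of $q$ to the total particle dimension drops from $\dim\calH_q$ to $\dim\calH_q^{\alpha,\mathrm{rest}}$. I claim this drop is at least $1$ for every $q\in S$. If $\dim\calH_q^{\alpha,v}\ge 2$ then $\dim\calH_q\ge 2\dim\calH_q^{\alpha,\mathrm{rest}}\ge\dim\calH_q^{\alpha,\mathrm{rest}}+1$. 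Otherwise $\dim\calH_q^{\alpha,v}=1$, i.e.\ $v$ is trivial on $q$ inside the chosen subspace; but $q$ survived pruning, so $v$ acts non-trivially on all of $\calH_q$, which forces a second summand $\beta\neq\alpha$, whence $\dim\calH_q\ge\dim\calH_q^{\alpha,\mathrm{rest}}+\dim\calH_q^{\beta,v}\dim\calH_q^{\beta,\mathrm{rest}}\ge\dim\calH_q^{\alpha,\mathrm{rest}}+1$. Summing over $q\in S$, the total particle dimension removed during iteration $t$ is at least $|S|$.

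\textbf{Conclusion and the main obstacle.} Dividing, the ratio at iteration $t$ is at most $2\epsilon D_R|S|/|S|=2D_R\epsilon$. The step I expect to be delicate is the denominator bound: particles are in general not deleted during an iteration, only their local dimension shrinks, so the whole accounting rests on certifying that each iteration genuinely removes dimension — and the tight case is precisely the one where $v$ degenerates to the identity on a qudit inside the chosen subspace, which is why one must exploit the multiplicity of the $C^*$-algebraic direct sum of Corollary \ref{cor:BV} rather than only the factor $\calH_q^{\alpha,v}$; this amortized counting of removed \emph{dimensions} is the genuinely new ingredient. A secondary point that needs care is that Fact \ref{fact:essence} is being invoked for $S=\Gamma(v)$ as it sits in the pruned graph $L_{rem}(t)$, so one must check that the isolation and pruning steps (which only delete left-vertices and edges) do not spoil the small-set expansion used for the size-$\le k$ sets we actually query — or, at the least, reduce the numerator count back to $v$'s support in the original graph, noting that any $v'$ meeting $\ge 2$ of $v$'s current qudits already met $\ge 2$ of its original ones, and absorb the possible shrinkage of $v$'s support into the constant.
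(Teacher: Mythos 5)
Your proposal is correct and follows essentially the same route as the paper: the numerator is bounded by $2\epsilon D_R|S|$ via Fact \ref{fact:essence} applied to the current support of $v$ viewed as a set in the original graph $G(0)$ (which is legitimate since edges and vertices are only ever removed), and the denominator is bounded below by $|S|$ by showing each qudit of $S$ loses at least one local dimension. Your case analysis of the dimension drop (splitting on whether $\dim\calH_q^{\alpha,v}\ge 2$ or a second summand must exist) is a more explicit justification of the step the paper states in one line, but it is the same argument.
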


\begin{proof}
To see this, consider a term $v(t)$ in $L_{rem}(t)$ whose support 
is a set $S$ with $a$ vertices. 
Let $G(t)$ denote the bi-partite interaction graph of $L_{rem}(t)$.
We know that in $G(0)$, it was true that  
$|\Gamma(S)|\ge aD_R(1-\epsilon)$, and so 
in order to isolate $S$, it sufficed to remove
 $2\epsilon D_R a$ 
vertices. 
Since $G(t)$ is derived from $G(0)$ by removing vertices and edges, 
this bound still holds. 
So at each step, when isolating a term of size $a$, we remove at 
most  $2\epsilon D_R a$ constraints. 

On the other hand, let us calculate how many qudit dimensions we remove 
at such an iteration, out of the total number of qudit dimensions (namely, 
the sum of dimensions over the particles). 
For each qudit $q\in S$, $q$ is either
removed altogether, and thus its local dimension 
is reduced by $dim({\cal H}_q)$,
or it is restricted to some non-trivial subspace 
and maybe further divided to two sub-particles, one of which is removed
from ${\cal H}_{rem}(t)$; 
this act reduces the local
dimension by at least $1$. 
Thus, the total number of local dimensions removed by one application
of "isolate'' is at least $|S|=a$.
\end{proof}

The initial total sum of particle dimensions is $d|R|$.
By the claim, $|L_{bad}(T)|\leq 2 D_R \epsilon \cdot (d|R|)$.
Substituting $D_R = k |L|/|R|$, we have that
$|L_{bad}(T)| \leq 2k \epsilon d |L|$, or $\frac{|L_{bad}(T)|}{|L|} \leq 2kd \eps$, as desired.
\end{proof} 

\subsection{NP-hardness of approximation}
\label{sec:NPhard}

The proof of theorem \ref{thm:NPhard} 
relies on the fact that the output 
of the PCP reductions in Dinur's proof \cite{Din} 
are excellent small set Bi-partite expanders. 
The gap can be amplified to, say, a $1/3$,
and yet the expansion error $\epsilon$ can be made smaller than 
any constant, and certainly smaller than the required $\gamma(\epsilon)$.   
This is an easy observation given the proof 
in \cite{Din} and we leave the details of this for the journal version.

\section{Bound on the Robustness of stabilizer LTCs 
on Expanders}\label{sec:QECC}
In this section we define stabilizer LTC codes and their robustness, and 
prove Theorem \ref{thm:QECC}. 

\subsection{Definitions: stabilizer LTC codes}
\begin{definition}\label{def:weight}
\textbf{The weight of an error on stabilizer codes}
Let $C$ be a stabilizer code on $n$ qudits, with 
generating set ${\cal G} \subset \Pi^n$. 
Let $A$ be the associated stabilizer group. 
Let ${\cal E}\in \Pi^n$ be some element in the Pauli group. 
We denote by $wt({\cal E})$ the number of locations in which 
${\cal E}$ is non-identity. 
We denote by $wt_{{\cal G}}({\cal E})$ the weight of ${\cal E}$ modulo the group 
generated by ${\cal G}$, i.e., the minimum over the weight of all words 
which are equal to ${\cal E}$ modulo an element in the group $A$ generated by 
${\cal G}$; 
Likewise, we denote by $wt_{{\cal Z(G)}}({\cal E})$ the weight 
of ${\cal E}$ modulo $Z(A)$, the centralizer of $A$, 
the group generated by ${\cal G}$. 
\end{definition}

We now define the robustness of a stabilizer code. 

\begin{definition}\label{def:robust}
\textbf{Robust Local stabilizer Codes}
Let $C$ be a stabilizer code, with 
generating set ${\cal G} \subset \Pi^n$, such that each particle is 
examined by $D_R$ generators.
$C$ is said to be $r(\delta)$-robust, 
if for any ${\cal E}\in \Pi^n$ 
such that $wt_{Z({\cal G})}({\cal E})=\delta n$,
we have that ${\cal E}$ does not commute with at least 
$r(\delta) \cdot D_R \cdot wt_{{Z(\cal G})}({\cal E})$ 
generators $g\in {\cal G}$ (we say that those generators, namely constraints, 
are {\it violated}). We call $r(\delta)$ the robustness of the code for 
errors of size $\delta$.  
We sometimes refer to the robustness of a given error pattern
${\cal E}$ as
$r$ if the number of violated constraints for that error pattern 
is at least $r \cdot D_R \cdot wt_{Z({\cal G})}({\cal E})$. 
\end{definition}

\subsection{A useful fact about stabilizers} 

\begin{definition}
\textbf{Restriction of stabilizers}
For a Pauli operator $A$, let $A|_q$ denote the $q$-th component of the tensor product $A$, and let $A|_{-q}$ denote the tensor product of all terms except the $q$-th.
Similarly, for a generating set ${\cal G}$, we denote by ${\cal G}|_q$ as the set $\left\{g|_q |g\in {\cal G} \right\}$, and similarly for ${\cal G}|_{-q}$.

\end{definition}

We now prove a useful fact: that the restrictions to a given qudit 
$q$ of all the generators of a stabilizer code with distance larger than 
$1$ cannot all commute.  

\begin{fact}\label{fact:nocommute}
Let $C$ be a stabilizer code 
with minimal distance at least $2$.
Then for any qudit $q$, there exist two 
generators $g(q),h(q)$ 
acting non-trivially on $q$ such that $[g|_q, g'|_q] \neq 0$.
\end{fact}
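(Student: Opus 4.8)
The plan is to argue by contraposition: I would assume that all restrictions $\{g|_q : g \in \mathcal{G}\}$ pairwise commute, and derive that the qudit $q$ must be trivial for the code, contradicting the assumption that the code has distance at least $2$ (recall that a distance-$1$ code, or one with a trivial qudit, was explicitly excluded in Definition \ref{def:distance} and the remarks following it). The key structural input is that the generalized single-qudit Pauli operators $X_d^k P_d^\ell$ pairwise commute if and only if they all lie in a common maximal abelian subgroup of $\Pi_d$; up to conjugation by a single-qudit Clifford (an isometry $W$ on $\mathcal{H}_q$), such a subgroup is generated by either $X_d$ alone, or $P_d$ alone, or some fixed $X_d^a P_d^b$ — in any case, it is the group of ``diagonal'' operators in some fixed basis of $\mathcal{H}_q$.

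First I would apply this single-qudit observation: conjugating the entire code by $W$ on qudit $q$ (which changes nothing essential — it is a local isometry, so it preserves distance, the code structure, and the property that $q$ is nontrivial), I may assume every generator $g \in \mathcal{G}$ acts on $q$ as a power of $P_d$ (i.e. diagonally), possibly the identity. Second, I would note that this means the operator $X_d$ acting on qudit $q$ and as identity elsewhere — call it $\bar X_q$ — commutes with every generator $g$: on every qudit other than $q$ it is trivial, and on $q$ both $g|_q$ and $\bar X_q|_q = X_d$ — wait, these do not commute. Let me instead use the correct direction: the operator $\bar P_q$ (namely $P_d$ on qudit $q$, identity elsewhere) commutes with every generator, since on qudit $q$ every $g|_q$ is a power of $P_d$ and powers of $P_d$ commute, and off qudit $q$ the operator $\bar P_q$ is trivial. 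Hence $\bar P_q \in Z(A)$.

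Third, the weight of $\bar P_q$ is exactly $1$ (assuming $P_d \neq I$, which holds for $d \geq 2$), and more importantly $wt_{\mathcal{G}}(\bar P_q) \leq 1$. If $\bar P_q \in A$, I would argue that $q$ is a trivial qudit: an element of $A$ supported only on $q$ forces the codespace — the mutual $+1$ eigenspace — to be constrained on $q$ alone, so all codewords look like a fixed state on $q$ tensored with arbitrary states elsewhere, contradicting nontriviality. If $\bar P_q \in Z(A) \setminus A$, then it is a logical operator of weight $1$ (and even of weight $1$ modulo the group), which contradicts $dist(C) \geq 2$ directly via Definition \ref{def:distance}. Either way we reach a contradiction, which completes the contrapositive.

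The main obstacle I anticipate is the clean justification of the ``trivial qudit'' step — ruling out the case $\bar P_q \in A$. One must argue carefully that a stabilizer element supported on a single qudit genuinely decouples that qudit from the codespace; this uses that $A$ is abelian and that the codespace is the joint $+1$-eigenspace, so one can simultaneously diagonalize, and the single-qudit element $\bar P_q$ (or rather some power of it, since we need a $+1$ eigenvector condition) pins down the state on $q$. A secondary subtlety is the $d$-dimensional generalization of ``commuting single-qudit Paulis lie in a common maximal abelian subgroup'': for prime $d$ this is the standard symplectic-form argument, and for general $d$ one reduces to prime-power components; since the paper treats $k, d$ as constants and only needs existence of the conjugating isometry, I would not belabor this and simply invoke the structure of $\Pi_d$.
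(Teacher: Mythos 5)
Your overall architecture is the same as the paper's: argue by contraposition, produce a weight-one Pauli supported on $q$ that lies in the centralizer $Z(A)$, and then run the dichotomy (if it is in $A$ the qudit $q$ is trivial, which is excluded; if it is in $Z(A)\setminus A$ it is a weight-one logical operator, contradicting distance $\geq 2$). The one genuine difference is your normalization step, and it is both unnecessary and technically wrong in general: for composite $d$ it is false that pairwise-commuting single-qudit generalized Paulis lie in a subgroup conjugate to the diagonal (cyclic) group $\langle P_d\rangle$ — e.g.\ for $d=4$ the operators $X_4^2$ and $P_4^2$ commute but generate a non-cyclic abelian group, which cannot embed in $\langle P_4\rangle$ under any Clifford. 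The reduction to prime-power components you gesture at does not rescue this, since $4$ is already a prime power. The fix is exactly what the paper does: under the contrapositive hypothesis that all restrictions $\{g|_q\}_{g\in\mathcal{G}}$ pairwise commute, you do not need a canonical form at all — simply take $Q = g|_q$ for any generator $g$ acting non-trivially on $q$ and set $Q' = Q\otimes I_{-q}$; this is already a genuine Pauli of weight one that commutes with every generator (on $q$ by hypothesis, off $q$ trivially), and the rest of your argument goes through verbatim with $Q'$ in place of $\bar{P}_q$. With that substitution your proof coincides with the paper's.
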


\begin{proof}
Otherwise there is a qudit $q$ such that 
for any $a,b\in {\cal G}$ we have $[a|_q,b|_q]=0$. 
Let $Q=g|_q$ for some $g\in {\cal G}$ s.t. $Q$ is not the identity. 
We have that $Q'=Q\otimes I_{-q}$, namely the tensor product with identity 
on the other qubits, commutes with all $g\in {\cal G}$, and thus 
is in the centralizer of ${\cal G}$: $Q'\in {\cal Z(G)}$. 
However, $Q'$ cannot be insider the stabilizer, since otherwise 
$q$ is in 
a constant state (the $1$ eigenvector of $Q$) 
for all words in the code, and is thus trivial for the 
code (see Section \ref{sec:QECCdefs}). 
Hence, $Q'\in Z(A)-A$, where $A$ is the stabilizer group 
and $Z(A)$ its centralizer, and so the distance of the code is $1$. 
\end{proof}

\subsection{Proof of Theorem \ref{thm:QECC}}
In the proof we will make use of "sparse'' sets of constraints, 
defined as follows. 

\begin{definition}
\textbf{$L$-independent set of constraints}\label{def:Lind} 
\noindent The neighborhood of a constraint is all constraints which 
intersect it in at least one qudit. 
A set of constraints $U\subseteq L$ is said to 
be $L$-independent 
if for any two constraints in the set, their two neighborhoods do not 
intersect on any qudit. 

\end{definition}

\begin{proof}(Of theorem \ref{thm:QECC}) 

\paragraph{Generating the error}
We want to construct an error with large weight modulo the centralizer 
that will not violate too many constraints in ${\cal G}$.  
Let $C$ be a stabilizer code 
with a $k$-local generating set ${\cal G}$, such that the
bi-partite interaction graph of $S$ is a small-set bi-partite $\eps$-expander. 
Let $U$ be an $L$-independent set of constraints of size $\delta n$.
We note that since $\delta \leq \frac{1}{k^3 D_R}$ 
an $L$-independent set of this size must exist, by a simple greedy algorithm. 
For a given $u\in U$, and $i\in [k]$, let $\alpha_i(u)$ denote the number of 
check terms incident on qudit $i$
of $\Gamma(u)$ that have at least two representatives in $\Gamma(u)$.
Then for each $u$ we define $q(u)$ to be the qudit of minimal $\alpha_i(u)$ over $[k]$.
Let $T = \left\{q(u) | u\in U\right\}$.
Let us define an error pattern:
$$ {\cal E} = \bigotimes_{u\in U} u|_{q(u)}.$$

We first note that ${\cal E}$ is not inside the centralizer 
$C({\cal G})$, and therefore it is 
an error; This is true since by fact (\ref{fact:nocommute})
for each qudit $q$ in the support of ${\cal E}$, 
${\cal E}|_q$ does not commute with $h|_q$ for some $h\in {\cal G}$.
But since $T$ is induced by an $L$-independent set, $h$ 
does not touch any other qudit in the support of ${\cal E}$ except $q$, 
so this implies $[h,{\cal E}]=[h|_q,{\cal E}|_q]\neq 0$. 
We will now show that ${\cal E}$ has large weight modulo the group, 
but is penalized by a relatively small fraction of the check terms.

\paragraph{Weight Analysis}
By definition, we have that $wt({\cal E}) = |T|=|U|=\delta n$.
We need to show
\begin{equation}\label{eq:tweight}
wt_{C({\cal G})}({\cal E})= |T|
\end{equation}
Since  $\delta$ was chosen to be smaller than half the distance of the code 
$C$, $wt_{Z({\cal G})}({\cal E})= wt_{\cal G}({\cal E})$ 
and so it suffices to argue about the weight of ${\cal E}$ modulo 
${\cal G}$. 
 
Suppose on the negative that $wt_{\cal G}({\cal E})< |T|$.
Let ${\cal E'}$ be an error pattern which is equal to 
${\cal E}$ modulo ${\cal G}$ but has weight $<|T|$. 
In other words, let $\Delta \in S$ be a word in the group 
$S$ s.t. $\Delta {\cal E}={\cal E'}$. 
Since the weight of ${\cal E'}$ is strictly smaller than 
that of ${\cal E}$, there must be one qudit $q_0$ in $T$, s.t. 
on the neighborhood ${\cal N}(q_0)$ the weight of ${\cal E'}$ is strictly 
smaller than that of ${\cal E}$, which is $1$; 
namely, ${\cal E'}$ must be equal to the identity on all the neighborhood 
of $q_0$.
Here, we have used here the fact that the neighborhoods of different qudits 
in an $L-$independent set are non-intersecting (definition \ref{def:Lind}). 
This means that $\Delta$ must be equal to the inverse of ${\cal E}$ 
on this neighborhood. But this inverse is exactly the following: 
It is equal to ${\cal E}|_{q_0}^{-1}$ on $q_0$, and to the identity on all other 
qudits in the neighborhood; and moreover, we have by construction 
that ${\cal E}|_{q_0}^{-1}$ on $q_0$, (and therefore also 
that ${\cal E}|_{q_0}$) does not commute with $h|_{q_0}$, for some $h\in {\cal G}$.)
This means that $\Delta$ does not commute with $h\in {\cal G}$, in contradiction 
to the fact that $\Delta \in S$. 

\paragraph{Robustness Analysis}
We upper-bound the number of generators that do not commute with ${\cal E}$.
For each $u\in U$, we have that the number of generators $g\in {\cal G}$ 
that do not commute with ${\cal E}|_{q(u)}$, is at most
the number of generators that share at least two qudits with $u$.
By fact (\ref{fact:deg}) there exists a qudit such that the fraction of its check terms with at least two qudits in $supp(u)$ is at most $2\eps$.
Since we chose $q(u)$ to be the qudit that minimizes that fraction over all qudits of $u$, the absolute number of such check terms is at most $2\eps D_R$.
Hence the overall penalty on ${\cal E}$ is at most $2 \eps |T| D_R$.
By Equation \ref{eq:tweight} 
we get that the penalty is at most $2\eps D_R wt({\cal E}_{\cal G})$.
By definition \ref{def:robust}, this implies that $r(\delta)=r(wt({\cal E}_{\cal G})/n) \leq 2 \eps$.
\end{proof}

\begin{corollary} \label{cor:QECC}
Let $C$ be a stabilizer code of minimal distance $>1$ 
and a $k$-local succinct generating set, such that each qubit is examined 
by $D_R$ constraints.
If there exists 
an $L$-independent set $U$, s.t. $|U| = \delta n$ for some $\delta>0$, 
and $U$'s neighbor set of qudits has expansion error at most $\eps$ 
(i.e. $|\Gamma(\Gamma(U))|\geq |\Gamma(U)| D_R (1-\eps)$), 
then for any $\delta'\leq \delta$ we have that 
$r(\delta')\le 2\epsilon$.
\end{corollary}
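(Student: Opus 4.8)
The plan is to run the proof of Theorem \ref{thm:QECC} essentially verbatim, the one change being that small-set expansion is now used only through the aggregate bound $|\Gamma(\Gamma(U))|\ge |\Gamma(U)|D_R(1-\epsilon)$ instead of term by term. First I would take the $L$-independent set $U$ with $|U|=\delta n$ that is now given by hypothesis (in Theorem \ref{thm:QECC} it had to be built greedily, which is the only place the global expansion was used for that purpose), and set up the error exactly as there: for $u\in U$ and $i\in[k]$ let $\alpha_i(u)$ be the number of generators incident to the $i$-th qudit of $\mathrm{supp}(u)$ that touch at least two qudits of $\mathrm{supp}(u)$, let $q(u)\in\mathrm{supp}(u)$ minimize $\alpha_i(u)$, put $T=\{q(u):u\in U\}$ and $\mathcal{E}=\bigotimes_{u\in U}u|_{q(u)}$. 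The proofs that $\mathcal{E}\notin Z(\mathcal{G})$ (so it is a genuine error) and that $wt_{Z(\mathcal{G})}(\mathcal{E})=|U|$ are precisely the ``Generating the error'' and ``Weight Analysis'' steps of Theorem \ref{thm:QECC}: they invoke only Fact \ref{fact:nocommute} (valid since $\mathrm{dist}(C)>1$) and the $L$-independence of $U$ --- with, as in Theorem \ref{thm:QECC}, the relevant $\delta'$ taken below half the distance so that $wt_{Z(\mathcal{G})}$ agrees with $wt_{\mathcal{G}}$ --- and never use the expansion, so they transfer unchanged.

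The only step that used expansion, the ``Robustness Analysis'', is where the two proofs diverge. There $\mathrm{supp}(u)$ was $\epsilon$-expanding for every $u$, so Fact \ref{fact:deg} gave $\alpha_{q(u)}(u)\le 2\epsilon D_R$ for each term; here I only have the one inequality $|\Gamma(\Gamma(U))|\ge |\Gamma(U)|D_R(1-\epsilon)=k|U|D_R(1-\epsilon)$, using $|\Gamma(U)|=k|U|$ by $L$-independence. The crucial point is that $L$-independence forces the generator-sets $\Gamma(\mathrm{supp}(u))$, $u\in U$, to be pairwise disjoint (a generator touching both $\mathrm{supp}(u)$ and $\mathrm{supp}(u')$ would lie in the neighborhoods of both $u$ and $u'$), whence $|\Gamma(\Gamma(U))|=\sum_{u\in U}|\Gamma(\mathrm{supp}(u))|$. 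For a fixed $u$, counting the $kD_R$ incidences between the $k$ qudits of $\mathrm{supp}(u)$ and their generators shows that at most $kD_R-|\Gamma(\mathrm{supp}(u))|$ generators touch $\mathrm{supp}(u)$ in two or more qudits, each with multiplicity at most $k$, so $\sum_i\alpha_i(u)\le 2\bigl(kD_R-|\Gamma(\mathrm{supp}(u))|\bigr)$ and hence $\alpha_{q(u)}(u)=\min_i\alpha_i(u)\le \tfrac{2}{k}\bigl(kD_R-|\Gamma(\mathrm{supp}(u))|\bigr)$ (which matches Fact \ref{fact:deg}'s bound when $\mathrm{supp}(u)$ is itself $\epsilon$-expanding). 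As in Theorem \ref{thm:QECC}, the number of generators $\mathcal{E}$ fails to commute with is at most $\sum_{u\in U}\alpha_{q(u)}(u)$; summing the last bound and inserting the aggregate expansion inequality yields a total penalty of at most $\tfrac{2}{k}\bigl(kD_R|U|-|\Gamma(\Gamma(U))|\bigr)\le 2\epsilon D_R|U|=2\epsilon D_R\, wt_{Z(\mathcal{G})}(\mathcal{E})$, i.e.\ $r(\delta)\le 2\epsilon$.

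For $\delta'<\delta$ I would pass to the sub-$L$-independent set $U'\subseteq U$ of size $\delta' n$ consisting of the $u$'s with the largest $|\Gamma(\mathrm{supp}(u))|$; since the top values of a multiset have average at least the overall average, $\Gamma(U')$ still has expansion error at most $\epsilon$, and the same argument applied to $U'$ gives $r(\delta')\le 2\epsilon$. The only point I expect to need care is this robustness step: checking that replacing the termwise use of Fact \ref{fact:deg} by the amortized ``heavy-generator'' count keeps the constant $2$ in front of $\epsilon$, and that the disjointness of the $\Gamma(\mathrm{supp}(u))$ really follows from Definition \ref{def:Lind}; everything else is a transcription of the proof of Theorem \ref{thm:QECC}.
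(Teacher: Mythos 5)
Your proposal is correct and matches the paper's approach: the paper's own proof of this corollary is a one-line remark that the argument of Theorem \ref{thm:QECC} goes through verbatim once one has an $L$-independent set whose qudit neighborhood has expansion error at most $\epsilon$. Your amortized replacement of the term-by-term use of Fact \ref{fact:deg} (summing $\min_i\alpha_i(u)\le\tfrac{2}{k}\bigl(kD_R-|\Gamma(\mathrm{supp}(u))|\bigr)$ over the pairwise disjoint neighborhoods forced by $L$-independence) and your best-subset argument for $\delta'<\delta$ are precisely the details the paper leaves implicit, and you carry them out correctly.
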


\begin{proof}
The proof follows exactly the proof of Theorem \ref{thm:QECC} noticing 
that the only thing we need from the assumption on the expansion of the 
graph, is the existence of an $L$-independent set whose expansion errors 
is at most $\epsilon$.   
\end{proof}

\section{An upper-bound on robustness}\label{sec:robust}

We now show an absolute constant, upper-bounding the robustness of any 
quantum stabilizer code $C$ with local generators.  
We start with an easy alphabet based upper bound.

\subsection{Alphabet-based bound on robustness}

In attempting to understand robustness of a stabilizer code, one must first account for
limitations on robustness that seem almost trivial, and occur even when there is just a single error:
\begin{definition}\label{def:single} 
{\bf Single error robustness}
Let $f(d)=d^2$ be the number of distinct 
generalized Pauli's on a Hilbert space of dimension $d$.
Let $t(d) = 1/(f(d)-1)$; The single error robustness in dimension $d$ 
is defined to be $\alpha(d)=1-t(d)$. 
\end{definition}

The motivation for the above definition is as follows. 
For any qudit $q$, there always exists $Q$, a non-identity Pauli operator 
among the $d^2-1$ Pauli's on a $d$-dimensional qudit,  
such that a fraction at least $t(d)$ of the generators touching 
$q$ are equal to $Q$ when restricted to $q$. 
If we consider a one qudit error on $q$ to be equal to 
$Q$, then it would commute with $t(d)$ of the generators touching 
$q$; and can at most violate $\alpha(d)$ of the constraints touching it. 
Thus, one can expect that it is possible to construct an error of linear 
weight, whose robustness is bounded by the single error robustness, 
using qubits whose neighborhood sets of constraints are far from each other.
Indeed, we show:

\begin{fact}\label{fact:alphabet}
\textbf{Alphabet bound on robustness}

\noindent
For any stabilizer code $C$ on $n$ $d$-dimensional qudits, of distance at least $2$, and a $k$-local succinct generating set ${\cal G}$, whose right-degree is $D_R$,
we have $r(\delta) \leq \alpha(d)$, for any $\delta \leq 1/(k^3 D_R)$.
\end{fact}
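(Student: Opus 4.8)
The plan is to exhibit, for any $\delta\le 1/(k^3D_R)$, a single error ${\cal E}\in\Pi^n$ with $wt_{Z({\cal G})}({\cal E})=\delta n$ that is not violated by more than $\alpha(d)D_R\cdot wt_{Z({\cal G})}({\cal E})$ generators; by Definition \ref{def:robust} this immediately forces $r(\delta)\le\alpha(d)$. The construction reuses the machinery from the proof of Theorem \ref{thm:QECC}. First, by the same greedy argument, pick an $L$-independent set of constraints $U\subseteq L$ with $|U|=\delta n$; here the bound $\delta\le 1/(k^3D_R)$ is exactly what is needed for the pairwise-disjoint neighborhoods ${\cal N}(u)$, $u\in U$, to fit. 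For each $u\in U$ choose an arbitrary qudit $q(u)\in\Gamma(u)$, and let $Q(q(u))$ be a \emph{most popular} non-identity restriction among the $D_R$ generators touching $q(u)$: by pigeonhole over the $d^2-1$ non-identity Pauli's on a $d$-dimensional qudit it equals $g|_{q(u)}$ for at least $t(d)D_R$ of these generators. Set ${\cal E}=\bigotimes_{u\in U}Q(q(u))$, supported on $T=\{q(u):u\in U\}$. The points of $T$ are distinct since the sets $\Gamma(u)\subseteq {\cal N}(u)$ are disjoint, so $wt({\cal E})=|U|=\delta n$.

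The robustness bound is then precisely the alphabet-counting argument of the motivating paragraph preceding Fact \ref{fact:alphabet}. If a generator $g$ does not commute with ${\cal E}$, then $g$ must touch $T$; if it touches $q(u)$ it lies inside ${\cal N}(u)$ and hence touches no other point of $T$ (by $L$-independence), so $[g,{\cal E}]=[g|_{q(u)},Q(q(u))]$ and $g$ gets counted for a unique $u$. Among the $D_R$ generators incident on $q(u)$, at least $t(d)D_R$ restrict to $Q(q(u))$ itself and hence commute with ${\cal E}|_{q(u)}$, so at most $(1-t(d))D_R=\alpha(d)D_R$ of them are violated via $q(u)$. Summing over $u\in U$ and using $wt_{Z({\cal G})}({\cal E})=|U|$ (shown next) yields at most $\alpha(d)D_R\cdot wt_{Z({\cal G})}({\cal E})$ violated generators.

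It remains to verify that ${\cal E}$ is a genuine error of full weight modulo the centralizer, i.e. $wt_{Z({\cal G})}({\cal E})=|U|$. The needed input is that each chosen Pauli is \emph{detectable}: $Q(q(u))\otimes I_{-q(u)}\notin Z(A)$. Granting this, the argument is identical to the weight analysis of Theorem \ref{thm:QECC}: if ${\cal E}'=\Delta{\cal E}$ with $\Delta\in Z(A)$ had $wt({\cal E}')<|U|$, then since ${\cal E}$ carries weight exactly $1$ on each of the $|U|$ disjoint neighborhoods ${\cal N}(u)$, some ${\cal N}(u_0)$ would carry weight $0$ in ${\cal E}'$; then $\Delta$ restricted to ${\cal N}(u_0)$ equals $Q(q(u_0))^{-1}$ on $q(u_0)$ and the identity elsewhere in ${\cal N}(u_0)$, and a generator $h$ touching $q(u_0)$ with $[h|_{q(u_0)},Q(q(u_0))]\neq 0$ (which exists by detectability) has its whole support inside ${\cal N}(u_0)$, so $[\Delta,h]\neq 0$, contradicting $\Delta\in Z(A)$. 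The same $h$ shows directly that ${\cal E}\notin Z(A)$.

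Finally, detectability of every non-identity restriction follows from the distance hypothesis: if $Q\otimes I_{-q}\in Z(A)$ for a non-identity $Q=g|_q$, then $Q\otimes I_{-q}$ is either a logical operator of weight $1$ — impossible once $dist(C)\ge 2$ — or a stabilizer, which pins qudit $q$ into a proper eigenspace of $Q$; since we may assume (after absorbing any such forced eigenspace into a lower-dimensional qudit, which only improves the bound as $\alpha(d')\le\alpha(d)$ for $d'\le d$) that no qudit is so pinned, this case does not occur. For $d$ prime one can read this off even more cleanly from Fact \ref{fact:nocommute}: it shows the restrictions $\{g|_q\}_g$ generate a non-abelian subgroup of $\Pi_d$, whose centralizer in the single-qudit Pauli group is then trivial, so every non-identity restriction is detectable. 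I expect this detectability point — arranging that the chosen Pauli is at once the \emph{most popular} one (for the $\alpha(d)$ count) and \emph{detectable} (for the weight count) — to be the only delicate step; everything else is the $L$-independent-set construction already in place for Theorem \ref{thm:QECC}.
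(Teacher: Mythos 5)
Your proposal is correct and follows essentially the same route as the paper's proof: an $L$-independent set $U$ of size $\delta n$, one qudit per constraint carrying the most popular non-identity restriction (the paper's $MAJ(q)$), the pigeonhole count giving at most $\alpha(d)D_R$ violations per qudit, and the weight argument imported from Theorem \ref{thm:QECC}. The one place you go beyond the paper is in spelling out the detectability of $MAJ(q)$ at \emph{every} chosen qudit (needed for the full-weight claim, not just for ${\cal E}$ being an error), a step the paper dispatches by citing Fact \ref{fact:nocommute}; your direct argument via distance $\ge 2$ and non-trivial qudits is the right way to close it.
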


\begin{proof}
Similar to theorem (\ref{thm:QECC}) there exists an $L$-independent set $U$ of size $\delta n$.
For each $u\in U$ we select some qubit $q\in \Gamma(u)$ and examine the stabilizer component on $q$ of all stabilizers on $q$.
Consider $P(q)$, the restriction of all generators incident on $q$ to the qudit $q$, and let $MAJ(q)$ denote the Pauli that appears a maximal number of times in $P(q)$.
We then set ${\cal E} = \bigotimes_{u\in U} MAJ(q)$.
We first realize that ${\cal E}$ is an error: we want to show that there exists a generator $g$ such that ${\cal E}$ and $g$ do not commute.
Otherwise, 
${\cal E}$ commutes with all generators.
This implies that for all $q$, $P(q)$ consists of a single Pauli, contradictory to fact (\ref{fact:nocommute}).
Similar to equation (\ref{eq:tweight}) of theorem (\ref{thm:QECC}) the weight of ${\cal E}$ is $|U| = \delta n$.
Furthermore, for each qudit $q$, the fraction of generators on $q$, whose restriction to $q$ does not commute with ${\cal E}|_q$ is at most $\alpha(d)$, since the number
of appearances of ${\cal E}|_q = MAJ(q)$ in $P(q)$ is at least $t = 1-\alpha$.  Hence the number of violated constraints is at most $\alpha(d) \cdot |U| \cdot D_R$.
\end{proof}
 
\subsection{Separation from alphabet-based robustness}

In this section we show that this alphabet-based 
bound in fact cannot be achieved, and the robustness is further bounded 
and attenuated by a constant factor below the single qudit robustness, due 
to what seems to be a strictly quantum phenomenon. 
We will use the topology of the underlying graph 
to achieve this separation, by treating 
differently expanding instances and non-expanding instances.
Before stating the main theorem of this section, we require a new definition, and a simple fact:
\begin{definition}
\textbf{$k$-independent set of constraints}\label{def:kind} 
\noindent 
Let $H$ be a $k$-local Hamiltonian instance, and let $u$ be a constraint of $H$.
The $0$-th neighborhood of $u$, denoted $\Gamma^{(0)}(u)=\Gamma(u)$
is the set of qudits examined by $u$.
We define recursively,  
the $t$-th neighborhood of $u$, $\Gamma^t(u)$ to be the set of qudits, 
belonging to all constraints which act on qudits in $\Gamma^{(t-1)}(u)$.
A set of constraints is said to 
be $k$-independent 
if for any $a,b\in U$ we have $\Gamma^{(k)}(u) \cap \Gamma^{(k)}(v) = \Phi$.
\end{definition}
The following fact can be easily derived by a greedy algorithm:
\begin{fact}\label{fact:indset}
Let $\eta(k,D_R) = k^{-(2k+1)} D_R^{-(2k-1)}$.
For any $k$-local Hamiltonian $H$ on $n$ qudits, such that each qudit is examined by $D_R$ local terms, there exists a $k$-independent set of size at least $\eta n$. 
\end{fact}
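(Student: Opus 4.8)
The plan is to prove this by a direct greedy construction; the only substance is bounding the size of the ``exclusion ball'' that one deletes around each chosen constraint. Since each constraint acts on $k$ qudits and each qudit is examined by $D_R$ constraints, double counting edges gives that the total number of constraints is $m = nD_R/k$. I would build the set $U$ iteratively: while some constraint is still available, pick an arbitrary available constraint $u$, add it to $U$, and delete from the pool of available constraints every $w$ (in particular $u$ itself) with $\Gamma^{(k)}(w)\cap\Gamma^{(k)}(u)\neq\Phi$. When the pool is empty, any two elements of $U$ have disjoint $k$-th neighborhoods by construction, so $U$ is $k$-independent, and it remains only to show that each iteration deletes at most $(kD_R)^{O(k)}$ constraints, whence $|U|\ge m/(kD_R)^{O(k)}\ge\eta(k,D_R)\,n$.

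The first estimate is a growth bound on the recursive neighborhoods: by induction on $t$, $|\Gamma^{(0)}(u)|=k$, and passing from $\Gamma^{(t-1)}(u)$ to $\Gamma^{(t)}(u)$ the qudits of $\Gamma^{(t-1)}(u)$ lie in at most $D_R\cdot|\Gamma^{(t-1)}(u)|$ constraints, which together span at most $k\cdot D_R\cdot|\Gamma^{(t-1)}(u)|$ qudits; hence $|\Gamma^{(t)}(u)|\le k(kD_R)^t$ and in particular $|\Gamma^{(k)}(u)|\le k(kD_R)^k$. The second estimate reverses the neighborhood relation: for a fixed qudit $q$, I want to bound the number of constraints $w$ with $q\in\Gamma^{(k)}(w)$. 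Since ``$p$ and $p'$ occur in a common constraint'' is a symmetric relation and the neighborhoods are monotone ($\Gamma^{(t-1)}(u)\subseteq\Gamma^{(t)}(u)$), unwinding the definition shows that $q\in\Gamma^{(k)}(w)$ forces $w$ to touch some qudit joined to $q$ by a walk of length at most $k$ in this relation, i.e.\ a qudit in the radius-$k$ ball $N^{(k)}(q)$ around $q$; by the same counting, $|N^{(k)}(q)|\le(kD_R)^k$, and each such qudit lies in at most $D_R$ constraints, giving at most $(kD_R)^kD_R$ candidates $w$. Combining, one iteration deletes at most $\sum_{q\in\Gamma^{(k)}(u)}\#\{w:q\in\Gamma^{(k)}(w)\}\le k(kD_R)^k\cdot(kD_R)^kD_R = kD_R(kD_R)^{2k}$ constraints, so $|U|\ge (nD_R/k)\big/\big(kD_R(kD_R)^{2k}\big)\ge\eta(k,D_R)\,n$; the precise exponents in $\eta$ follow by being slightly less wasteful in the very first expansion step (starting from the $k$ qudits of $u$ itself rather than from all of $\Gamma^{(1)}(u)$).

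The main obstacle is the ``reversal'' step in the second estimate: one must argue that membership $q\in\Gamma^{(k)}(w)$ really does confine $w$ to a bounded ball around $q$, and this relies on the adjacency on qudits being symmetric together with the monotonicity $\Gamma^{(t-1)}\subseteq\Gamma^{(t)}$ (a constraint containing a qudit $p$ re-contributes $p$ to the next neighborhood). Once that is set up, everything else is routine degree bookkeeping, and matching the exact form $\eta(k,D_R)=k^{-(2k+1)}D_R^{-(2k-1)}$ is only a matter of tightening the constant-factor counts rather than any new idea.
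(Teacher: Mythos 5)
Your proposal is correct and follows essentially the same route as the paper: a greedy procedure that repeatedly picks a constraint and deletes everything whose $k$-th neighborhood could intersect its own, which amounts to deleting a ball of radius $O(k)$ around the chosen constraint, followed by the same degree-counting ($m=nD_R/k$ constraints total, each iteration deleting at most $(kD_R)^{O(k)}$ of them). The paper states this even more tersely (remove the $\Gamma^{(2k)}$-neighborhood and count), and its own constants are obtained by equally loose bookkeeping, so your slightly weaker intermediate constant is not a substantive discrepancy.
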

\begin{proof}
Pick a constraint, remove all constraints in its $\Gamma^{(2k)}$ neighborhood, and repeat.
We get that the fraction of constraints is at least $k^{-(2k)} \cdot D_R^{-(2k)}$, and by the fact that 
the number of constraints times $k$, is equal to $n D_R$, we get the desired result.
\end{proof}

\noindent 
{\bf Theorem (\ref{thm:robust})}
{\it Let $C$ be a stabilizer code on $n$ $d$-dimensional qudits, of minimal distance at least $k$, and a $k$-local ($k\geq 4$)
succinct generating set ${\cal G}\subset \Pi^n$, where the right 
degree of the interaction graph of ${\cal G}$ is $D_R$.
Then there exists a function $\gamma_{gap}=\gamma_{gap}(k)> min\left\{10^{-3},0.01/k \right\}$ 
such that for any
$\delta \leq \min\{dist(C)/2,\eta/10\}$, we have 
$r(\delta')\leq \alpha(d) \left(1-\gamma_{gap}\right)$.
where $\delta' \in (0.99 \delta, 1.01\delta)$.
}

The proof of the theorem will use, on one hand, corollary (\ref{cor:QECC}) which upper-bounds the robustness of expanding instances, and on the other hand a lemma on non-expanding instances,which essentially, tries to "mimic" the behavior of the classical setting, in which non-expanding topologies suffer from poor robustness.
We now state this lemma:
\begin{lemma}\label{lem:indexp}
Let $C$ be a stabilizer code on $n$ 
qudits of dimension $d$, with minimal distance at least $k$
and a $k$-local ($k\geq 4$) succinct generating set ${\cal G}$, 
where the right degree of the interaction graph of ${\cal G}$ is $D_R$.
Let $\gamma_{gap} = \gamma_{gap}(k)=
min\left\{10^{-3},0.01/k \right\}$. 
If there exists a $k$-independent set $U$ of size $|U| = \delta n$,
with $\delta<\min\{dist(C)/2\}$,
such that the bi-partite expansion error of $\Gamma(U)$ is at least $\eps = 0.32$, 
i.e. $|\Gamma(\Gamma(U))|= |\Gamma(U)| D_R (1-\eps')$ for some $\eps'\geq 0.32$ then
$$
r(\delta') \leq \alpha(d) \cdot (1-\gamma_{gap}),
$$
for some $\delta' \in 0.1(0.99 \delta,1.01\delta)$.
\end{lemma}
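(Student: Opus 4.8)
The plan is to produce, for the given $k$-independent set $U$, a \emph{random} sparse Pauli error ${\cal E}$ supported on the qudits of the ``islands'' $\Gamma(u)$, $u\in U$, and to prove by a first-moment argument that it simultaneously has (almost) full weight modulo the centralizer and, in expectation, violates at most an $\alpha(d)(1-\gamma_{gap})$ fraction of the constraints it could possibly violate. First I would record the structural facts that $k$-independence buys: the islands $\Gamma(u)$ are pairwise disjoint of size $k$, every generator $g$ touching $\bigcup_u\Gamma(u)$ meets exactly one island (otherwise two sets $\Gamma^{(1)}(u),\Gamma^{(1)}(u')$ would intersect), and hence the incident constraint set $\Gamma(\Gamma(U))$ is the disjoint union, over $u$, of the constraints around $\Gamma(u)$. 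Writing $j_g\in\{1,\dots,k\}$ for the number of island-qudits a constraint $g$ touches, incidence counting gives $\sum_g j_g=D_R k\,\delta n$ and $\sum_g 1=|\Gamma(\Gamma(U))|=(1-\eps')D_R k\,\delta n$; since $j_g\le k$, a fraction of order $\eps'$ of these constraints must have $j_g\ge2$. This is the only place the hypothesis $\eps'\ge0.32$ is used.

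Next I would build the error. Fix $p=\Theta(1/k)$ — concretely $p=0.1/k$ — place each island-qudit into a set $W$ independently with probability $p$, and on each $q\in W$ put an independent uniformly random non-identity Pauli $E_q$; set ${\cal E}=\bigotimes_{q\in W}E_q$. For the robustness side I would compute $\mathbb{E}[\#\text{violated}]=\sum_g\Pr[[g,{\cal E}]\neq0]$: conditioning on how many of $g$'s island-qudits fall in $W$ and using that, for prime $d$, a constraint touching $m\ge1$ erred qudits is \emph{not} violated with probability $\tfrac1d+\tfrac{(-1)^m}{d(d+1)(d^2-1)^{m-1}}$ (a short character computation), a binomial identity collapses this to $\Pr[[g,{\cal E}]\neq0]=\tfrac{d-1}{d}\bigl(1-(1-p')^{j_g}\bigr)$ with $p'=p\,\tfrac{d^2}{d^2-1}$. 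Summing, the $j_g=1$ part equals $\tfrac{d}{d+1}D_R\mathbb{E}|W|\le\alpha(d)D_R\mathbb{E}|W|$ (equality exactly at $d=2$), while the $j_g\ge2$ constraints — a fraction of order $\eps'$ — each shave off $\Theta(p'^2)$ by concavity; hence $\mathbb{E}[\#\text{violated}]\le\alpha(d)D_R\mathbb{E}|W|\,(1-\Theta(\eps'p))$, and plugging $\eps'\ge0.32$, $p=0.1/k$ makes the factor at most $1-\gamma_{gap}$ with $\gamma_{gap}=\min\{10^{-3},0.01/k\}$. The binding case is $d=2$, where $\tfrac{d}{d+1}=\alpha(2)$ and the \emph{entire} gap has to come from the overlapping ($j_g\ge2$) constraints that ``sense'' the poor expansion.

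For the weight side, $\delta<dist(C)/2$ gives $wt_{Z({\cal G})}({\cal E})=wt_{{\cal G}}({\cal E})$, so it suffices to lower-bound the latter; I would prove an ``island non-erasure'' sub-claim in the spirit of the weight analysis of Theorem \ref{thm:QECC}: if an island $\Gamma(u)$ is less than half erred, then no element of the stabilizer group can reduce the weight of ${\cal E}$ inside $\Gamma^{(k)}(u)$. Indeed, such a reduction would force a correction $\Delta\in A$ to agree with ${\cal E}^{-1}$ on the whole isolated, bounded neighborhood $\Gamma^{(k)}(u)$ and be trivial outside it; peeling the generators of $\Delta$ outward from $\Gamma(u)$ layer by layer — legitimate because the generators are $k$-local and $\Gamma^{(k)}(u)$ is disjoint from every other island's $k$-neighborhood — one is driven to exhibit a non-trivial element of the centralizer localized to a bounded region around $u$, contradicting $dist(C)\ge k$ together with the ``no trivial qudit'' hypothesis, exactly as in Theorem \ref{thm:QECC} (and ${\cal E}\notin Z(A)$ itself by Fact \ref{fact:nocommute} and $k$-independence, again as there). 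Since $p=\Theta(1/k)$, a Chernoff bound shows that with overwhelming probability every island is erred below half, so $wt_{{\cal G}}({\cal E})=|W|$.

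Finally I would assemble the pieces by a union bound: with positive probability a single draw of $(W,\{E_q\})$ has $|W|\in(0.99,1.01)\cdot pk\,\delta n$, has $wt_{Z({\cal G})}({\cal E})=|W|=:\delta'n$ with $\delta'\in0.1(0.99\delta,1.01\delta)$ (this fixes $pk=0.1$), and violates at most $\alpha(d)(1-\gamma_{gap})D_R\,\delta'n$ constraints; by Definition \ref{def:robust} this certifies $r(\delta')\le\alpha(d)(1-\gamma_{gap})$ (and when $\delta n$ is too small for concentration one replaces Chernoff by a direct averaging/alteration argument, absorbing the rounding into the $\pm1\%$ slack). The main obstacle is the weight lower bound together with reconciling the two conflicting demands on $p$: one needs each island erred \emph{below} half (so the non-erasure sub-claim applies and $wt_{{\cal G}}({\cal E})$ stays essentially $|W|$), yet enough islands to carry $\ge2$ errors (so the overlapping constraints register the sub-optimal expansion) — and for small $k$, notably $k=4$ where the ``below half'' threshold is $\le1$ while the ``sensing'' threshold is $\ge2$, these windows pinch, so the precise non-erasure threshold and the choice of $p$ have to be tuned with care; a secondary subtlety is that the clean vanishing-probability identity above needs extra bookkeeping (on the orders and gcd's of the Pauli components) when $d$ is composite, though an upper bound on the per-constraint violation probability still suffices.
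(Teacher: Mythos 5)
Your overall strategy is the paper's: the same random error (each qudit of $\Gamma(U)$ erred independently with probability $p=1/(10k)$, uniform over non-identity Paulis), an expected-penalty bound that gains a factor $1-\Theta(\eps' p)$ from constraints meeting an island in $\ge 2$ qudits, an onion-type non-erasure claim per island, and a first-moment assembly conditioning the penalty on the weight landing in a $\pm 1\%$ window. Your penalty computation via the exact character identity $\Pr[[g,{\cal E}]\neq 0]=\tfrac{d-1}{d}(1-(1-p')^{j_g})$ is a legitimate (for prime $d$, even slightly tighter) variant of the paper's inclusion--exclusion bound using the injective edge set $E_{inj}$; both yield the same $1-\Theta(\eps' p)$ gain and the same bottleneck at $d=2$.

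There is, however, one genuine gap, and it sits at the quantitative heart of the weight analysis. You assert that ``a Chernoff bound shows that with overwhelming probability \emph{every} island is erred below half, so $wt_{\cal G}({\cal E})=|W|$.'' This is false: for a single island the probability of exceeding $k/2$ errors is a constant $q_k>0$ (roughly $\binom{k}{k/2}p^{k/2}$, exponentially small in $k$ but independent of $n$), and with $\delta n$ independent islands a $\Theta(q_k)$ fraction of them \emph{will} exceed half with overwhelming probability, so the weight modulo ${\cal G}$ is strictly less than $|W|$. The correct statement is a bound on the \emph{expected loss}: by the Onion fact an island with $i>k/2$ errors can lose at most $2i-k$ units of weight, so one must show $\sum_{i>k/2}B(i)(2i-k)\ll pk$, where $B$ is the Binomial$(k,p)$ law; this is exactly the paper's $A_{loss}$ computation producing the function $y(k)$. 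The bound cannot be waved away, because the final gap $\gamma_{gap}$ is only $0.01/k$, and the robustness ratio in the last step is $\alpha\cdot\frac{1-\Theta(\eps'p)}{y(k)}$: the whole lemma survives only because $1-y(k)$ (of order $k^{-\Omega(k)}$) is provably much smaller than the $\Theta(\eps' p)=\Theta(1/k)$ penalty gain, and for small $k$ ($4\le k\le 11$) the paper has to verify this numerically case by case. You correctly sense this tension in your closing caveat (the ``pinch'' at $k=4$), but your resolution in the body --- that no island ever exceeds half --- is not a valid step, and replacing it requires precisely the $A_{loss}/y(k)$ bookkeeping, together with a Hoeffding concentration of the island-occupancy histogram $|U_i|/|U|$ around $B(i)$ so that the expected-loss bound holds for a single realization.
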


From this lemma, it is easy to show theorem (\ref{thm:robust}):
\begin{proof}(of theorem \ref{thm:robust})
The parameters of the theorem allow us to apply directly fact (\ref{fact:indset});
hence there exists a $k$-independent set $S$ of size at least $\eta n$, for $\eta$ as defined in fact(\ref{fact:indset}).
Hence, since $\delta \leq \eta/10$ there exists a $k$-independent set $S$ of size $10 \delta$.
Now, either:
\begin{enumerate}

\item
$S$ has expansion error at least $0.32$.
By lemma (\ref{lem:indexp}), for any $k\geq 4$ we have
$$
r(\mu) < \alpha(d)(1-\gamma_{gap}),
$$
for some $\mu \in 0.1 (0.99 \cdot 10 \delta,1.01 \cdot 10\delta) = (0.99 \delta,1.01\delta)$,
and $\gamma_{gap}(k)$ from lemma (\ref{lem:indexp}), 
which is at least $ min\left\{10^{-3},0.01/k \right\}$.

\item
The set $S$ is $\epsilon$-expanding for $\epsilon< 0.32$.  In which case,
since $S$ is in particular $L$-independent, then by corollary (\ref{cor:QECC}), 
the robustness $r(\delta')\le 2\eps < 2/3 - 0.01 \leq \alpha(d) -0.01$, for all $\delta' \leq |S|/n$.  In particular $r(\mu) < \alpha(1-0.01/k)$.

\end{enumerate}
Taking the higher of these two bounds we get the desired upper-bound for $r(\mu)$.
\end{proof}

\subsection{Proof of the lemma (\ref{lem:indexp})}

\subsubsection{General structure of proof} 
Let us clarify what we're trying to show.
We want to show that if the expansion is bad, errors cannot have large relative penalties.
Consider a set $S$ with positive expansion error $\eps>0$.
A-priori, if we have an error on $S$, then the maximal number of violations is strictly less than $|S| D_R$, and in fact at most $|S| D_R (1-\eps)$.
This might seem as though it proves the lemma trivially.

The technical problem here, however, is that 
an error on $S$ may just "seem" to be large, whereas possibly, may be represented much more succinctly modulo the stabilizer group.  
We would hence like to devise an error pattern, that cannot be downsized significantly, but would still "sense"
the non-expanding nature of $S$, and hence have fewer-than-optimal violations.
We start with a fact lower-bounding the weight modulo the group of an error confined to the qudits of a single generator; we call this the Onion 
fact since its proof (given in Subsection \ref{sec:onion})
works via some hybrid argument on
the onion-like layers $\Gamma^{(i)}(u)$ 
surrounding the qudits of one term $u$. 
\begin{fact}\label{fact:succinct}
\textbf{Onion fact}

\noindent
Let $C$ 
be a stabilizer code on $n$ qudits 
with a succinct generating set ${\cal G}$ of locality $k$, such that $dist(C)\geq k$.
Let ${\cal E}$ be some word in $\Pi^n$ s.t. $supp({\cal E}) 
\subseteq \Gamma^{(0)}(u)=\Gamma(u)$ for some generator $u\in {\cal G}$. 
Finally let $\Delta\in C$ be some word in the code and let
${\cal E}_{\cal G}=\Delta{\cal E}$. 
Then, for any $i\in [k]$, if 
$wt({\cal E}|_{\Gamma(u)}) = i$, then 
$wt({\cal E}_{\cal G}|_{\Gamma^{(k)}(u)}) \geq min\left\{i,k-i\right\}$.
\end{fact}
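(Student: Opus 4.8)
The plan is to argue by a hybrid/layer-peeling argument on the concentric neighborhoods $\Gamma^{(0)}(u)\subseteq \Gamma^{(1)}(u)\subseteq\cdots\subseteq\Gamma^{(k)}(u)$, tracking how much weight $\mathcal{E}_{\cal G}=\Delta\mathcal{E}$ must ``leak'' outward at each layer. Write $i=wt(\mathcal{E}|_{\Gamma(u)})$, the number of qudits on which $\mathcal{E}$ is non-trivial; all of these lie in $\Gamma^{(0)}(u)$. Since $\Delta$ lies in the stabilizer group $A$, it is a product of generators, and each generator $g$ used in $\Delta$ that touches a qudit in some $\Gamma^{(j)}(u)$ is itself supported (by definition of the recursive neighborhoods) inside $\Gamma^{(j+1)}(u)$. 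So the qudits on which $\Delta$ is non-trivial decompose across the onion layers, and multiplying $\mathcal{E}$ by $\Delta$ can only move weight ``one layer outward at a time.''

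The key dichotomy I would set up is: either $\mathcal{E}_{\cal G}$ still has weight $\ge 1$ somewhere in each of the inner layers $\Gamma^{(0)}(u),\dots$ out to $\Gamma^{(k)}(u)$ — in which case, because the layers $\Gamma^{(j)}(u)\setminus\Gamma^{(j-1)}(u)$ are (for a $k$-local code with the neighborhoods expanding mildly) essentially disjoint and there are of order $k$ of them, we collect weight $\ge k-i$ from the fact that $\Delta$ had to be non-trivial far out to cancel $i$ errors; or else $\mathcal{E}_{\cal G}$ is \emph{trivial} on all of $\Gamma^{(k)}(u)$, in which case I claim $\Delta$ restricted to $\Gamma(u)$ must exactly invert $\mathcal{E}|_{\Gamma(u)}$ and be identity just outside — but then $\Delta$ would be an element of the centralizer that is supported on $\le i<k\le dist(C)$ qudits and acts non-trivially, contradicting that the code has distance $\ge k$ (using Fact~\ref{fact:nocommute} to see this local operator is not in $A$, exactly as in the weight analysis of Theorem~\ref{thm:QECC}). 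The quantitative bookkeeping of ``$\min\{i,k-i\}$'' comes from: if $i\le k-i$ we use the distance argument to force weight $\ge i$ to survive somewhere; if $i> k-i$ we instead use the outward-leakage count, which can only be forced to depth $k$, giving $\ge k-i$.

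Concretely, the steps in order are: (1) formalize that $\Delta$, being a product of stabilizer generators, admits a decomposition $\Delta = \Delta_0\Delta_1\cdots$ where $\Delta_j$ collects generators first touching layer $j$, so $\Delta_j$ is supported in $\Gamma^{(j+1)}(u)$, and the layers partition the support; (2) run the hybrid: suppose for contradiction $wt(\mathcal{E}_{\cal G}|_{\Gamma^{(k)}(u)})<\min\{i,k-i\}$, and walk outward layer by layer, at each step showing the ``surviving'' error on that layer forces a specific non-trivial correction by the next layer of $\Delta$; (3) in the case the error dies out early, invoke the distance-$\ge k$ hypothesis plus Fact~\ref{fact:nocommute} to derive a contradiction with a small-support centralizer element; (4) in the case it persists, sum the per-layer contributions. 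The main obstacle I anticipate is step (1)–(2): making rigorous the claim that weight genuinely propagates ``at most one layer per generator'' and that the layers are disjoint enough that leaked weight is not secretly re-cancelled by a cleverly chosen product of generators — this requires care about how Pauli multiplication can cancel on a qudit, and is exactly where one must use that the restrictions of generators to a single qudit cannot all commute (Fact~\ref{fact:nocommute}) to rule out the degenerate cancellation patterns. The actual arithmetic bounding $\min\{i,k-i\}$ is then routine given the combinatorial structure.
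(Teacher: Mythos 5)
There is a genuine gap here, in two places. First, your case analysis is not exhaustive: you split into (a) $\mathcal{E}_{\cal G}$ has weight $\geq 1$ in \emph{every} layer out to $\Gamma^{(k)}(u)$, and (b) $\mathcal{E}_{\cal G}$ is trivial on \emph{all} of $\Gamma^{(k)}(u)$. These do not cover the situation where $\mathcal{E}_{\cal G}$ is non-trivial on $\Gamma^{(k)}(u)$ but skips some layers and has small total weight there, which is exactly the adversarial case one must rule out. Second, the quantitative heart of the argument --- that weight genuinely ``leaks outward'' layer by layer and cannot be re-cancelled --- is the step you explicitly flag as an anticipated obstacle rather than prove, and your proposed route through it (decomposing $\Delta$ into generators $\Delta_0\Delta_1\cdots$ by first-touched layer and invoking Fact~\ref{fact:nocommute} to control cancellations) is both harder than necessary and not obviously salvageable, since Pauli cancellation on individual qudits between generators from different ``layers'' is not controlled by Fact~\ref{fact:nocommute} (which the paper's proof in fact never uses).

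The paper's argument dichotomizes on $\Delta$, not on $\mathcal{E}_{\cal G}$, and never decomposes $\Delta$ into generators. If $\Delta|_{\Gamma(u)}=I$ then $\mathcal{E}_{\cal G}$ agrees with $\mathcal{E}$ on $\Gamma(u)\subseteq\Gamma^{(k)}(u)$ and you get weight $\geq i$ immediately. Otherwise $\Delta$ is a non-identity element of the stabilizer group, and the claim is that $wt\bigl(\Delta|_{\Gamma^{(k)}(u)}\bigr)\geq k$: if not, by pigeonhole one of the $k$ annuli $\Gamma^{(t)}(u)\setminus\Gamma^{(t-1)}(u)$ carries no support of $\Delta$, and then the truncation $\tilde\Delta=\Delta|_{\Gamma^{(t-1)}(u)}$ is checked to commute with every generator (any generator straddling the boundary lives entirely inside $\Gamma^{(t)}(u)$, where $\Delta$ and $\tilde\Delta$ agree because the annulus is empty), giving a weight-$<k$ element of the centralizer and contradicting succinctness or $dist(C)\geq k$. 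The triangle inequality $wt(\mathcal{E}_{\cal G}|_{\Gamma^{(k)}(u)})\geq wt(\Delta|_{\Gamma^{(k)}(u)})-wt(\mathcal{E}|_{\Gamma^{(k)}(u)})\geq k-i$ then finishes. Your instinct about the small-support centralizer contradiction is the right one, but you deploy it only in your degenerate case (b); the key realization you are missing is that this single pigeonhole-plus-truncation argument already handles the entire second branch, making the per-layer leakage bookkeeping unnecessary.
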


Now, let us see what the Onion fact means. 
It states that given an error on the $k$-qudit support of a generator, the weight of any
representation of this error modulo the centralizer, 
cannot be reduced in the $k$-neighborhood of the
generator, provided that the error has weight at most $k/2$.

Our idea is to concentrate the error on 
a large set of such "islands", 
each island supported on one generator, s.t. the generators 
supporting those islands are sufficiently far away from each other 
(in the interaction graph) 
so that the $k$-neighborhoods of those generators are non-intersecting.

If we draw a random error on the qudits on
these "islands", such that the expected number of errors per "island", 
is say $1$ error, the following will occur: 
on one hand, there will be a good portion of "islands" with at
least two errors, and these two errors will "sense" the 
sub-optimality of number of neighbors due to the expansion, 
interfere with each other, and cause an overall reduced penalty of 
that "island".
On the other hand, only a meager fraction, exponentially small in $k$, of those "islands" with at least two errors, will have more than $k/2$ errors; 
only those, by the Onion fact (fact \ref{fact:succinct}) 
can reduce their weight modulo the centralizer. 

We calibrate our parameters in the random choice of our error pattern 
to have expected number of errors which is indeed order of $1$, so that the 
above tradeoff will indeed hold. 

In the following we first define the error; 
We provide the proof that the expected penalty of this error is small in fact
(\ref{fact:penalty}), then prove the onion fact in Subsubsection 
\ref{sec:onion} and using it we then prove Fact (\ref{fact:weight}),
in which we show that the error has large weight modulo the group.
Finally we combine all the above to finish the proof of the lemma. 

\subsubsection{Constructing the error}
Let $U\subseteq L$ be a $k$-independent set as promised by the conditions of the lemma.
Then $|U| = \delta n$, and denoting $S=\Gamma(U)$, we have that $|S| = \delta n k$.
Therefore, $|\Gamma(S)| = |S| D_R (1-\eps')$, for some $\eps' \geq 0.32$.
Let ${\cal E}$ be the following random error process: for each qudit of $S$ independently, we apply $I$ w.p. $1-p$ for $p=1/(10k)$, and one of the other Pauli operators: with equal probability $p \cdot t(d)$, where $t$ is defined in (\ref{def:single}).

$$
{\cal E} = \bigotimes_{i\in S} {\cal E}_i
\mbox{,  where  }
{\cal E}_i =
\left\{
	\begin{array}{ll}
		I_i  & \mbox{w.p. } 1-1/(10k) \\
		X_d^k P_d^l & \mbox{w.p. } t/(10k)
	\end{array}
\right.
$$
We note here that the choice of $p$ is such that on average, each $k$-tuple has only a small number of errors; the expectation of the number of errors is an absolute constant $1/10$ (not a fraction of $k$). 
This will help, later on, to lower-bound the weight of the error modulo the group.

\subsubsection{Analyzing Penalty}
We first claim, that on average, ${\cal E}$ has a relatively 
small penalty w.r.t. ${\cal G}$, using the fact that the expansion error 
is at least $0.32$ as in the condition of Lemma \ref{lem:indexp}
\begin{fact}\label{fact:penalty}
$$
\mathbf{E}_{\cal E}\left[Penalty\right] 
\leq 
p \alpha |S| D_R \left(1- 0.02/k \right)
$$
\end{fact}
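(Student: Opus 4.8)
The plan is to bound the expected number of violated generators directly by linearity of expectation over the generators $g \in \Gamma(S)$, the set of constraints touching our island set $S = \Gamma(U)$. For each such $g$, let $S_g = \Gamma(g) \cap S$ be the qudits of $g$ that lie in $S$; since $U$ is $k$-independent the constraint $g$ either lies entirely inside the neighborhood of one island $u \in U$ or touches $S$ only partially, but in all cases $|S_g| \le k$. The random error ${\cal E}$ restricted to $g$ is a product of independent single-qudit errors, each equal to $I$ with probability $1-p$ and to a uniformly random nonidentity Pauli with probability $p$. I would first handle the ``clean'' case where $g$ is touched on exactly one qudit $q$ of $S$: then $g$ is violated iff the error on $q$ is nonidentity \emph{and} anticommutes with $g|_q$, which happens with probability at most $p\alpha(d)$ (by the single-error robustness bound, exactly as in Fact \ref{fact:alphabet}). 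Summing over all such generators contributes at most $p\alpha(d)$ times their number.

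The key point is the gain coming from the expansion deficit. The number of generators touching $S$ at two or more qudits is, by the expansion hypothesis $|\Gamma(S)| = |S| D_R(1-\eps')$ with $\eps' \ge 0.32$, at least $\eps' |S| D_R$ (this is exactly the double-counting argument behind Fact \ref{fact:essence}: $\sum_{g \in \Gamma(S)} |S_g| = |S| D_R$, so if $|\Gamma(S)| = |S|D_R(1-\eps')$ then a set of generators of total excess degree at least $\eps'|S|D_R$ must have degree $\ge 2$ in $S$). For such a generator $g$ with $|S_g| = j \ge 2$, I would bound the probability it is violated by noticing two competing effects: with probability $(1-p)^{j}$ the error is identity on all of $S_g$ and $g$ is certainly not violated (if $g$ touches $S$ only inside $S_g$), which already knocks the violation probability strictly below the naive $p\alpha(d) j$ estimate; more carefully, the probability that $g$ is violated is at most $1 - (1-p)^j \le pj$, and I want to compare $pj$ against $p\alpha(d)j$. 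The refinement I need is that a generator of degree $j\ge 2$ cannot be counted with full weight $p\alpha(d)j$: either it is violated with probability at most $pj(1 - \Omega(p))$ by inclusion–exclusion (the events ``qudit $i$ has an anticommuting error'' overlap), or — and this is the cleaner route — I simply attribute to each degree-$\ge 2$ generator a violation probability at most $p\alpha(d)(j-1) + (\text{something} < p\alpha(d))$, so that replacing ``$j$ edges worth $p\alpha(d)$ each'' by this smaller quantity saves a term proportional to $p\alpha(d)$ per excess edge, i.e. at least $p\alpha(d)\eps' |S| D_R$ over all of them — wait, that overshoots, so I would instead aim for a saving of order $p\alpha(d) \cdot c \cdot \eps' |S| D_R$ for a small absolute constant $c$. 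Concretely, with $p = 1/(10k)$ and $\eps' \ge 0.32$, the per-generator inclusion–exclusion correction on a degree-$2$ generator is of order $p^2$, and there are $\ge 0.32 |S| D_R$ such generators, giving a total saving of order $p^2 |S| D_R \gtrsim (p/k)|S|D_R$, which after pulling out $p\alpha |S| D_R$ leaves a relative saving of order $1/k$; choosing constants so the net is $\ge 0.02/k$ yields the claimed bound $\mathbf{E}_{\cal E}[\text{Penalty}] \le p\alpha |S| D_R(1 - 0.02/k)$.

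So the structure is: (i) $\mathbf{E}[\text{Penalty}] = \sum_{g \in \Gamma(S)} \Pr[g \text{ violated}]$; (ii) for \emph{every} $g$, $\Pr[g\text{ violated}] \le p\alpha(d)|S_g|$ is the naive bound, and summing gives $p\alpha(d)\sum_g |S_g| = p\alpha(d)|S|D_R$, the ``trivial'' estimate; (iii) for the $\ge \eps'|S|D_R$ worth of excess incidences sitting on degree-$\ge 2$ generators, replace the naive bound by a genuinely smaller one via inclusion–exclusion / the $(1-p)^j$ all-identity event, quantifying the gain as $\ge (0.02/k)\,p\alpha(d)|S|D_R$ after plugging in $p = 1/(10k)$ and $\eps' \ge 0.32$. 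The main obstacle I anticipate is step (iii): making the inclusion–exclusion saving on degree-$\ge 2$ generators robust enough that, after summing, it survives division by $\alpha(d)|S|D_R$ as an honest $\Omega(1/k)$ term — I need to be careful that a degree-$2$ generator might still be touched by qudits outside $S$, so the ``all identity on $S_g$ forces non-violation'' reasoning has to be replaced by ``all identity on $S_g$ does not \emph{increase} the violation probability relative to conditioning'', and I must ensure the single-qudit anticommutation probabilities I use really are bounded by $\alpha(d)$ per qudit (true by the definition of $t(d)$ and the choice of ${\cal E}_i$ as uniform over nonidentity Paulis). Once the bookkeeping of which incidences are ``excess'' is pinned down, the rest is the routine calculation of $1-(1-p)^j$ versus $p\alpha(d)j$ with $p=1/(10k)$.
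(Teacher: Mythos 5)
Your plan is the same one the paper uses: write the expected penalty as a sum over qudit--constraint incidences of indicators $x(e)$ with $\mathbf{E}[x(e)]\le p\alpha$, obtain the first-order bound $p\alpha|S|D_R$, and then recover a second-order saving of size $\eps'(p\alpha)^2|S|D_R$ from the constraints that meet $S$ more than once, which after dividing by $p\alpha|S|D_R$ gives a relative gain $\eps'p\alpha\ge 0.32\cdot\frac{2}{3}\cdot\frac{1}{10k}\approx 0.021/k\ge 0.02/k$. The paper implements the saving by fixing an injective edge set $E_{inj}$ (one designated edge per constraint in $\Gamma(S)$) and bounding the penalty by $\sum_{e\in E_{inj}}x(e)+\sum_{e\notin E_{inj}}\bigl(1-x(e_{inj}(c(e)))\bigr)x(e)$; independence then gives an exact subtraction of $|E\setminus E_{inj}|\cdot(p\alpha)^2=\eps'|S|D_R(p\alpha)^2$ with no inclusion--exclusion sign issues and no case analysis on whether a constraint also touches qudits outside $S$ (irrelevant anyway, since ${\cal E}$ is supported on $S$).

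One step in your write-up would fail as literally stated: you assert ``there are $\ge 0.32|S|D_R$ such generators'' of degree $\ge 2$ in $S$ and credit each with a correction of order $p^2$. The quantity $\eps'|S|D_R$ counts \emph{excess incidences} $\sum_{g}(|S_g|-1)$, not generators; the number of degree-$\ge2$ generators can be as small as $\eps'|S|D_R/(k-1)$, and a flat $p^2$ per generator would then give only an $O(1/k^2)$ relative saving, which does not suffice. The fix is exactly your earlier ``per excess edge'' framing: a degree-$j$ generator contributes a correction $\ge c(j-1)(p\alpha)^2$ (via $1-(1-p\alpha)^j\le jp\alpha-\binom{j}{2}(p\alpha)^2+\binom{j}{3}(p\alpha)^3$ and $jp\alpha\le 1/10$), so summing $(j-1)$ over generators restores the full $\eps'|S|D_R$. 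Note also that there is essentially no slack in the constants ($0.0213/k$ versus the claimed $0.02/k$), so the multiplicative losses from the third-order inclusion--exclusion term must actually be tracked rather than absorbed into ``choosing constants''; the paper's $E_{inj}$ bookkeeping sidesteps this entirely because its correction term is exact.
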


\begin{proof}
Let $G=(L,R;E)$ 
denote the bi-partite graph corresponding to ${\cal G}$, 
with $L$ being the constraints and $R$ the qudits. 
Let $S=\Gamma(U)$ be as before.
Let the error process ${\cal E}$ be the one defined above.
For any constraint $c\in \Gamma(S)$ which is violated when applied 
to this error, observe that there must be a qudit
$i\in supp(c)$ such that   
$\left[c|_i, {\cal E}_i\right] \neq 0$.  
We now would like to bound the number of constraints violated by ${\cal E}$ 
using this observation, and linearity of expectation. 

For an edge $e\in E$ connecting a qudit $i$ in $S$ and a constraint $c$ in 
$\Gamma(S)$,  
let $x(e)$ denote the binary variable which is $1$, iff the error 
term ${\cal E}_i$ on does not commute with $c|_i$.
In other words, an edge marked by $1$ is an edge whose qudit 
may cause its constraint to be violated. 
By construction, for each $e\in E$ which 
connects the qudit $i$ and the constraint $c$ we have 
\begin{equation}
\label{eq:singleexp}
\mathbf{E}_{\cal E}[x(e)] = p(1-t).
\end{equation} 
This is true since a constraint $c$ restricted to the qudit $i$, 
$c|_i$ does not commute with the error restricted to the same qudit $i$,  
${\cal E}_i$, iff both ${\cal E}_i$ is non-identity (which happens with 
probability $p$) and
is not equal to $c|_i$. 

If we had just added now $x(e)$ over all edges going out of $S$ 
(whose number is $|S| D_R$), then by linearity 
of expectation, this would have given an upper bound on the expected 
number of violated constraint equal to 

\begin{equation}\label{eq:sumexp}
\sum_e p(1-t)=p |S| D_R \alpha(d). 
\end{equation}

Unfortunately this upper bound does not suffice; to strengthen it 
we would now like to take advantage of the fact that 
many of those edges go to the same constraint, due to the fact that the 
expansion is bad; thus, instead of simply summing these expectation values, 
we take advantage of the fact that two qudits touching the same 
constraint
cannot contribute twice to its violation. 
We note that the bound we get does not gain back the full
factor of $1-\epsilon$ but a worse one. 

Let $E_{inj} \subseteq E$ be a subset of the edges between $S$ to $\Gamma(S)$ 
chosen by picking one edge for each constraint in $\Gamma(S)$. 
For an edge $e\in E$ let $c(e)$ denote the constraint incident on $e$, 
and let $e_{inj}(c(e))$ denote the edge
in $E_{inj}$ that is connected to $c(e)$.

We now bound the expectation by subtracting $x(e)$ from the sum, 
if the Boolean variable 
$x(e_{inj}(c(e)))$ is $1$; this avoids counting the violation of the same 
constraint twice due to the two edges. 
We have: 
$$
\mathbf{E}_{\cal E}\left[Penalty\right] 
\leq 
\mathbf{E}_{\cal E}\left[
\sum_{e\in E_{inj}} x(e) +
\sum_{e \notin E_{inj}} 
\left(1 - x(e_{inj}(c(e))) \right) \cdot x(e)
\right].
$$
Note that it may even be the case that some edges may cause 
constraints to become "unviolated", so the actual bound may be even lower.
Expanding the above by linearity of expectation:
$$
\mathbf{E}\left[Penalty\right] 
\leq 
\sum_{e\in E_{inj}} \mathbf{E}_{\cal E}\left[ x(e) \right] +
\sum_{e \notin E_{inj}} \mathbf{E}_{\cal E}\left[x(e)\right] - 
\sum_{e \notin E_{inj}} \mathbf{E}_{\cal E}\left[x(e_{inj}(c(e))) \cdot x(e) \right]=
$$
$$
\sum_{e\in E} \mathbf{E}_{\cal E}\left[ x(e) \right]+
\sum_{e \notin E_{inj}} \mathbf{E}_{\cal E}\left[x(e_{inj}(c(e))) \cdot x(e) \right]. 
$$

We have already calculated the first term in the sum in Equation 
\ref{eq:sumexp};  
We now lower bound the correction given by the second term. 
We use the fact that  for any $e\notin E_{inj}$
$$
\mathbf{E}_{\cal E}\left[x(e_{inj}(c(e)) x(e)\right] = 
\mathbf{E}_{\cal E}[x(e_{inj}(c(e)))]
 \mathbf{E}_{\cal E}[x(e)] $$
since ${\cal E}$ is independent between different qudits. 
We can thus substitute 
Equation \ref{eq:singleexp}, and get: 

$$
\mathbf{E}_{\cal E}\left[Penalty\right] 
\leq 
p \alpha |S| D_R  - |S| D_R \eps (p \alpha)^2.
$$
where we have used the fact that $|E_{inj}|=|S|D_R\eps$. 
This is equal to 
$$
p\alpha |S| D_R ( 1 - p\alpha \eps). 
$$
Using $p = 1/(10k),\eps \geq 0.32$, 
$\alpha(d) \geq 2/3$, we get the desired bound.
\end{proof}

\subsubsection{Towards analyzing the weight: Proof of the Onion fact} 
\label{sec:onion}

\begin{proof}{\bf (Of Fact \ref{fact:succinct})}
If $\Delta|_{\Gamma(u)}=I$ then 
\begin{equation}\label{eq:i} 
wt \left({\cal E}_{\cal G}|_{\Gamma^{(k)}(u)}\right) =
wt \left({\cal E}_{\cal G}|_{\Gamma(u)}\right) =
wt \left({\cal E}|_{\Gamma(u)}\right) = i.
\end{equation}
Otherwise, $\Delta|_{\Gamma(u)}$ 
is non-identity, and so has at least one non-identity 
coordinate, and also, since $\Delta$ is non-identity,  
by the assumption on the succinctness of ${\cal G}$ we 
have $wt(\Delta) \geq k$.

Moreover, we claim that $wt \left(\Delta|_{\Gamma^{(k)}(u)}\right) \geq k$.
Otherwise, consider the following process. 
Start with the constraint $u$, and consider the qudits in $\Gamma(u)=\Gamma^{(0)}(u)$. Now add the qudits belonging to all constraints in $\Gamma^{(1)}(u)$; 
Then add the next level, and so on until we have added add qudits belonging 
to $\Gamma^{(k)}(u)$. 
By the pigeonhole principle, if $wt \left(\Delta|_{\Gamma^{(k)}(u)}\right)<k$, 
then there must exist a level $t$ s.t. $1\le t \le k$ where  
$\Delta$ has zero support on qudits added in this level. 
This is in contradiction to the fact that the distance of the code 
is at least $k$, 
since we claim that ${\tilde \Delta}=\Delta|_{\Gamma^{(t-1)}(u)}$, 
is in the centralizer 
$\mathbf{Z}({\cal G})$
but its weight is less than $k$. 
To see that ${\tilde \Delta}$ is in the centralizer, 
we observe first that $\Delta$ commutes with all elements of ${\cal G}$ 
that act only on qudits in ${\Gamma^{(t-1)}(u)}$, 
and since ${\tilde \Delta}$ agrees with $\Delta$ on
${\Gamma^{(t-1)}(u)}$, ${\tilde \Delta}$ also commutes with them. 
We also observe that ${\tilde \Delta}$ trivially commutes with all 
elements in ${\cal G}$ whose support does not intersect 
$\Gamma^{(t-1)}(u)$. Hence we only need to worry about 
those terms that act on at least one qudit in $\Gamma^{(t)}(u)-\Gamma^{(t-1)}(u)$ 
and at least one qudit in  $\Gamma^{(t-1)}(u)$.  
Let $v$ be some such term. 
Note that $v$ cannot act on any qudit outside $\Gamma^{(t)}(u)$ 
by definition (of the $\Gamma^{(i)}(u)$'s).  
We know that $\Delta$ commutes with $v$.  
But by the choice of $t$, we know that $\Delta$ is trivial on those qudits 
added at the $t$th level, and hence $\Delta$ restricted to 
$\Gamma^{(t)}(u)$ (which contains the qudits of $v$)
is the same as $\Delta$ restricted to $\Gamma^{(t-1)}(u)$. 
And so $\Delta$ restricted to $\Gamma^{(t-1)}(u)$ commutes with $v$. 

We showed that ${\tilde \Delta}$ is in $\mathbf{Z}({\cal G})$.
If it also belongs to ${\cal G}$, this contradicts succinctness 
of ${\cal G}$;  
otherwise it is in $\mathbf{Z}({\cal G})-{\cal G}$ 
implying the distance of $C$ 
is at most $k-1$, contrary to assumption.
This means that $wt \left(\Delta|_{\Gamma^{(k)}(u)}\right) \geq k$.
Therefore, we now know that 
\begin{equation}\label{eq:k-i}
wt \left({\cal E}_{\cal G}|_{{\Gamma^{(k)}(u)}}\right) \geq 
wt \left(\Delta|_{{\Gamma^{(k)}(u)}}\right) - 
wt \left({\cal E}|_{{\Gamma^{(k)}(u)}}\right)
=
\end{equation}
$$
wt \left(\Delta|_{{\Gamma^{(k)}(u)}}\right) - 
wt \left({\cal E}|_{{\Gamma^{(0)}(u)}}\right)
\geq
k-i.
$$
Taking the minimal of the bounds from Equations (\ref{eq:i}),(\ref{eq:k-i})
 completes the proof.
\end{proof}

\subsubsection{Analyzing error weight}
We note that the expected weight of ${\cal E}$ is 
$p|S|$ and since $|S|$ is linear in $n$, 
by Chernoff  
the probability that the weight of ${\cal E}$ is smaller 
than by a constant fraction than this expectation is $2^{-\Omega(n)}$.
We need to show a similar bound on the weight modulo the centralizer group; 
given that $\delta<dist(C)/2$ we only need to bound the weight modulo 
the stabilizer group.  
Let $\Delta\in A$ be some element in the stabilizer group 
and let ${\cal E}_{\cal G}=\Delta{\cal E}$. 
We now need to lower-bound $wt({\cal E}_{\cal G})$.

\begin{fact}\label{fact:weight}
For integer $k$, let $\hat{k} = \floor{k/2}+1$.
Let $y(k): [4,\infty] \mapsto \mathbf{R}$ be the function:
$$
y(k) =
\left\{
	\begin{array}{ll}
		1-2^{(-\hat{k}+1)log(k)+k-2.3\hat{k}+4.54}  & k\geq 12 \\		
   	     0.9999 & 6\leq k\leq 11 \\
		0.9992 & k=5 \\
		0.9985 & k=4 
	\end{array}
\right.
$$
We claim:
$$
Prob_{\cal E} 
\left( 
wt({\cal E}_{\cal G}) < 
|S| p y(k)  
\right)
= 
2^{-\Omega(n)}.$$
\end{fact}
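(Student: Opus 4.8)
The plan is to combine the Onion fact (Fact \ref{fact:succinct}) with a partition of the qudit set $S$ into the $|U|=\delta n$ disjoint ``islands'' $\Gamma(u)$, $u\in U$, together with a Chernoff/concentration argument over the independent random errors on each island. First I would recall that since $U$ is $k$-independent, the $k$-neighborhoods $\Gamma^{(k)}(u)$ are pairwise disjoint, so $wt({\cal E}_{\cal G}) \ge \sum_{u\in U} wt({\cal E}_{\cal G}|_{\Gamma^{(k)}(u)})$. For each island $u$, let $i_u = wt({\cal E}|_{\Gamma(u)})$ be the (random) number of errors placed on the $k$ qudits of $u$; by the Onion fact, as long as $i_u \le k/2$ (equivalently $i_u \le \hat k - 1 = \lfloor k/2\rfloor$) we get $wt({\cal E}_{\cal G}|_{\Gamma^{(k)}(u)}) \ge \min\{i_u, k-i_u\} = i_u$. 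So islands with at most $\lfloor k/2\rfloor$ errors contribute their full error weight, while ``overloaded'' islands with $i_u \ge \hat k$ errors we simply count as contributing $0$ — this is exactly why the function $y(k)$ is slightly below $1$.

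The next step is to quantify the loss from overloaded islands. Each island has $k$ qudits, each independently erroneous with probability $p = 1/(10k)$, so the number of errors on an island is $\mathrm{Binomial}(k,p)$ with mean $kp = 1/10$. The probability that a fixed island is overloaded, i.e. has at least $\hat k = \lfloor k/2\rfloor + 1$ errors, is at most $\binom{k}{\hat k} p^{\hat k} \le (k p)^{\hat k}/\hat k! \le (1/10)^{\hat k}$; a cleaner bound via $\binom{k}{\hat k}\le 2^{k}$ and $p^{\hat k} = (10k)^{-\hat k}$ gives $\Pr[\text{overloaded}] \le 2^{k} (10k)^{-\hat k} = 2^{k - \hat k \log(10k)} = 2^{-(\hat k - 1)\log k + k - 3.3\hat k}$ (using $\log 10 \approx 3.3$), which matches — up to the additive constant $4.54$ absorbing lower-order terms and the $\binom{k}{\hat k}$ refinement — the exponent appearing in the definition of $y(k)$ for $k\ge 12$; the small cases $k=4,5$ and $6\le k\le 11$ are handled by direct numerical evaluation of the binomial tail, yielding the tabulated constants. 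Write $W_{\mathrm{full}} = \sum_{u} i_u\,\mathbf{1}[i_u \le \lfloor k/2\rfloor]$; then $\mathbf{E}[W_{\mathrm{full}}] = |U|\cdot\big(kp - \mathbf{E}[i_u \mathbf{1}[i_u \ge \hat k]]\big) \ge |U| kp (1 - \Pr[\text{overloaded}]\cdot k/(kp)) = |S|p\,(1 - O(\text{tail})) \ge |S| p\, y(k)\cdot(1+o(1))$, where $|S| = k|U|$; the loss term $\mathbf{E}[i_u\mathbf{1}[i_u\ge\hat k]]$ is bounded by $k\cdot\Pr[\text{overloaded}]$, which is precisely the $(1 - y(k))$ budget.

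Finally I would upgrade this statement about the expectation to a high-probability bound. The quantity $W_{\mathrm{full}} = \sum_{u\in U} f(i_u)$ is a sum of $|U| = \delta n$ independent bounded random variables (each $f(i_u) \in [0, \lfloor k/2\rfloor]$, depending only on the errors within island $u$, and the islands are disjoint), so by Hoeffding's inequality $\Pr[W_{\mathrm{full}} < \mathbf{E}[W_{\mathrm{full}}] - s] \le \exp(-2s^2/(|U| (k/2)^2))$; choosing $s$ to be a small constant fraction of $\mathbf{E}[W_{\mathrm{full}}] = \Theta(|S| p) = \Theta(\delta n)$ makes the failure probability $2^{-\Omega(n)}$, and absorbing this fraction into $y(k)$ (by choosing the tabulated constants with a hair of slack, as the paper does) gives $\Pr_{\cal E}(wt({\cal E}_{\cal G}) < |S| p\, y(k)) \le \Pr_{\cal E}(W_{\mathrm{full}} < |S| p\, y(k)) = 2^{-\Omega(n)}$, since $wt({\cal E}_{\cal G}) \ge W_{\mathrm{full}}$ by the island decomposition and the Onion fact. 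The main obstacle I anticipate is pinning down the exponent in $y(k)$ for $k \ge 12$ precisely enough to match the stated closed form — this requires carefully tracking the binomial coefficient estimate $\binom{k}{\hat k}$ (Stirling gives the $(-\hat k+1)\log k$ term plus the additive constant), the $p^{\hat k}$ factor, and verifying the crossover against the hand-computed small-$k$ values; the probabilistic skeleton itself (disjointness, Onion fact, Hoeffding) is routine.
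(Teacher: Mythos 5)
Your proposal follows essentially the same route as the paper: partition the error into the disjoint islands $\Gamma^{(k)}(u)$ using $k$-independence, lower-bound each island's contribution modulo the group via the Onion fact, bound the binomial tail $\Pr[i_u\ge\hat k]$ with $\binom{k}{\hat k}\le 2^k$ and $p^{\hat k}=(10k)^{-\hat k}$, and concentrate with Hoeffding. Two small differences are worth noting. First, you apply Hoeffding directly to the sum $W_{\mathrm{full}}=\sum_u f(i_u)$ of independent bounded island contributions, whereas the paper instead concentrates the empirical histogram $|U_i|/|U|$ around $B(i)$ for each $i\in[k]$ (Hoeffding plus a union bound over $i$) and then works with the deterministic quantity $\sum_i B(i)\min\{i,k-i\}$; your version is the more direct and arguably cleaner application, and the two are interchangeable. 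Second, you credit overloaded islands ($i_u\ge\hat k$) with $0$ rather than the $\min\{i,k-i\}=k-i$ that the Onion fact actually grants, so your loss per overloaded island is $i\approx k/2$ instead of the paper's $2i-k$ (which is $1$ or $2$ at the dominant term $i=\hat k$); this costs you a multiplicative factor of order $k$ in the tail term, i.e.\ roughly an extra $+\log k$ in the exponent, so as written your bound yields a slightly smaller $y(k)$ than the one tabulated in the statement. This is easily repaired by keeping the full $\min\{i,k-i\}$ credit (defining the loss as $A_{loss}=\sum_{i\ge\hat k}B(i)(2i-k)$ as the paper does), after which your binomial estimates reproduce the stated exponent; with that correction the proof is complete.
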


\begin{proof}
Let $x \sim B(k,p=1/(10k))$ denote a random variable which is the sum of $k$ 
i.i.d Boolean variables, each equal to $1$ with probability $p$; 
in other words, $x$ is a 
binomial process; $B(i) = Prob(x=i)$.
Let $U_i = \left\{u\in U | wt({\cal E}|_{\Gamma(u)}) = i \right\}$ be the 
set of constraints in which exactly $i$ errors occurred. 
Using the Hoeffding bound, for a given $i\in [k]$ and a given constant 
$\chi>0$, we have 
\begin{equation}
Prob_{\cal E} \left( \left|\frac{|U_i|}{|U|}-B(i)\right|\geq \chi\right) = 2^{-\Omega(n)}
\end{equation}
By the union bound, we have that for any constant $\chi>0$:
\begin{equation}\label{eq:qecc3}
Prob_{\cal E}\left(\exists i, s.t. \left|\frac{|U_i|}{|U|}-B(i)\right|\geq \chi\right) = 2^{-\Omega(n)}.
\end{equation}

Since the set $U$ is a $k$-independent set, then the sets $\left\{\Gamma^{(k)}(u)\right\}_{u\in U}$ are non-intersecting so
\begin{equation}\label{eq:qecc1}
wt({\cal E}_{\cal G}) \ge \sum_{u\in U} wt \left({\cal E}_{\cal G}|_{\Gamma^{(k)}(u)}\right),
\end{equation}
By the onion fact (\ref{fact:succinct}),
for each $u\in U_i$ we have
$wt \left({\cal E}_{\cal G}|_{\Gamma^{(k)}(u)}\right)\geq min\left\{i,k-i\right\}$, 
hence
$$
wt({\cal E}_{\cal G})
\ge
\sum_{i\in [k]}
|U_i| min\left\{i,k-i\right\}
=
\frac{|S|}{k}
\sum_{i\in [k]}
 \frac{|U_i|}{|U|}min\left\{i,k-i\right\}
$$
using $k|U|=|S|$. 

Therefore, using equation (\ref{eq:qecc3}) w.p. close to $1$ we have 

\begin{equation}\label{eq:qecc4}
wt({\cal E}_{\cal G})\ge \frac{|S|}{k} 
\sum_{i\in [k]} (B(i)- \chi) min \left\{i,k-i\right\} 
\geq 
\frac{|S|}{k}
\left(
\sum_{i\in [k]} B(i) min\left\{i,k-i\right\} - 2^{-k^2}
\right),
\end{equation}
for $\chi = 2^{-k^2}/k^2$.

We separate the rest of the proof to two cases: 
$k\ge 12$ and $4\le k< 12$.  
We start with the case $k\ge 12$. 
Recall $\hat{k} = \floor{k/2}+1$.  
Let
$$ A_{loss} = \sum_{i \geq \hat{k}} B(i) (2i - k).$$
Then by equation (\ref{eq:qecc4}) we have that with probability 
exponentially close to $1$ 
\begin{equation}\label{eq:qecc6}
wt({\cal E}_{\cal G}) 
\geq
\frac{|S|}{k}
\left( 
\sum_{i\in [k]} B(i) i
- 
A_{loss} - 
2^{-k^2} 
\right)
= 
\frac{|S|}{k} 
\left(
 p k - 
2^{-k^2} - 
A_{loss} 
\right)
\end{equation}

In the rest of the proof for $k\ge 12$ 
we upper-bound $A_{loss}$ and substitute 
in the above equation to derive the desired result.
Using an upper-bound of the binomial, we have: 
\begin{equation}\label{eq:binom}
B(\hat{k}) = 	{k \choose \hat{k}} p^{\hat{k}}(1-p)^{\hat{k}}
\leq 
2^{k} \cdot (10k)^{-\hat{k}} (1-p)^{\hat{k}}
\leq
k^{-\hat{k}} 10^{-\hat{k}}  2^k
\leq
2^{-\hat{k}log(k)+k-3.3\hat{k}},
\end{equation}
For any $i\geq \hat{k}$ and $p<1/2$ we have
\begin{equation}\label{eq:binbound}
B(i+1) = B(i) \left(\frac{k-i}{i+1}\right) \left(\frac{p}{1-p}\right) < B(i) \frac{p}{1-p} < 2 p B(i)
\end{equation}
Substituting equations (\ref{eq:binbound}) and (\ref{eq:binom}) in the expression for $A_{loss}$ we have:
\begin{equation}\label{eq:qecc5}
A_{loss} =
\sum_{i \geq \hat{k}}^k B(i) (2i-k)
\leq
2^{-\hat{k}log(k)+k-3.3\hat{k}} 
\sum_{i \geq {\hat{k}}}^k 
(2p)^{(i-\hat{k})} (2i-k)
\end{equation}
\begin{equation}
\leq
2^{-\hat{k}log(k)+k-3.3\hat{k}+1+\hat{k}} 
\sum_{i \geq {\hat{k}}}^k 
(p)^{(i-\hat{k})} (i-\floor{k/2})
\end{equation}
Changing summation $i - \floor{k/2} \mapsto j$ we have the above is at most:
\begin{equation}
2^{-\hat{k}log(k)+k-2.3\hat{k}+1} 
\sum_{j \geq 1}^{{\ceil{k/2}}} p^{-j+1} j
\leq
2^{-\hat{k}log(k)+k-2.3\hat{k}+1} 
\sum_{j \geq 1}^{{\ceil{k/2}}} p^{-j+1} k
\end{equation}
\begin{equation}
\leq
2^{-\hat{k}log(k)+k-2.3\hat{k}+1} 
 k 
\sum_{j \geq 1}^{{\ceil{k/2}}} p^{-j+1}
\leq
2^{-\hat{k}log(k)+k-2.3\hat{k}+1}
 k \cdot 1.1
\leq 
2^{(-\hat{k}+1)log(k)+k-2.3\hat{k}+1.2},
\end{equation}
where in the last inequality we bound the sum by $\sum_{i\geq 0} 1/p^i$,
and set $p = 1/(10k) \le 1/100$, using $k\ge 12$. 
Substituting this value in (\ref{eq:qecc6}) we have that with probability 
$2^{-\Omega(n)}$ close to $1$, 
$$
wt({\cal E}_{\cal G}) \geq 
\frac{|S|}{k}
\left( 
pk
- 
2^{-k^2} 
- 
2^{(-\hat{k}+1)log(k)+k-2.3\hat{k}+1.2} 
\right)
=
$$
$$
\geq
\frac{|S|}{k}
\left( 
pk
- 
2^{(-\hat{k}+1)log(k)+k-2.3\hat{k}+1.21} 
\right)
$$
where in the last inequality we used again $k\ge 12$. 
Continuing, using $p=\frac{1}{10k}$ the above bound is equal to 
$$
=|S| p \left(1 - 2^{(-\hat{k}+1)log(k)+k-2.3\hat{k}+1.21+log_2(10)}\right)
\geq 
|S| p y(k),
$$
for all $k\geq 12$.
For values of $4\leq k < 12$ we substitute directly $k$ in equation (\ref{eq:qecc4}), evaluate, and show it is at least $|S| p y(k)$.

\end{proof}

\subsubsection{Concluding the proof of lemma (\ref{lem:indexp})}
\begin{proof}
By fact (\ref{fact:penalty}) the average penalty of ${\cal E}$ is small on average, i.e. 
$$\mathbf{E}\left[Penalty({\cal E})\right] \leq |S| D_R p \alpha (1-0.02/k) 
\triangleq P.$$
Yet, by fact (\ref{fact:weight}) w.p. exponentially close to $1$, we have 
$$
wt({\cal E}_{\cal G}) 
\geq 
|S| p y(k) \triangleq W_{low}\geq |S| p \cdot 0.99.
$$
Similarly, by the Hoeffding bound w.p. exponentially close to $1$, we have
$$
wt({\cal E}_{\cal G}) < |S| p (1+0.01) \triangleq W_{high}.
$$

Since all penalties are non-negative, we conclude that {\it conditioned} on
$\left|wt({\cal E}_{\cal G})/ (|S|p) -1\right| < 0.01$, we have
$\mathbf{E}\left[Penalty({\cal E})\right] \leq P+2^{-\Omega(n)}$.
Therefore, there must exist an error ${\cal E}$, whose weight modulo
 ${\cal G}$ deviates
by a fraction at most $0.01$ from $|S| p$, and whose penalty is at most $P+2^{-\Omega(n)}$.

We would like to bound the robustness of this error, which is the ratio 
of the penalty to its relative weight times $D_R$. 
We get that its robustness is at most 
\begin{equation}\label{eq:robust}
r = \frac{P+2^{-\Omega(n)}}{D_R W_{low}}\leq 
\frac{1}{D_R}
\cdot
\frac
{|S| D_R p \alpha (1-0.019/k)}
{|S| p y(k)}
=
\alpha \left(\frac{1-0.019/k}{y(k)}\right).
\end{equation}

We now note that in the last expression,  for all $k\geq 12$, 
the ratio $\frac{1-0.019/k}{y(k)}$ 
is at most $1-0.01/k$. 
For all values of $4 \leq k < 12$ we substitute the appropriate 
value of $y(k)$ and get similarly that 
the ratio $\frac{1-0.019/k}{y(k)}$ is at most $1- 10^{-3}$. 
Hence, the robustness of the error, $r$ is at most $\alpha(d)(1-\gamma_{gap})$ 
where $\gamma_{gap}$ is as defined in the statement of Theorem 
\ref{thm:robust}. 

\end{proof}

\section{Acknowledgements}
The authors would like to thank Gil Kalai for useful discussions regarding 
expander graphs; and Eli Ben-Sasson and Irit Dinur for discussions about 
PCPPs and LTCs.

\section{Appendix} 
For $S\subseteq R$ let $\Gamma_1(S)\subseteq \Gamma(S)$ denote the subset of the neighbors of $S$ with exactly one neighbor in $S$.
Similarly, let $\Gamma_{\geq 2}(S)$ denote
the subset of neighbors with at least two neighbors in $S$.

\paragraph{Proof of fact(\ref{fact:essence}):}
\begin{proof}
The average degree of a vertex in $\Gamma(S)$ w.r.t. $|S|$ is at most $\frac{|S|D_R}  { |S| D_R (1-\eps)} = \frac{1}{1-\eps}$.
Let $\alpha_1$ denote the fraction $|\Gamma_1(S)| / |\Gamma(S)|$.
Then
$$ \frac{1}{1-\eps} = \alpha_1 1 + (1-\alpha_1) m,$$
where $m$ is the average degree of a vertex with at least two neighbors in $S$.
Then 
$$ \alpha_1 = 1-\frac{\eps m}{m-1}. $$
Since $m\geq 2$, then $\alpha_1$ is minimized for $m=2$, and therefore
$$ \alpha_1 \geq 1-2\eps.$$
\end{proof}

\paragraph{Proof of fact (\ref{fact:deg}):}
\begin{proof}
By definition, we have $|\Gamma(S)| \geq |S| D_R (1-\eps)$.
Let $E_{inj}\subseteq E(S)$ be a subset of the edges incident on $S$ such that each $u\in \Gamma(S)$ has 
a single neighbor in $S$ connected by an edge of $E_{inj}$.
Then $E_{inj}$ is of size $\Gamma(S)$ which is at least $|S| D_R (1-\eps)$.
Also $|E(S)| = |S| D_R$, thus $|E(S)-E_{inj}| \leq |S| D_R \eps$.
Therefore $\left| \Gamma_{\geq 2} (S) \right| \leq |S| D_R \eps$.
Hence, $\Gamma_1(S) = \Gamma(S)-\Gamma_{\geq 2}(S)$ is of size at least $|S| D_R (1-\eps) - |S| D_R \eps = |S| D_R (1-2\eps)$.
Therefore, there exists a vertex $v\in S$ with at least $D_R (1-2\eps)$ neighbors in $\Gamma_1(S)$.
Since $v$ has $D_R$ neighbors in $\Gamma(S)$, then the fraction of neighbors of $v$ with at least two neighbors in $S$ is at most $2\eps$.
\end{proof}

\paragraph{Proof of Claim (\ref{cl:classical}):} 
\begin{proof}
The construction of $\cite{CRVW}$, generates explicitly
for any $\eps,r$ a right-regular bi-partite 
graph $G=(L,R;E)$ whose right degree is $D_R$ such that $|L|/|R|=1-r$,  
and for any subset $S\subseteq R$, $|S|\leq |R| \delta$ the neighbor set of $S$ is of size at least $|S| D_R (1-\eps)$, where $D_R$ is the right degree of $G$.
Note that since the right degree is $D_R$, the average left degree is 
$D_R|R|/|L|=D_R\frac{1}{1-r}$, which is a constant given that $D_R$ is 
a constant. 

The code is defined by assigning to each left node a 
parity check over its incident vertices.  
Let us lower bound the rate of this code: it is at least 
$r=(|R|-|L|)/|R|$, since each constraint in $L$ removes one dimension 
of the space. The minimal distance of the code is at least $\delta$, 
since any non-zero word of weight at most $\delta$ is rejected, 
since there exists at least one check term
that "sees" just a single bit at state $1$, by Fact (\ref{fact:essence}). 

Hence, these are so-called "good" codes. 
Furthermore, their robustness is at least $1-3\eps$ since 
an error on a set of bits $S$ of size $|S|<\delta n$, 
is examined by at least $|S|D_R(1-\epsilon)$ constraints. 
By Fact (\ref{fact:essence}) at least $1-2\epsilon$ of those constraints, examine $S$ 
in exactly one location; all 
constraints that touch a given error set $S$ in exactly one location 
will be violated; hence the total number of constraints that will be 
violated is at least $|S| D_R (1-\eps) (1-2\eps)\geq |S| D_R (1-3\eps)$. 
\end{proof}


\begin{thebibliography}{1}

\bibitem{Aar}
S. Aaronson. 
\newblock {\em The quantum PCP manifesto.}
\newblock {\em http://www.scottaaronson.com/blog/?p=139 (2006).}

\bibitem{Vidal}
M. Aguado, G. Vidal 
\newblock {\em Entanglement renormalization and topological order.}
\newblock {\em Physical Review Letters. 03/2008; 100(7):070404. }


\bibitem{Aha}
D. Aharonov, L. Eldar
\newblock On the complexity of Commuting Local Hamiltonians, and tight conditions for Topological Order in such systems
\newblock {\em arXiv:1102.0770v2 }

\bibitem{Aha2}
D. Aharonov, I. Arad, Z. Landau, and U. Vazirani. 
\newblock {\em The Detectability Lemma and Quantum Gap Amplification.}
\newblock {\em Proceedings of the 41st annual ACM symposium on Theory of computing,(New York, NY, USA), pp. 417426, ACM, 2009.}

\bibitem{Alon}
N. Alon, J. Bruck, J. Naor, M. Naor, and R. Roth. 
\newblock {\em Construction of asymptotically good low-rate error-correcting codes through pseudo-random graphs. }
\newblock {\em IEEE Transactions on Information Theory, 38:509-516, 1992.}

\bibitem{Arad}
I. Arad. 
\newblock {\em A note about a partial no-go theorem for quantum PCP. }
\newblock {\em Quantum Information and Computation 11, 1019 (2011).}

\bibitem{ALMSS}
S. Arora, C. Lund, R. Motwani, M. Sudan, and M. Szegedy. 
\newblock {\em Proof verification and intractability of}
\newblock {\em Approximation problems. J. ACM, 45(3):501-555, 1998.}


\bibitem{Arora}
S. Arora, S. Safra
\newblock {\em Probabilistic checking of proofs: A new characterization of NP}
\newblock{\em Journal of the ACM, 45 (1): 70–122, 1998}

\bibitem{AB}
S. Arora and B. Barak. 
\newblock {\em Computational Complexity: A Modern Approach.}
\newblock {\em Cambridge University Press, 2009}

\bibitem{ABS}
S. Arora, B. Barak, D. Steurer
\newblock {\em Subexponential Algorithms for Unique Games and Related Problems.}
\newblock {\em FOCS 2010: 563-572}

\bibitem{BansaiTerhal}
N. Bansal, S. Bravyi, and B.M. Terhal. 
\newblock {\em Classical approximation schemes for the ground-state energy of quantum and classical Ising spin Hamiltonians on planar graphs. }
\newblock {\em Quant. Inf. Comp. Vol. 9, 0701 (2009).}

\bibitem{RHK}
E. Ben-Sasson,   P. Harsha, and S. Raskhodnikova
\newblock {\em Some 3CNF properties are hard to test}
\newblock {\em  STOC, 2003, pp. 345-354}

\bibitem{BGHSV} 
E. Ben-Sasson, O. Goldreich, P. Harsha, M. Sudan, S. P. Vadhan: 
\newblock {\em Robust PCPs of Proximity, Shorter PCPs, and Applications to Coding. }
\newblock {\em SIAM J. Comput. 36(4): 889-974 (2006)}

\bibitem{BSR}
E. Ben-Sasson, O. Goldreich, P. Harsha, M. Sudan, S. Vadhan
\newblock {\em Robust PCPs of proximity, shorter PCPs, and applications to coding}
\newblock {\em SIAM J. COMPUT.  2006, Society for Industrial and Applied Mathematics,Vol. 36, No. 4, pp. 889-974}


\bibitem{BSS}
E. Ben-Sasson, M. Sudan
\newblock {\em Robust Locally Testable Codes and Products of Codes}
\newblock {\em Approximation, Randomization, and Combinatorial Optimization. Algorithms and Techniques,pp. 286-297, 2004}


\bibitem{Rubinfeld}
Blum, Luby, Rubinfeld
\newblock {\em Linearity Testing / Testing Hadamard codes}
\newblock {\em Encyclopedia of algorithms, Springer, (2008).}


\bibitem{HB}
F. Brand{\~a}o, A. Harrow
\newblock {\em Approximation Guarantees for the Quantum Local Hamiltonian Problem and Limitations for Quantum PCPs}
\newblock {Pre-Print} 

\bibitem{Bra}
S. Bravyi, M. Vyalyi
\newblock {\em Commutative version of the k-local Hamiltonian problem and common eigenspace problem}
\newblock {\em Quantum Inf. and Comp., Vol. 5, No. 3 (2005) pp.187-215}


\bibitem{Bra2}
S. Bravyi, M. B. Hastings, S. Michalakis 
\newblock {\em Topological quantum order: Stability under local perturbations}
\newblock {\em J. Math. Phys. 51, 093512 (2010)}

\bibitem{Bra3}
S. Bravyi,  M. B. Hastings 
\newblock {\em A Short Proof of Stability of Topological Order under Local Perturbations.}
\newblock {\em Communications in Mathematical Physics. November 2011, Volume 307, Issue 3, pp 609-627.}

\bibitem{TerhalTradeoffs}
S. Bravyi, D. Poulin and B.M. Terhal, 
\newblock {\em Tradeoffs for reliable quantum information storage in 2D systems}
\newblock {\em arxiv.org: 0909.5200 (2009), Phys. Rev. Lett. 104, 050503 (2010)}


\bibitem{Terhalk}
S. Bravyi, D. P. DiVincenzo, D. Loss, and B. M. Terhal.
\newblock {\em Simulation of Many-Body Hamiltonians using Perturbation Theory with Bounded-Strength Interactions}
\newblock {\em Phys. Rev. Lett. 101, 070503 (2008).}


\bibitem{Calderbank}
A. R. Calderbank, P. W. Shor
\newblock {\em Good quantum error-correcting codes exist}
\newblock {\em Phys. Rev. A, vol. 54, pp. 1098-1105, Aug. 1996.}

\bibitem{CRVW}
M. R. Capalbo, O. Reingold, S. Vadhan, and A. Wigderson. 
\newblock {\em Randomness conductors and constant-degree lossless expanders.}
\newblock {\em In Proceedings of the ACM Symposium on Theory of Computing (STOC), pages 659-668, 2002.}

\bibitem{Smolin}
A. Cross, G. Smith, J.A. Smolin, B. Zeng
\newblock {\em Codeword Stabilized Quantum Codes.}
\newblock {\em IEEE Transactions on Information Theory, Vol. 55, No.1, pp. 433-438, (2009)}


\bibitem{Preskill}
E. Dennis, A. Kitaev, A. Landahl, J. Preskill
\newblock {\em Topological quantum memory.}
\newblock {\em J. Math. Phys. 43, 4452 (2002)}

\bibitem{Din}
I. Dinur
\newblock {\em The PCP theorem by gap amplification. }
\newblock {\em STOC 2006: 241-250}


\bibitem{Din2}
I. Dinur, T. Kaufmann
\newblock {\em On the structure of NP-hard 3-SAT instances , and a similar question for LTCs}
\newblock {\em Talk at the The Fourth Israel CS Theory Day Thursday, March 24 th, 2011}

\bibitem{Din3}
I. Dinur, T. Kaufman
\newblock {\em Locally Testable Codes and Expanders}
\newblock {\em Pre-Print.}

\bibitem{Erd}
P. Erd\H{o}s, R. Rado.
\newblock {\em Intersection theorems for systems of sets} 
\newblock {\em Journal of the London Mathematical Society, 2nd. series 35 (1), (1960), 85-90.}

\bibitem{Fetaya}
E. Fetaya
\newblock {\em Bounding the distance of quantum surface codes.}
\newblock {\em J. Math. Phys. 53, 062202 (2012).}

\bibitem{WF}
J. Friedman, A. Wigderson
\newblock {\em On the second eigenvalue of hypergraphs}
\newblock {\em Combinatorica,  Volume 15, Issue 1 , pp. 43-65 ,1995}



\bibitem{GK}
S. Gharibian, J. Kempe
\newblock {\em Hardness of Approximation for Quantum Problems}
\newblock {\em ICALP (1) 2012: 387-398}

\bibitem{Got}
D. Gottesman
\newblock {\em Stabilizer Codes and Quantum Error Correction}
\newblock {\em Caltech Ph.D. Thesis}

\bibitem{Ha}
J. Haah
\newblock {\em Local stabilizer codes in three dimensions without string logical operators}
\newblock {\em Phys. Rev. A 83, 042330 (2011)}


\bibitem{Hastad}
J. H\aa stad
\newblock {\em Some optimal inapproximability results.}
\newblock {\em Journal of ACM, 48:798–859, 2001.}

\bibitem{Has}
M. B. Hastings
\newblock Trivial Low Energy States for Commuting Hamiltonians, and the Quantum PCP Conjecture
\newblock {\em arXiv:1201.3387v1}

\bibitem{Has2}
M. B. Hastings, Freedman.
\newblock {\em Pre-print, 2013.}

\bibitem{Hoo}
S. Hoory, N. Linial, A. Wigderson
\newblock Expander Graphs and Their Applications
\newblock {\em Bulletin (New Series) Of The American Mathematical Society Volume 43, Number 4, October 2006.}


\bibitem{Kalai}
G. Kalai, A. Lubotzky
Personal Comm.

\bibitem{Khot}
S. Khot (2002)
\newblock {\em On the power of unique 2-prover 1-round games}
\newblock {\em Proceedings of the 17th Annual IEEE Conference on Computational Complexity, pp. 25}


\bibitem{Kit}
A. Yu. Kitaev, A. H. Shen, and M. N. Vyalyi. 
\newblock {\em Classical and quantum computation, volume 47 of Graduate Studies in Mathematics. AMS, 2002.}

\bibitem{Kit2}
A. Yu. Kitaev
\newblock {\em Fault-tolerant quantum computation by anyons.}
\newblock {\em Ann. Physics,303(1), 2003.}


\bibitem{Kit3}
A. Yu. Kitaev 
\newblock {\em Ann. Phys. 321 2 (2006).}

\bibitem{Kit4}
A. Yu. Kitaev
\newblock {\em Annals Phys. 303, 2 (2003), arXiv:quant-ph/9707021.}

\bibitem{Kovalev1}
A.A. Kovalev and L.P. Pryadko, 
Fault tolerance of ``bad'' quantum low density parity check codes, 
arXiv:1208.2317

\bibitem{Kovalev2}
A.A. Kovalev and L.P. Pryadko, 
Quantum ``hyperbicycle'' low density parity check codes with finite rate; 
arXiv:1212.6703


\bibitem{Parzan}
O. Parzanchevski, R. Rosenthal, and R. J. Tessler
\newblock {\em Isoperimetric Inequalities in Simplicial Complexes}
\newblock {\em arXiv:1207.0638}

\bibitem{Ragh}
P. Raghavendra, D. Steurer
\newblock {\em Graph expansion and the unique games conjecture}
\newblock {\em Proceedings of the 42nd ACM symposium on Theory of computing, Pages 755-764 , STOC, 2010.}

\bibitem{Schlingemann}
D. Schlingemann, R.F. Werner
\newblock {\em Quantum error-correcting codes associated with graphs.}
\newblock {\em Phys. Rev. A 65, 012308 (2001)}

\bibitem{Sch}
N. Schuch
\newblock Complexity of commuting Hamiltonians on a square lattice of qubits
\newblock {\em Quantum Inf. Comput. 11, 901 (2011)}


\bibitem{Spi}
M. Sipser, D. Spielman. 
\newblock {\em Expander codes.}
\newblock {\em IEEE Transactions on Information Theory 42(6): 1710-1722 (1996)}


\bibitem{Steane1}
A. M. Steane
\newblock {\em Error correcting codes in quantum theory}
\newblock {\em Phys. Rev. Lett., vol. 77, pp. 793-767, July 1996.}

\bibitem{Steane2}
A. M. Steane
\newblock {\em Multiple particle interference and quantum error correction}
\newblock {\em Proc. Roy. Soc. Lond. A, vol. 452, pp. 2551-2577, Nov. 1996.}



\bibitem{Tre}
L. Trevisan
\newblock {\em Approximation algorithms for unique games}
\newblock {\em In Proceedings of the 46th IEEE Symposium on Foundations of Computer Science.}


\bibitem{Yoshida}
B. Yoshida
\newblock {\em Feasibility of self-correcting quantum memory and thermal stability of topological order.}
\newblock {\em Annals of Physics 326, 2566-2633 (2011)}

\bibitem{Zemor}
G. Z\'{e}mor. 
\newblock {\em On expander codes.}
\newblock {\em IEEE Transactions on Information Theory, 47(2):835-837,2001.}

\bibitem{Zemor2}
G. Z\'{e}mor
\newblock {\em A construction of quantum LDPC codes from Cayley graphs.}
\newblock {\em IEEE International Symposium on Information Theory Proceedings (ISIT), 2011, pp. 643-647}

\bibitem{Zemor3}
G. Z\'{e}mor
\newblock{\em Quantum LDPC codes with positive rate and minimum distance proportional to $n^{1/2}$.}
\newblock{\em IEEE International Symposium on Information Theory (ISIT) 2009, pp. 799-803.}
\end{thebibliography}
\end{document}